\theoremstyle{theorem}
\newtheorem{theorem}{Theorem}[section]
\newtheorem{lemma}[theorem]{Lemma}
\newtheorem{corollary}[theorem]{Corollary}
\newtheorem{definition}[theorem]{Definition}
\newtheorem{proposition}[theorem]{Proposition}
\theoremstyle{definition}
\newtheorem{question}[theorem]{Question}
\newtheorem{remark}{Remark}
\newcommand{\zz}{\mathbb{Z}}
\newcommand{\ii}{\mathcal{I}}
\newcommand{\is}{i}
\newcommand{\pow}{\mathcal{P}}
\newcommand{\grammar}{G}
\newcommand{\ff}{\mathbb{F}}
\newcommand{\nn}{\mathbb{N}}
\newcommand{\pp}{\mathbb{P}} 
\newcommand{\ip}[1]{\left\lfloor #1 \right\rfloor} 
\newcommand{\deriv}{\Longrightarrow^*} 
\newcommand{\cn}{\sqrt[c]{n}} 
\newcommand{\x}[1]{x_{[#1]}} 
\newcommand{\set}[2]{\{ \, #1 \colon #2 \,\}} 
\DeclareMathOperator{\supp}{weight} 
\DeclareMathOperator{\ran}{ran} 
\newcommand{\define}[1]{{\normalfont{\textbf{#1}}}}
\newcommand{\sstar}{\Sigma^*}
\newcommand{\splus}{\Sigma^+}
\newcommand{\xor}{\oplus}
\newcommand{\cD}{A_D}
\newcommand{\cN}{A_{Ne}}
\newcommand{\cNu}{A_{N}}
\newcommand{\mN}{M_{Ne}}
\newcommand{\mNu}{M_{N}}
\newcommand{\sac}{\mathbf{SAC}^0}
\newcommand{\psac}{\mathord{\bigoplus}\mathbf{SAC}^0}
\newcommand{\cosac}{\mathbf{coSAC}^0}
\newcommand{\copsac}{\mathbf{co}\mathord{\bigoplus}\mathbf{SAC}^0}
\newcommand{\prsac}{\mathord{\bigoplus}\mathbf{SAC}^0_r}
\title[Languages of Low-Complexity Words Are Hard to Compute]{Languages of Words of Low Automatic Complexity\newline Are Hard to Compute}
\author[J.~Chen]{Joey Chen}
\address{Department of Mathematics, National University of Singapore, Singapore}
\email{e0389025@u.nus.edu}{}{}
\author[B.~Kjos-Hanssen]{Bj\o{}rn Kjos-Hanssen \orcidlink{0000-0002-6199-1755}}
\address{Department of Mathematics, University of Hawai‘i at M\=anoa, United States of America}
\urladdr{https://math.hawaii.edu/wordpress/bjoern/}
\email{bjoernkh@hawaii.edu}
\author[I.~Koswara]{Ivan Koswara \orcidlink{0000-0002-9311-6840}}
\address{School of Computing, National University of Singapore, Singapore}
\email{ivanak@comp.nus.edu.sg}
\author[L.~Richter]{Linus Richter \orcidlink{0000-0003-0267-1839}}
\address{Department of Mathematics, National University of Singapore, Singapore}
\urladdr{https://linus-richter.github.io}
\email{richter@nus.edu.sg.org}
\author[F.~Stephan]{Frank Stephan \orcidlink{0000-0001-9152-1706}}
\address{Department of Mathematics, National University of Singapore, Singapore and School of Computing, National University of Singapore, Singapore}
\urladdr{https://www.comp.nus.edu.sg/~fstephan/}
\email{fstephan@comp.nus.edu.sg}
\keywords{Automatic complexity, automata theory, formal languages, Boolean circuits, Shannon effect}
\subjclass{68Q45 (Primary), 03D05, 68Q30, 68Q06 (Secondary) \\ \indent 2012 {\it ACM Subject Classification. Theory of computation \textrightarrow{} Grammars and context-free languages}}
\begin{document}

\maketitle

\begin{abstract}
The automatic complexity of a finite word (string) is an analogue for finite automata of Sipser's distinguishing complexity (1983) and was introduced by Shallit and Wang (2001). For a finite alphabet~$\Sigma$ of at least two elements, we consider the non-deterministic automatic complexity given by \emph{exactly}---yet not necessarily \emph{uniquely}---accepting automata: a word $x \in \sstar$ has exact non-deterministic automatic complexity $k \in \nn$ if there exists a non-deterministic automaton of $k$ states which accepts $x$ while rejecting every other word of the same length as $x$, and no automaton of fewer states has this property. Importantly, and in contrast to the classical notion, the witnessing automaton may have multiple paths of computation accepting~$x$. We denote this measure of complexity by $\cN$, and study a class of languages of low~$\cN$-complexity defined as~$L_q = \set{x \in \sstar}{\cN(x) < q|x|}$, which is parameterised by rationals $q \in (0,1/2)$ (generalising a class of sets first studied by Kjos-Hanssen). We show that for every $q \in (0,1/2)$, this class is neither context-free nor recognisable by certain Boolean circuits. In the process, we answer an open question of Kjos-Hanssen quantifying the complexity of~$L_{1/3}$ in terms of Boolean circuits, and also prove the Shannon effect for~$\cN$.
\end{abstract}

\section{Introduction}\label{sec:intro}

Automatic complexity is a notion of complexity of finite words (strings) determined by witnessing automata, first introduced by Shallit and Wang in~\cite{shallitWang} as a Turing computable alternative to Kolmogorov complexity. It is an analogue for finite automata of Sipser's distinguishing complexity \cite{sipser}. Classically, the automatic complexity of a word~$x$ over a finite alphabet $\Sigma$ refers to the cardinality---counted in number of states\footnote{A version of automatic complexity counting the number of \emph{transitions} has been studied by Serna~\cite{serna}; see also Shallit and Breitbart \cite{breitbart}.}---of the smallest deterministic finite automaton which accepts $x$ and rejects every other word of the same length as~$x$~\cite{shallitWang}. The notion as well as  variations of it have proven interesting for multiple reasons. For instance, since automatic complexity is Turing computable, it can be used in the study of computational complexity: the computational complexity of sets of binary words of low automatic complexity has helped prove missing relationships in the Complexity Zoo~\cite{zoo} (see~\cite[Theorem 39]{bjornMax} for an example). Further, the detailed investigation of words in terms of their automatic complexity~\cite{kjosfibo,bjornshift} has shed light on \emph{computable} notions of randomness, which are unavailable from the viewpoint of Kolmogorov complexity~\cite{kolmogorov:three:1965,solomonoff:prelim:1960,martinloef:def:1966}.

In this paper, we study a weakening of a variation of automatic complexity due to Hyde~\cite{hydeMA}, and show that it generates classes of words too complicated to be captured by pushdown automata, nor by certain classes of constant-depth Boolean circuits---both of which are notably computationally more powerful than finite automata. This provides further evidence towards the conjecture that automatic complexity is hard to compute (see e.g. \cite{bjornequiv}).

\subsection{Technical Background}

Fix a finite alphabet $\Sigma$ of at least two elements. In usual Kleene notation, we denote by~$\sstar$ the set of all finite words of elements from $\Sigma$. We denote the empty string by $\varepsilon$, and the set of non-empty words by~$\splus = \sstar \setminus \{ \varepsilon \}$. By an \define{automaton} we always mean a non-deterministic finite automaton, unless otherwise stated. We do not allow $\varepsilon$-transitions.

\begin{definition}
	Let $x \in \sstar$. An automaton $M$ \define{exactly accepts} $x$ if $M$ accepts~$x$, and whenever both~$y \neq x$ and~$|y| = |x|$ then $M$ rejects $y$.
\end{definition}
The pumping lemma shows that this definition is maximally restrictive on the number of words accepted by the witnessing automaton; trying to strengthen the definition by asking for outright uniqueness of the accepted word only leads to trivialities.

\begin{definition}
	The \define{automatic complexity} of $x \in \sstar$ is given by
\[
	\cD(x) = \min \set{k \in \nn}{\text{\normalfont{there exists a DFA of $k$ states which exactly accepts $x$}}}.
\]
\end{definition}
For a reference on contemporary automatic complexity, see e.g.\ the recent \cite{bjornbook}. The subscript~$D$ stands for ``deterministic'', indicating that $\cD(x)$ is determined by the smallest~\textit{D}FA. By definition, it is clear that $\cD{}$ is well-defined, and even computable (for every~$n \in \nn$, there are only finitely many DFAs, and each can be simulated in finite time). However---similar to the unnatural properties of \emph{plain} compared to \emph{prefix-free} Kolmogorov complexity---the measure~$\cD{}$ has the following properties, which may render it undesirable as a natural measure of complexity of words. These were first described in~\cite{hydebjorn}:
\begin{enumerate}
	\item $\cD{}$ is not invariant under natural transformations on strings, such as reversals. For instance, Hyde and Kjos-Hanssen have verified computationally that $\cD(011100) = 4 < 5 = \cD(001110)$.
	\item The DFA witnessing $\cD{}(x)$ often appears unnatural, in the sense that determinism requires $\cD(x)$ to be total: in many cases, an automaton non\Hyphdash{}``deterministically'' witnessing $\cD(x)$ needs to be augmented by an extra state to which every non-accepting path leads.   
\end{enumerate}
To overcome these obstacles, Hyde introduced automatic complexity witnessed by the smallest \emph{non-deterministic} finite automaton (NFA) \cite{hydeMA}.

\begin{definition}
	Let $x \in \sstar$. An automaton $M$ \define{uniquely accepts} $x$ if $M$ exactly accepts $x$ and there is only one path in $M$ which accepts $x$.
\end{definition}
Clearly, every DFA which exactly accepts $x$ also uniquely accepts $x$. For NFAs, however, this is not the case.
An NFA uniquely accepts $x$ if and only if the NFA exactly accepts $x$ and the NFA is unambiguous on~$\Sigma^{|x|}$. Though Hyde \cite{hydeMA} required the NFA to be unambiguous on~$\Sigma^{|x|}$, she noted that the complexity based on NFAs is much more flexible and many words have a smaller complexity in her version than if only DFAs are considered. This led her to:

\begin{definition}\label[definition]{dfn:unn}
	Let $x \in \sstar$. The \define{unique non-deterministic automatic complexity} of~$x$ is
\[
	\cNu(x) = \min \set{k \in \nn}{\text{\normalfont{there exists an NFA of $k$ states which uniquely accepts $x$}}}.
\]	
\end{definition}

\begin{remark}
	We note that this notion is usually called ``non-deterministic automatic complexity''. As we study an ostensibly weaker notion below, we emphasise the additional strength of the notion defined in \Cref{dfn:unn} by adding the attribute ``unique''.
\end{remark}
While it is well-known that NFAs and DFAs recognise exactly the same class of languages---the regular languages (see e.g.\ \cite{Shallit_2008,nerode} for a comprehensive background on automata theory)---the respective notions of automatic complexity differ. The following properties of~$\cNu{}$ have been derived by Hyde and Kjos-Hanssen alongside co-authors, and others. Let $\mNu(x)$ denote both the \define{minimal automaton witnessing $\cNu(x)$} and the directed graph representing it.

\begin{lemma}\label[lemma]{lem:poyt}
	Let $x \in \sstar$.
	\begin{enumerate}
		\item $\cNu(x) \leq (|x|/2) + 1$ follows from exhibiting suitable NFAs {\normalfont{\cite{hydeMA}}}.
		\item $\mNu(x)$ is planar {\normalfont{\cite{bjornPlanar}}}.
	\end{enumerate}

\end{lemma}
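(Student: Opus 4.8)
We sketch how one might prove the lemma; both clauses are, as indicated, due to the cited works.

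For part~(1), the plan is to reconstruct, for each $x\in\sstar$ with $|x|=n$, an explicit NFA $M_x$ on $\lfloor n/2\rfloor+1$ states that uniquely accepts $x$: a tailored variant of the $(n+1)$-state path automaton for $x$ in which the first half of $x$ is read ``outward'' along a chain of states and the remaining letters are read ``inward'' along those same states, so that the state count is essentially halved. Two things then need checking. That $M_x$ accepts $x$ along this designated walk is immediate. The content is that $M_x$ rejects every other word of length $n$ while remaining unambiguous on $\Sigma^n$; I would verify this by induction along a hypothetical length-$n$ accepting walk, showing that at each step the next letter of $x$ --- together with the number of steps still required to reach the accepting state --- pins down a unique transition, so that the designated walk is forced. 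The obstacle, and the reason the construction cannot be a naive ``fold'', is that merely identifying the two halves already produces spurious length-$n$ walks spelling other words as soon as $x$ carries internal periodicity (alternating moves then trace out a short periodic word); the merging pattern must therefore be chosen with care, routing repeated blocks onto short cycles and recursing on the remainder while staying within the $\lfloor n/2\rfloor+1$ budget. Handling these cases uniformly is the main obstacle in part~(1).

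For part~(2), I would first normalise a minimal witnessing automaton. If an NFA uniquely accepts $x$, then deleting any single transition cannot destroy this property: the accepted language and the set of $x$-accepting paths can only shrink, while $x$ remains accepted along the (unchanged) witnessing walk. Hence I may assume that in $\mNu(x)$ every transition --- and therefore every state, since an isolated state could be removed, contradicting minimality --- lies on the unique accepting walk $W=(w_0\xrightarrow{x_1}w_1\xrightarrow{x_2}\cdots\xrightarrow{x_n}w_n)$; the underlying digraph then has vertex set $\{w_0,\ldots,w_n\}$, edge set $\set{(w_{i-1},w_i)}{1\le i\le n}$, and $W$ traverses all of its edges. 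To show this digraph is planar I would argue that unique acceptance forbids the Kuratowski obstructions: were $\mNu(x)$ non-planar it would contain a subdivision of $K_5$ or $K_{3,3}$, and I would use the constituent paths of such a subdivision as detours along which to reroute $W$, thereby producing a \emph{second} walk of length exactly $n$ from $w_0$ to $w_n$ that still reads $x$ --- contradicting unique acceptance. The delicate step, which I expect to be the main obstacle here, is organising the reroute so that it preserves both the length and the label sequence of the walk --- a detour preserving only connectivity would in general spell a different word --- and making this precise is exactly what yields planarity of $\mNu(x)$, as carried out in~\cite{bjornPlanar}.
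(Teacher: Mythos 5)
The paper itself offers no proof of either clause; both are cited from prior work (Hyde's thesis and Kjos-Hanssen's planarity paper respectively). So there is no ``paper's proof'' to compare against line by line, and the only question is whether your reconstructions would actually go through.

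For part (1) you have correctly identified both the intended construction (Hyde's folded, out-and-back automaton on $\lfloor n/2\rfloor+1$ states) and the genuine obstacle: when $x$ has internal periodicity, the naive fold acquires a second accepting walk. (For a concrete example, fold $x=0101$ onto three states: the transitions $q_0\xrightarrow{0}q_1$, $q_1\xrightarrow{1}q_2$, $q_2\xrightarrow{0}q_1$, $q_1\xrightarrow{1}q_0$ admit the spurious accepting walk $q_0q_1q_0q_1q_0$ in addition to the intended one.) But the fix you propose --- ``routing repeated blocks onto short cycles and recursing on the remainder'' --- is exactly the nontrivial content of the theorem, and you leave it at the level of intention. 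The actual argument requires either a periodicity case split (periodic $x$ gets a smaller cyclic automaton outright, as in \Cref{prp:longgg}; aperiodic $x$ makes the fold unambiguous) or a more careful analysis of when the folded automaton can have two accepting walks, using Lyndon--Sch\"utzenberger-type reasoning. As written, part (1) is a correct plan with the hard step unexecuted.

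For part (2) the first reduction is sound: by minimality you may assume every transition, hence every state, lies on the unique accepting walk $W$. But the Kuratowski step after that does not work as stated. The observation ``every edge lies on a single walk'' places essentially no planarity constraint on the underlying graph: $K_5$ is Eulerian (every vertex has even degree $4$), so a single closed walk can traverse all of its edges. Planarity therefore cannot follow from the walk-covering property alone; it has to come from \emph{unique acceptance}. Your plan to extract a second accepting walk by detouring through a $K_5$ or $K_{3,3}$ subdivision founders on exactly the difficulty you name but do not resolve: a detour that preserves connectivity and even walk length will generically spell a different word, because the subdivision paths carry their own labels and there is no reason they reproduce the correct factor of $x$ of the correct length at the correct position. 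Nothing in the Kuratowski structure gives you that control. The argument in the cited work is organised quite differently: it analyses how the loops created when $W$ revisits a state may interleave, shows that unique acceptance forbids a ``crossing'' interleaving pattern (a crossing would allow two loops to be swapped, producing a second accepting walk reading the same word), and derives a non-crossing --- hence planar --- layout directly, rather than by forbidding Kuratowski minors after the fact. So the high-level intuition (``non-planarity should let me reroute and break uniqueness'') points in the right direction, but the concrete route you sketch would not close.
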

Building upon Hyde's work from \cite{hydeMA}, in the present paper we study more closely the notion of automatic complexity induced by a weaker class of machines: the class of exactly but not necessarily uniquely accepting automata.

\begin{definition}\label[definition]{dfn:eacc}
	Let $x \in \sstar$. The \define{non-deterministic automatic complexity} of $x$ is
	\[
		\cN(x) = \min \set{k \in \nn}{\text{\normalfont{there exists an NFA of $k$ states which exactly accepts $x$}}}.
	\]
\end{definition}
Since every NFA which uniquely accepts $x$ also exactly accepts $x$, we immediately see that~$\cN(x) \leq \cNu(x)$. Whether equality holds is still open (\Cref{questionF}). In~\cite{bjornMax}, Kjos-Hanssen investigated the complexity of certain languages induced by~$\cNu{}$ in terms of more complicated models of computation, e.g.\ pushdown automata. In particular, he showed:

\begin{theorem}\label{thm:bjorn}\phantom\\
	\begin{enumerate}
		\item $\set{x \in \{ 0,1,2\}^*}{\cNu(x) \leq |x|/2}$ is not context-free.
		\item $\set{x \in \{ 0,1\}^*}{\cNu(x) \leq |x|/3}$ is not recognised by constant-depth circuits with semi-unbounded fan-in, using Boolean $\land$- and $\lor$-gates.
	\end{enumerate}

\end{theorem}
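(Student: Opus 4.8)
Both items rest on the same dichotomy for $\cNu$: \emph{highly periodic words are cheap, and perturbing a periodic word in its interior is expensive}. For cheapness, route a word $w$ around a simple directed cycle on $|w|$ states; the resulting NFA has language exactly $\{w^{j}:j\ge 0\}$, so it uniquely accepts $w^{k}$ among words of length $k|w|$, whence $\cNu(w^{k})\le|w|=|w^{k}|/k$. Slightly more elaborate ``cycle-with-spur'' NFAs show that words of the form $(0^{N}1)^{k-1}0^{M}1$ with a single deviant block still have complexity about $N$ \emph{provided the deviant block is the last one}. The real content is the complementary lower bound: \emph{a word obtained from an otherwise periodic block-word by changing one block in the interior has $\cNu$ above half its length}. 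I would prove this by tracing the unique accepting path $q_{0}\to\cdots\to q_{n}$ of a hypothetical witness: whenever the path contains two vertex-disjoint loops of equal length, interchanging their positions yields a \emph{distinct} word of length $n$ accepted by the NFA, contradicting exact acceptance; hence at most one copy of the repeated block $0^{N}$ can be read along a loop of length $N$, and every other copy must be traversed on roughly $N$ fresh vertices, forcing $\gtrsim\tfrac{3}{2}N$ states.

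\textbf{Item (1).} Let $L=\{x\in\{0,1,2\}^{*}:\cNu(x)\le|x|/2\}$, fix a large $N$, and take $z=(0^{N}1)^{4}$. By the cycle construction $\cNu(z)\le N+1\le|z|/2$, so $z\in L$. Assume $L$ is context-free with Ogden constant $p<N$, and apply Ogden's lemma with the positions of the \emph{second} $0^{N}$-block marked. Since that block has more than $p$ marked positions, the pumped factor is confined to the interior of a single long $0$-block (up to relabelling which interior block is hit, and with a bit of care for the boundary cases where the pumped factor swallows a separator $1$ -- these are ruled out by taking $N$ large enough and, if needed, seasoning $z$ with the third letter $2$ to rigidify block positions). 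Pumping then produces $z'=0^{N}1\,0^{M}1\,0^{N}1\,0^{N}1$ with $M\ne N$ and $|M-N|\le p$; by the interior lower bound above, $\cNu(z')>|z'|/2$, so $z'\notin L$, contradicting Ogden's lemma. Hence $L$ is not context-free.

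\textbf{Item (2).} Let $L'=\{x\in\{0,1\}^{*}:\cNu(x)\le|x|/3\}$. The class $\sac$ (constant depth, $\land$- and $\lor$-gates with the $\land$-gates of bounded fan-in, literals at the leaves) is closed under input projections, and it cannot compute the $n$-bit $\mathrm{AND}$ function: any circuit of depth $d$ with fan-in-$\le b$ $\land$-gates has all minterms of size $\le b^{d}$, whereas $\mathrm{AND}_{n}$ has a minterm of size $n$, so $d\ge\log_{b}n\to\infty$ (if instead the $\lor$-gates are the bounded ones, use $\mathrm{OR}_{n}$ dually). I would reduce $\mathrm{AND}$ to $L'$ by a projection: given $y\in\{0,1\}^{n}$, let $x_{y}$ consist of $n+O(1)$ equal-length blocks separated by a fixed symbol, where the $i$-th internal block carries a periodicity-breaking defect iff $y_{i}=0$ and the flanking blocks are fixed (so that every $y$-dependent block is interior); each bit of $x_{y}$ is then a constant or one of the literals $y_{i},\lnot y_{i}$, and $|x_{y}|$ is independent of $y$. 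If $y$ is all ones, $x_{y}$ is a high power of a fixed word, so $\cNu(x_{y})\le|x_{y}|/3$; if some $y_{i}=0$, $x_{y}$ is interior-perturbed, so $\cNu(x_{y})>|x_{y}|/3$ by the same lower bound as in Item~(1). Thus $x_{y}\in L'\iff\mathrm{AND}(y)=1$, and composing a putative $\sac$-recogniser of $L'$ with this projection would compute $\mathrm{AND}$ in $\sac$ -- a contradiction.

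\textbf{Main obstacle.} In both items the crux is the $\cNu$ \emph{lower} bound for interior-perturbed block-words. Upper bounds are routine (exhibit one NFA), but a lower bound must exclude \emph{every} small NFA. The loop-interchange idea sketched above is, I believe, the right mechanism, but making it quantitatively tight enough to separate from $|x|/2$ (and from $|x|/3$ in Item~(2)) -- rather than merely from $\sqrt{|x|}$, which a crude count of NFAs gives -- requires controlling how loops along the unique accepting path may overlap and nest, and genuinely exploiting the \emph{uniqueness} of the accepting path (not just exact acceptance) to constrain the frontier-set sequence $S_{0},\dots,S_{n}$ at the block boundaries. A secondary technical point in Item~(1) is checking that all admissible Ogden factorisations are accounted for; choosing $z$ with sufficiently long blocks, together with the third alphabet letter, should suffice.
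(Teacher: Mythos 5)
Note first that the paper does not prove \Cref{thm:bjorn}; it quotes the result from \cite{bjornMax} as motivation for the generalisations it does prove (\Cref{thm:Lq}, \Cref{thm:compl}, \Cref{thm:sac1}, \Cref{thm:sac2}). So the comparison is necessarily with the arguments the paper develops for those generalised statements, and with whether your proposal would stand on its own.

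The decisive gap is the one you flag yourself as the ``main obstacle'': both items rest entirely on an unproven lower bound, namely that a block-periodic word with one interior block perturbed has $\cNu$ strictly above $|x|/2$ (item~1), respectively $|x|/3$ (item~2). This is not a loose end; it is the whole content. For item~1 it would say that a word such as $0^{N}1\,0^{N+1}1\,0^{N}1\,0^{N}1$ attains essentially the maximal value that Hyde's bound $\cNu(x)\le|x|/2+1$ (\Cref{lem:poyt}) permits. Given how repetitive that word still is, this is a strong and far-from-obvious claim, and the loop-interchange heuristic does not close it: it rules out two vertex-disjoint, equal-length loops on the accepting path reading different factors, but says nothing about overlapping or nested loops, about shortcut transitions allowing one iteration to be stretched or skipped, or about loops of unequal length---which is precisely where a small witness would live. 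Without this bound, Ogden's lemma has nothing to pump against, and the projection in item~2 no longer separates $\mathrm{AND}(y)=1$ from $\mathrm{AND}(y)=0$. There is also a secondary problem in the Ogden step: a valid factorisation can place $v$ or $x$ across a separator $1$, in which case the pumped word is not an interior-perturbed block-word at all but has a different number of blocks; ``seasoning with the letter~$2$'' is not a specification of how this is excluded.

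For contrast, the route the paper takes for its generalised versions avoids any pointwise lower bound for a hand-crafted word family. The circuit result (\Cref{thm:sac1}) is a counting argument: an $\sac$ circuit forces either a $2^{n-c}$-fraction of acceptances, contradicting $|L_q\cap\Sigma^n|\in o(2^n)$ obtained from a Kolmogorov-complexity encoding of the loop structure (\Cref{prp:Kolm}, \Cref{usecor27}), or acceptance of a low-Hamming-weight word, contradicting the Gap Lemma (\Cref{prp:gapsPrp}). The context-free result (\Cref{thm:Lq}) goes through rich grammars and Lyndon--Sch\"utzenberger rather than Ogden: a grammar generating the language must contain a rich nonterminal, and a rich nonterminal generates a word that violates the bordered-factor structure which \Cref{prp:2x} guarantees for every low-$\cN$ word. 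In both cases the lower bound is statistical (``most words are complex'') rather than pointwise (``this specific word is complex''), which is exactly what lets the argument close without the missing lemma your proof requires. Your use of Ogden's lemma and the $\mathrm{AND}$-projection are genuinely different ideas, but as written they do not constitute a proof.
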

Results of this type motivate this paper: we investigate the impact of exactness on the behaviour of automatic complexity, which we describe via theorems
akin to \Cref{thm:bjorn}.

\subsection{Our Theorems and the Structure of This Paper}

We investigate the complexity of $\cN{}$ as a function in terms of the complexity of the language of $\cN$-complicated words. Explicitly, we investigate the following class of languages first defined\footnote{In {\cite[Def.\ 17]{bjornMax}}, Kjos-Hanssen has considered the complementary decision problem, given by the decision problem~$q|x| < \cN(x)$. We note that our class $\set{L_q}{q \in (0,1/2)}$ is more general.} by Kjos-Hanssen~\cite{bjornMax}, and prove results on their complexities.

\begin{definition}
	For $q \in (0,1/2)$, define $L_q = \set{x \in \sstar}{\cN(x) < q|x|}.$
\end{definition}
In \Cref{sec:LqProps}, we isolate complexity results on the~$L_q$-sets which follow from a fine-grained investigation of its elements. For instance, in \Cref{prp:Kolm} we isolate an upper bound of the Kolmogorov complexity of words in~$L_q$. This gives a small-to-large result---a theorem about elements which provides information about sets---in the form of \Cref{usecor27}, which shows that the cardinality of $L_q \cap \Sigma^n$ is in $o\left(|\Sigma|^n\right)$. 
This observation also yields a proof of the \define{Shannon effect} for $\cN$:

\theoremstyle{plain}
\newtheorem*{thm:shannon}{\Cref{thm:sh}}
\begin{thm:shannon}
	Let $\cN\left(\Sigma^n\right) = \max_{x \in \Sigma^n} \cN(x)$. For almost every $x \in \Sigma^*$,
	\[
		\cN(x) \geq \cN\left(\Sigma^{|x|}\right) - o\left(\cN\left(\Sigma^{|x|}\right)\right).
	\]
\end{thm:shannon}
In \Cref{sec:Lq}, we demonstrate that pushdown automata are not powerful enough to characterise~$\cN$-complicated words, which the following theorems show.

\newtheorem*{thm:LLq}{\Cref{thm:Lq}}
\begin{thm:LLq}
	For every $q \in (0,1/2)$, the language $L_q$ is not context-free.
\end{thm:LLq}

\newtheorem*{thm:LLLLq}{\Cref{thm:compl}}
\begin{thm:LLLLq}
	For every $q \in (0,1/2)$, the language $\sstar \setminus L_q$ is not context-free.
\end{thm:LLLLq}
In \Cref{sec:circ}, we consider the complexity of $L_q$ in terms of Boolean circuits. To do so, we use two classical types of Boolean circuits---$\sac$, defined in \Cref{ss:c1}, and $\psac$, defined in \Cref{ss:c2}---and apply a counting argument to prove:

\newtheorem*{thm:sss}{\Cref{thm:sac1}}
\begin{thm:sss}
	For $q \in (0,1/2)$ and $|\Sigma| = 2$,~$L_q \not\in \sac$ and $\sstar \setminus L_q \not\in \sac$.
\end{thm:sss}

\newtheorem*{thm:ssss}{\Cref{thm:sac2}}
\begin{thm:ssss}
Let $q \in (0,1/2)$ and $|\Sigma| = p$ for some prime $p$. Then $L_q \not\in \psac$ and~$\sstar \setminus L_q \not\in \psac$.
\end{thm:ssss}
As a special case, we show that $L_{1/3}$ is not $\psac$-recognisable, answering a question of Kjos-Hanssen \cite[p.\ 351]{bjornMax}.

By giving a minor redefinition of $\psac$-recognisability for alphabets of non-prime cardinality, we also prove a partial generalisation of these theorems:

\newtheorem*{thm:ttt}{\Cref{finalThm}}
\begin{thm:ttt}
Let $q \in (0,1/2)$ and $|\Sigma| = r$ for some non-prime $r$. Let $p$ be the smallest prime greater than $r$. Let $\prsac$ denote the class $\psac$ for $r$-cardinality alphabets inside the field of $p$ elements.

Then $L_q \not\in \prsac$ and~$\sstar \setminus L_q \not\in \prsac$.
\end{thm:ttt}

For details on this redefinition, see~\Cref{sec:last} and in particular~\Cref{newDfn}.

\medskip

In \Cref{sec:open}, we conclude this paper by giving a few open questions.

\subsection{Acknowledgements}
	B.~Kjos-Hanssen was partially supported by a grant from the Simons Foundation (\#704836 to Bj{\o}rn Kjos-Hanssen). L.~Richter was fully supported by Singapore Ministry of Education grant MOE-000538-01. F.~Stephan was partially supported by Singapore Ministry of Education grant MOE-000538-01. Parts of this work have appeared in the first author's Bachelor's thesis submitted to the National University of Singapore.

\section[Combinatorial Properties of Sets of Words of Low-Complexity]{Combinatorial Properties of $L_q$}\label{sec:LqProps}

In this section, we derive combinatorial properties of $L_q$ which are needed in the sequel, particularly to prove \Cref{thm:Lq}.
Fix $q \in (0,1/2)$.
Firstly, we show that~$L_q$ satisfies a strong closure property: any word~$x \in \sstar$ can be extended to some word $y \in \sstar$ for which $y \in L_q$.

\begin{proposition}\label[proposition]{prp:longgg}
	Suppose $x \in \sstar$. If $m > |x|/q$ then $x^m \in L_q$.
\end{proposition}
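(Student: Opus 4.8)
The plan is to write down an explicit small automaton and read the bound off it. Assume $x = a_1 a_2 \cdots a_\ell \in \splus$ with each $a_i \in \Sigma$ and $\ell = |x| \geq 1$ (the case $x = \varepsilon$ is degenerate and I would set it aside). Consider the \emph{cycle automaton} $C_x$ whose state set is $\{0, 1, \ldots, \ell-1\}$, whose unique initial state and unique accepting state is $0$, and whose only transitions are $i \to i+1 \pmod{\ell}$ labelled by $a_{i+1}$, for each $i$. Then $C_x$ has exactly $\ell$ states and is in fact a (partial) DFA: from every state exactly one letter leads anywhere.

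Next I would verify that $C_x$ exactly accepts $x^m$ for every $m \geq 1$. Since each state of $C_x$ has out-degree one, for any length $N$ there is at most one word of length $N$ admitting a run from state $0$, namely the word obtained by walking $N$ steps around the cycle; this run terminates in the accepting state $0$ precisely when $\ell \mid N$. Taking $N = m\ell = |x^m|$, the only word of length $|x^m|$ that $C_x$ accepts is the one spelled by traversing the cycle $m$ times, which is $x^m$. Hence $C_x$ exactly accepts $x^m$, so $\cN(x^m) \leq \ell = |x|$.

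Finally I would combine this with the hypothesis. From $m > |x|/q$ and $|x| \geq 1$ we get $qm > |x| \geq 1$, whence $q|x^m| = qm|x| > |x| \geq \cN(x^m)$; that is, $\cN(x^m) < q|x^m|$, which is exactly the assertion $x^m \in L_q$.

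I do not expect a genuine obstacle here. The only point needing care is the exactness claim: one might worry that internal periodicity of $x$ lets $C_x$ accept a second word of length $|x^m|$, but this cannot occur, precisely because $C_x$ is deterministic and the unique walk of length $|x^m|$ from its start state spells $x^m$. (The argument in fact only uses $m > 1/q$; the stated hypothesis $m > |x|/q$ is a convenient uniform strengthening that is more directly applicable in the sequel.)
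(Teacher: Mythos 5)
Your construction is exactly the one in the paper: a single-cycle automaton on $|x|$ states with $s_0$ as both the initial and accepting state, spelling $x$ deterministically around the cycle, so exactness is immediate from out-degree one and $\cN(x^m) \le |x| < q|x^m|$ follows. Your parenthetical observation that $m > 1/q$ already suffices is correct — since $\cN(x^m)\le|x|$ the comparison $|x| < qm|x|$ reduces to $qm>1$ — and the paper's stated hypothesis $m>|x|/q$ is indeed stronger than needed at this point.
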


\begin{proof}
	Let $n = |x|$ and suppose $x = x_0\cdots x_{n-1} \in \sstar$. Now build an NFA as follows: there are~$n$ states~$\{ s_0,\ldots, s_{n-1} \}$, with~$s_0$ being both the start and unique accepting state. Transitions are given by~$s_i \xrightarrow{x_i} s_{i+1}$ for~$i < n-1$ and $s_{n-1} \xrightarrow{x_{n-1}} s_{0}$. It is readily seen that this automaton witnesses
	\[
		\cN(x^m) \leq |x| < qm < qm|x| = q|x^m|. \qedhere
	\]
\end{proof}
While the previous proposition employs repetition of words to push the non-deterministic automatic complexity down, in the following lemma we show that spacing out bits of information achieves the same effect. W.l.o.g., assume $0 \in \Sigma$. For notation, if $x = x_0\cdots x_{n-1} \in \sstar$ then define the \define{(Hamming) weight of $x$} by $\supp(x) = |\set{k < n}{x_k \neq 0}|$.

\begin{lemma}[Gap Lemma]\label[proposition]{prp:gapsPrp}
	For every $c \in \nn$ there exists $n \in \nn$ such that if $x \in \Sigma^n$ and $\supp(x) \leq c$ then~$x \in L_q$. 
\end{lemma}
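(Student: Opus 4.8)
The plan is to exhibit, for every sufficiently large $n$ and every $x\in\Sigma^{n}$ with $\supp(x)\le c$, a small automaton --- in fact a deterministic one with partial transition function --- that exactly accepts $x$. Write $j=\supp(x)\le c$ and $x=0^{a_{0}}\sigma_{1}0^{a_{1}}\sigma_{2}\cdots\sigma_{j}0^{a_{j}}$ with $\sigma_{i}\in\Sigma\setminus\{0\}$, $a_{i}\ge 0$, and $\sum_{i}a_{i}=n-j$. The automaton $M$ will be a chain of $j+1$ disjoint directed cycles $C_{0},\dots,C_{j}$: cycle $C_{i}$ has states $u^{i}_{0},\dots,u^{i}_{p_{i}-1}$ with a $0$\nobreakdash-labelled edge $u^{i}_{k}\to u^{i}_{(k+1)\bmod p_{i}}$; the start state is $u^{0}_{0}$; for each $i<j$ there is a single $\sigma_{i+1}$\nobreakdash-labelled edge $u^{i}_{a_{i}\bmod p_{i}}\to u^{i+1}_{0}$; and the unique accepting state is $u^{j}_{a_{j}\bmod p_{j}}$. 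Here the lengths $p_{0},\dots,p_{j}$ are positive integers $\le n^{1-\varepsilon}$, with $\varepsilon=\tfrac{1}{c+3}$, chosen as described below, so that $M$ has $\sum_{i}p_{i}\le(c+1)n^{1-\varepsilon}$ states.

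First I would verify that every run of $M$ is forced by the chain structure to traverse $C_{0},C_{1},\dots,C_{j}$ in order, so that $M$ accepts \emph{exactly} those words of the shape $0^{b_{0}}\sigma_{1}0^{b_{1}}\cdots\sigma_{j}0^{b_{j}}$ for which $b_{i}\equiv a_{i}\pmod{p_{i}}$ for every $i$; in particular $x$ itself is accepted. If such a word has length $n$, then $\sum_{i}b_{i}=n-j=\sum_{i}a_{i}$, so the integers $s_{i}:=(b_{i}-a_{i})/p_{i}$ satisfy $\sum_{i}p_{i}s_{i}=0$ and $|s_{i}|=|b_{i}-a_{i}|/p_{i}<n/p_{i}$. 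Hence $M$ exactly accepts $x$, giving $\cN(x)\le(c+1)n^{1-\varepsilon}$, \emph{provided} the $p_{i}$ are chosen so that the only $s\in\zz^{\,j+1}$ with $\sum_{i}p_{i}s_{i}=0$ and $|s_{i}|<n/p_{i}$ for all $i$ is $s=0$.

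To make such a choice I would pick $p_{0},\dots,p_{j}$ greedily from the interval $[\tfrac12 n^{1-\varepsilon},n^{1-\varepsilon}]$. Having fixed $p_{0},\dots,p_{k-1}$, any offending vector $s$ whose largest nonzero coordinate is $s_{k}$ has $|s_{i}|<n/p_{i}\le 2n^{\varepsilon}$ for $i\le k$, hence forces $p_{k}=-\bigl(\sum_{i<k}p_{i}s_{i}\bigr)/s_{k}$; so at most $(4n^{\varepsilon}+2)^{c+1}=O\!\bigl(n^{(c+1)/(c+3)}\bigr)$ values of $p_{k}$ are forbidden, while the interval contains $\approx\tfrac12 n^{1-\varepsilon}=\tfrac12 n^{(c+2)/(c+3)}$ integers, which dominates once $n$ is large. (Choosing $p_{k}$ to avoid the forbidden values in particular keeps the $p_{i}$ pairwise distinct, since $p_k=p_i$ would be the forbidden value coming from $s=e_i-e_k$.) Finally I would fix $n$ large enough that this greedy selection succeeds and $(c+1)n^{1-\varepsilon}<qn$; then $\cN(x)<q|x|$, i.e.\ $x\in L_{q}$, for every $x\in\Sigma^{n}$ with $\supp(x)\le c$. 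The case $c=0$, where $x=0^{n}$, degenerates to a single cycle --- or one may simply note $\cN(0^{n})=1$.

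The only real content is the last requirement: that no spurious word of length $n$ can satisfy all the congruences $b_{i}\equiv a_{i}\pmod{p_{i}}$. For $j=1$ this is automatic from the length constraint together with coprimality (a single gap carries one degree of freedom, killed by two congruences whose moduli have product exceeding $n$), but for $j\ge 2$ one congruence per gap genuinely leaves the gap lengths underdetermined, and naive choices of the $p_{i}$ fail --- for instance any repeated length, or pairwise coprime lengths such as $2,3,5$ with $2+3-5=0$. So the cycle lengths must be engineered to kill every short integer relation among them, which is exactly what the greedy/counting step above achieves; this is the part I expect to be the main obstacle. Everything else --- that the chained cycles accept precisely the words of the stated shape with the prescribed residues, and the surrounding arithmetic bookkeeping --- is routine.
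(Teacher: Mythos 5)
Your proof is correct, and it follows the same broad template as the paper's---an automaton whose gaps are realised by cycles whose lengths are forced to satisfy the given residues, with exactness reduced to showing that no spurious word can match all the residue constraints while preserving total length---but the key step of \emph{engineering the cycle lengths to kill every short integer relation} is done in a genuinely different way. The paper takes a number-theoretic route: using Bertrand's postulate it selects $c$ nearby primes $p_1<\dots<p_c$ around $\sqrt[c]{n}$ and uses loop lengths $Q_i=\prod_{j\neq i}p_j$, so that any offending relation $aQ_1=\sum_{i\ge 2}d_iQ_i$ with $0<a<p_1$ is ruled out instantly by divisibility ($p_1$ divides the right-hand side but not $aQ_1$). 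You instead pick the cycle lengths greedily from the interval $\bigl[\tfrac12 n^{1-\varepsilon},\,n^{1-\varepsilon}\bigr]$ with $\varepsilon=1/(c+3)$, and dispose of short relations by a counting/avoidance argument: at each stage the $O\!\bigl(n^{(c+1)/(c+3)}\bigr)$ forbidden values are swamped by the $\Theta\!\bigl(n^{(c+2)/(c+3)}\bigr)$ available ones. Both yield a sublinear number of states (the paper's bound $O\!\bigl(n^{(c-1)/c}\bigr)$ is slightly tighter than your $O\!\bigl(n^{(c+2)/(c+3)}\bigr)$, but either suffices since all that is needed is $o(n)$). What each buys: the paper's construction is explicit and gives a one-line algebraic certificate of exactness; yours avoids Bertrand's postulate and is more flexible, at the cost of a mildly weaker state bound and a probabilistic-style existence argument in place of an explicit construction. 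One small bookkeeping note: since $j=\supp(x)$ varies with $x$ but you want a single choice of moduli that works for all $x\in\Sigma^n$ of weight $\le c$, you should fix $p_0,\dots,p_c$ once (avoiding relations among all $c+1$ of them) and then use only the first $j+1$ for a given $x$; the argument as written already supports this, it just deserves an explicit sentence.
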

Note that, in the statement above, $n$ depends on $q$, which we fixed at the beginning of this section. Before we give the proof, we need the following number-theoretical lemma, called Bertrand's postulate (for a proof see e.g.\ \cite{murty}).
Let~$\pp$ denote the set of prime numbers.

\begin{lemma}[Bertrand's postulate]
	If $h > 1$ then $\pp \cap (h,2h)$ is non-empty.
\end{lemma}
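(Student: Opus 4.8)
The plan is to give Erd\H{o}s's elementary proof, which turns on the prime factorisation of the central binomial coefficient $\binom{2n}{n}$. It is convenient to first reduce to an integer statement. If $1 < h < 2$ then $2 \in (h,2h)$, so assume $h \geq 2$ and set $n = \lfloor h \rfloor \geq 2$; since any prime $p$ with $n < p < 2n$ is an integer exceeding $n$, it satisfies $p \geq n+1 > h$ and $p < 2n \leq 2h$, hence $p \in (h,2h)$. So it suffices to show: for every integer $n \geq 2$, the interval $(n,2n)$ contains a prime.

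Next I would collect three standard estimates. First, $\binom{2n}{n} \geq 4^n/(2n)$, because $\binom{2n}{n}$ is the largest of the $2n+1$ binomial coefficients summing to $4^n$. Second, by Legendre's formula the exponent of a prime $p$ in $\binom{2n}{n}$ equals $\sum_{i \geq 1}\bigl(\lfloor 2n/p^i\rfloor - 2\lfloor n/p^i\rfloor\bigr)$; every summand is $0$ or $1$ and all summands with $p^i > 2n$ vanish, so the full power of $p$ dividing $\binom{2n}{n}$ is at most $2n$, and in particular primes $p > \sqrt{2n}$ contribute exponent at most $1$. Crucially, for $n \geq 3$ every prime with $2n/3 < p \leq n$ has exponent exactly $0$: the only term that could contribute is $\lfloor 2n/p\rfloor - 2\lfloor n/p\rfloor = 2-2 = 0$, while $p^2 > 2n$ (verified directly for the handful of small $n$). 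Third, the primorial bound $\prod_{p \leq m} p \leq 4^m$, proved by strong induction on $m$: for odd $m = 2k+1$, split the product at $k+1$ and note that the primes in $(k+1, 2k+1]$ each occur as a factor of the numerator of $\binom{2k+1}{k}$ but not of $k!$, so their product divides $\binom{2k+1}{k} \leq 4^k$, while $\prod_{p \leq k+1} p \leq 4^{k+1}$ by induction.

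Then I would run the contradiction. Suppose $(n,2n)$ contains no prime. Then the primes dividing $\binom{2n}{n}$ are all $\leq 2n/3$: those in $(2n/3,n]$ drop out by the second estimate, those in $(n,2n)$ are excluded by assumption, and $2n$ is composite for $n \geq 2$. Splitting these primes at $\sqrt{2n}$, the primes up to $\sqrt{2n}$ contribute at most $(2n)^{\sqrt{2n}}$ and the primes in $(\sqrt{2n}, 2n/3]$ contribute at most $\prod_{p \leq 2n/3} p \leq 4^{2n/3}$, so $\binom{2n}{n} \leq (2n)^{\sqrt{2n}} \cdot 4^{2n/3}$. Combined with the lower bound this gives $4^{n/3} \leq (2n)^{\sqrt{2n}+1}$; taking logarithms, the left-hand side grows linearly in $n$ while the right-hand side grows like $\sqrt{n}\log n$, so the inequality fails, hence is contradicted, for all $n$ above an explicit constant.

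Finally, the finitely many remaining small $n$ are handled by exhibiting an explicit ``Bertrand chain'' of primes, e.g.\ $2, 3, 5, 7, 13, 23, 43, 83, 163, 317, 631, 1259, \ldots$, in which each term is smaller than twice its predecessor, continued until it passes the threshold of the previous paragraph; for any $n$ below that threshold, the smallest chain prime exceeding $n$ automatically lies in $(n,2n)$. The only real obstacle is bookkeeping: one must pin down the constants in the three estimates tightly enough that the threshold coming out of the logarithmic inequality is small enough to be covered by a short, directly verifiable prime chain — the structural ingredients (Legendre's formula, the vanishing of the exponent on $(2n/3,n]$, and the primorial bound) carry no surprises.
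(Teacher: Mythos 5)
The paper does not prove this lemma at all: Bertrand's postulate is invoked as a classical black box, with a pointer to the literature (``for a proof see e.g.\ \cite{murty}''), and only its statement is used in the proof of the Gap Lemma. Your proposal therefore supplies something the paper deliberately omits, and what you supply is the standard Erd\H{o}s argument, correctly organised: the reduction from real $h$ to integer $n$, the lower bound on $\binom{2n}{n}$, Legendre's formula with the key vanishing of the exponent on $(2n/3,\,n]$, the primorial bound $\prod_{p\le m}p\le 4^m$, the resulting inequality $4^{n/3}\le(2n)^{\sqrt{2n}+1}$ failing for large $n$, and a Bertrand chain for the small cases. I see no structural gap. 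One small imprecision: the justification you give for $\binom{2n}{n}\ge 4^n/(2n)$ (``largest of the $2n+1$ coefficients summing to $4^n$'') only yields $4^n/(2n+1)$; to get the stated constant one combines $\binom{2n}{0}+\binom{2n}{2n}=2\le\binom{2n}{n}$ into a single term so that $4^n$ is a sum of $2n$ terms each at most $\binom{2n}{n}$. Either bound suffices for the logarithmic contradiction, so this does not affect correctness, but if you write the proof out you should either fix the justification or use the weaker constant. As you note yourself, the only remaining work is pinning down the explicit threshold and matching it with the prime chain, which is routine bookkeeping.
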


\begin{proof}[Proof of \Cref{prp:gapsPrp}]
	Fix $c \in \nn$. For each $n \in \nn \setminus \{ 0,1 \}$, we define a finite sequence of primes by~$(p_1(n),\ldots,p_c(n))$ as follows: put $p_1(n) = \min \left(\pp \cap \left(\cn,2\cn\right)\right)$ and
	\begin{align*}
		 &p_{i+1}(n) = \min (\pp \cap (p_i,2p_i)) &\text{for $i = 1,2,\ldots,c-1$}.
	\end{align*}
	Since $n > 1$, Bertrand's Postulate shows that this is well-defined. Now, let
	\[
		Q_i(n) = \left( \frac{1}{p_i(n)} \right) \prod_{j=1}^c p_j(n)
	\]
	Bertrand's postulate alongside a short calculation implies
	\begin{align*}
		p_c(n) &< 2^{c-1}p_1(n) < 2^{c}\cn
	\intertext{and so}
		Q_i(n) &\leq (p_c(n))^{c-1} \leq 2^{c(c-1)}n^{\frac{c-1}{c}}
	\end{align*}
which proves~$Q_i(n) \in O\left(n^{\frac{c-1}{c}}\right)$. This also shows that, in the limit,~$Q_i(n) < n$. Similarly,
\[
	Q_i(n)p_i(n) > (p_1(n))^{c} > (\cn)^c = n
\]
and hence, again in the limit, $Q_i(n) < n < Q_i(n)p_i(n)$.
	For $x \in \Sigma^n$ with $\supp(x) \leq c$, write~$x = w_0 0^{\ell_1}w_1 \cdots 0^{\ell_k}w_k$ for some~$k \leq c$.
	Given~$q \in (0,1/2)$, we now choose~$n < \omega$ sufficiently large so that we may assume the following (we write~$Q_i = Q_i(n)$):
	\begin{itemize}
		\item $|w_i| \leq cQ_1$ for $i = 0,1,\ldots,k$.
		\item $\ell_i \geq Q_1$ for $i = 1,\ldots,k$.
	\end{itemize}
Now, write $\ell_i = a_i Q_i + r_i$ where $0 \leq r_i < Q_i$. Since $Q_i p_i > n$ we must have~$a_i < p_i$; otherwise, we must have~$|\ell_i| > n = |x|$, a contradiction. Hence, consider the automaton $M$ in \Cref{fig:gapAutomaton}.
	
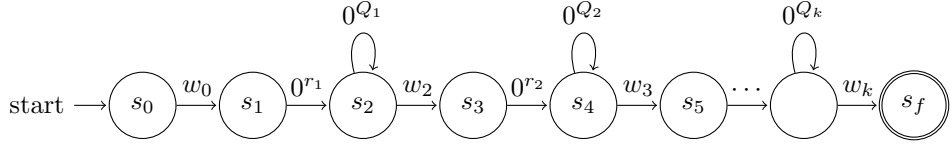
\begin{figure}[htbp]
\begin{center}
	\begin{tikzpicture}[shorten >=1pt,node distance=1.45cm,on grid,auto] 
  		 \node[state,initial] (s_0)  {$s_0$}; 
 		 \node[state] (s_1) [right=of s_0] {$s_1$}; 
   	\node[state] (s_2) [right=of s_1] {$s_2$};
   	\node[state] (s_3) [right=of s_2] {$s_3$}; 
   	\node[state] (s_4) [right=of s_3] {$s_4$}; 
   	\node[state] (s_5) [right=of s_4] {$s_5$};
   	\node[state] (s_6) [right=of s_5] {};
   	\node[state,accepting](s_f) [right=of s_6] {$s_f$};
    	\path[->] 
    		(s_0) edge  node {$w_0$} (s_1)
    		(s_1) edge  node {$0^{r_1}$} (s_2)
    		(s_2) edge [loop above] node {$0^{Q_1}$} ()
    		(s_2) edge  node {$w_2$} (s_3)
    		(s_3) edge  node {$0^{r_2}$} (s_4)
    		(s_4) edge [loop above] node {$0^{Q_2}$} ()
    		(s_4) edge  node {$w_3$} (s_5)
    		(s_5) edge  node {$\cdots$} (s_6)
    		(s_6) edge [loop above] node {$0^{Q_k}$} ()
    		(s_6) edge  node {$w_k$} (s_f);
	\end{tikzpicture}%
\caption{The non-deterministic automaton witnessing the ``gap lemma''.}
\label{fig:gapAutomaton}
\end{center}%
\end{figure}%
We show that $M$ is as required. First, by definition, $M$ accepts $x$. To show exactness, suppose~$y \in \Sigma^n$ and that $M$ accepts $y$. If $x \neq y$, assume w.l.o.g.\ that~$M(y)$ traverses the~$0^{Q_1}$-loop fewer than $a_i$-many times. Since $|y| = n$, $M(y)$ must go through the remaining loops more often to make up for the $Q_1$-deficit. However, the equation $Q_1 = d_2Q_2 + \ldots + d_kQ_k$ has no integer solution, since $p_1$ divides the right-hand side yet not $Q_1$. Thus,~$M$ cannot accept~$y$, as needed. Finally, recall that $r_i, |w_i| \leq cQ_1 \in O\left(n^{\frac{c-1}{c}}\right)$. Thus, for~$n$ large enough, inspection of the automaton~$M$ shows that
\[
	\cN(x) \le 3kcQ_1 \in O\left(n^{\frac{c-1}{c}}\right)
\]
for every~$x \in \Sigma^n$, and thus, eventually,~$\cN(x) < q|x|$.
\end{proof}

As a consequence of our proof, we also obtain the following (we thank the anonymous reviewer for pointing this out).

\begin{corollary}
	For every~$c \in \nn$ and every~$q \in (0,1/2)$ there exists~$n \in \nn$ such that if~$x \in \Sigma^{>n}$ and~$\supp(x) \le c$ then~$x \in L_q$.
\end{corollary}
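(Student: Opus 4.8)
The plan is to re-read the proof of the Gap Lemma (\Cref{prp:gapsPrp}) and observe that it already yields a bound on $\cN(x)$ that is \emph{uniform} in $|x|$, and then to close the gap between that sub-linear bound and the linear threshold $q|x|$ exactly as in that proof. Since replacing $c$ by $\max(c,2)$ only weakens the hypothesis $\supp(x)\le c$, I would first reduce to the case $c \ge 2$; this also sidesteps the degenerate regime in which the numbers $Q_i$ fail to tend to infinity.

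Fix such a $c$, let $m \ge 2$, and let $x \in \Sigma^m$ with $\supp(x)\le c$. Run the construction from the proof of \Cref{prp:gapsPrp} verbatim, but with $m$ in the role of $n$: form the primes $p_1(m) < \dots < p_c(m)$, the numbers $Q_i(m)$, the decomposition $x = w_0 0^{\ell_1} w_1 \cdots 0^{\ell_k} w_k$ obtained by cutting at the maximal zero-runs of length at least $Q_1(m)$, and the automaton of \Cref{fig:gapAutomaton}. Nothing in that argument used the specific value of the length: the exactness verification rests only on the fact that $p_1$ divides $d_2 Q_2 + \dots + d_k Q_k$ but not $Q_1$, and the state count is read off the figure as $3kcQ_1(m)$. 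With $k \le c$ and $Q_1(m) \le 2^{c(c-1)} m^{(c-1)/c}$ this gives
\[
	\cN(x) \;\le\; 3c^2\, 2^{c(c-1)}\, m^{\frac{c-1}{c}},
\]
valid once $m$ is large enough for $Q_1(m) \to \infty$ to make the decomposition well-defined and for the inequalities $|w_i| \le cQ_1(m)$ and $\ell_i \ge Q_1(m)$ to hold. The point is that the constant $3c^2 2^{c(c-1)}$ depends on $c$ alone, not on $m$ or on $x$.

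Given $q \in (0,1/2)$, since $m^{(c-1)/c} = o(m)$ there is then $n \in \nn$, depending only on $c$ and $q$, which is large enough both for the displayed estimate to apply and for $3c^2 2^{c(c-1)} m^{(c-1)/c} < qm$ to hold for every $m \ge n$. For any $x \in \Sigma^{>n}$ with $\supp(x) \le c$, putting $m = |x| > n$ gives $\cN(x) < qm = q|x|$, so $x \in L_q$. The main thing to verify carefully is the middle paragraph's claim that the Gap Lemma's construction and its correctness proof are genuinely length-agnostic; once that is granted, the corollary is just the same sub-linear-beats-linear comparison used for \Cref{prp:gapsPrp} itself, now applied uniformly over all lengths exceeding a fixed bound.
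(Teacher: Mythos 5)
Your proposal is correct and follows the same route the paper intends: the paper offers no separate argument for this corollary beyond the remark ``as a consequence of our proof,'' and you have simply made explicit why the Gap Lemma's proof is length-agnostic — the prime construction, the decomposition into blocks separated by long zero-runs, the exactness argument via divisibility by $p_1$, and the state-count estimate $\cN(x)\le 3kcQ_1(m)=O(m^{(c-1)/c})$ all hold for every sufficiently large length $m$, not just one. One small remark: the reduction to $c\ge 2$ is harmless but unnecessary. For $c\le 1$ the word $x$ is a single (possibly empty) non-zero block padded by zeros, so $\cN(x)=O(1)$ trivially, which is again eventually below $qm$; the ``degenerate regime'' you worry about poses no real obstacle. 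Everything else — in particular your observation that $Q_ip_i=\prod_j p_j>(\sqrt[c]{m})^c=m$ holds for all $m\ge 2$, so the constraint $a_i<p_i$ used for exactness needs no further largeness assumption — is exactly the verification the paper implicitly relies on.
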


Our next result studies the small-scale structure of words in~$L_q$. We say~$w$ is a \define{factor of~$x$} if there exist~$u,v \in \sstar$ for which~$x = uwv$; we write~$w \preceq x$. If~$u \in \splus$ or~$v \in \splus$ then~$w$ is a \define{proper factor of~$x$}; we write~$w \prec x$.
Call a non-empty word $w$ a \define{square} if there exists~$v \prec w$ for which~$w = vv$; we write~$w = v^2$.

\begin{proposition}\label[proposition]{prp:2x}
	Let $n \geq 4$ and $x \in L_q \cap \Sigma^n$. There exists a proper factor~$w \prec x$ of length~$|w| \geq \left( \frac{1-2q}{2} \right)\sqrt{n}$ for which there are $u,v \in \splus$ with $|u| = |v| \leq |w|$ and $uw = wv \prec x$. Further, $uwv \prec x$.
\end{proposition}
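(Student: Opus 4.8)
The plan is to extract a long periodic factor of $x$ directly from the scarcity of states in a minimal witnessing automaton, using \emph{exactness} to force rigidity, and then to read off $u,w,v$ from that factor.

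First I would fix a witness: an NFA $M$ with $k := \cN(x) < q|x|$ states that exactly accepts $x = x_0\cdots x_{n-1}$, together with an accepting computation $r_0 \xrightarrow{x_0} r_1 \xrightarrow{x_1}\cdots\xrightarrow{x_{n-1}} r_n$ with $r_0$ initial and $r_n$ accepting (as $M$ has no $\varepsilon$-transitions, this uses exactly $n+1$ states). Since $q<1/2$ these $n+1$ states take only $k<n/2$ values, so states recur heavily.

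The engine of the proof is the following rigidity observation. If a state $s$ occurs at positions $\sigma_0<\sigma_1<\cdots<\sigma_{m-1}$ of the run, then each factor $t_i := x_{\sigma_i}\cdots x_{\sigma_{i+1}-1}$ labels a legal $s\to s$ computation, so for every permutation $\tau$ the word $x_0\cdots x_{\sigma_0-1}\, t_{\tau(0)}\cdots t_{\tau(m-2)}\, x_{\sigma_{m-1}}\cdots x_{n-1}$ is accepted by $M$ and has length $n$; by exactness it equals $x$. Hence the $t_i$ pairwise commute, so by the Lyndon--Sch\"utzenberger theorem they are all powers of a single primitive word $p$, and the genuine factor $x_{\sigma_0}\cdots x_{\sigma_{m-1}-1}$ of $x$ equals $p^{A}$ with $A=(\sigma_{m-1}-\sigma_0)/|p|\ge m-1$. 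A variant of the same swap applies to two disjoint loops of a common length $\ell$ sitting in the run: shifting one loop past the other shows that the whole stretch of $x$ spanned by the two loops is periodic with period $\ell$. In both cases one ends with a proper factor $Z\prec x$ of some period $\ell$ with exponent $|Z|/\ell\ge 3$; then, writing $u$ and $v$ for the first and last $\ell$ letters of $Z$ and $w$ for what remains, periodicity gives $uw = wv$ (each side is $Z$ with its last, resp. first, $\ell$ letters deleted), $uwv = Z\prec x$, $|u|=|v|=\ell\le|w|$, and $|w|=|Z|-2\ell$. (If $x$ were itself a perfect power one first passes to a proper sub-power, keeping $uwv\prec x$; this is where $n\ge 4$ enters.)

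It remains to choose the repetition so that $|w|=|Z|-2\ell\ge\bigl(\tfrac{1-2q}{2}\bigr)\sqrt n$, and this quantitative step is the main obstacle. When $\cN(x)$ is small --- say at most $\sqrt n$ --- the most-visited state already recurs at least $(n+1)/k\ge\sqrt n$ times, producing $p^A$ with $A\ge\sqrt n-1$ and hence $|w|=(A-2)|p|\ge A-2$, which suffices. The genuinely hard regime is $\cN(x)$ only moderately below $qn$, where the most-visited state need recur only boundedly often and one must combine many loops. Here I would list the $n+1-k$ loops of the run (the factors between consecutive occurrences of a state), discard for each state the at most one loop of length exceeding $n/2$ --- leaving more than $n+1-2k>(1-2q)n$ short loops --- and pigeonhole these short loops against their lengths to pull out two loops of a common length $\ell$ whose endpoints are separated, along the run, by distance at least $\ell+\bigl(\tfrac{1-2q}{2}\bigr)\sqrt n$ (and at least their common length); feeding that pair into the two-loop swap above gives the required $Z$. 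Balancing the period of the extracted repetition against the span of $x$ it covers is exactly where the $\sqrt n$ (rather than a linear bound) is forced, and the coefficient $\tfrac{1-2q}{2}$ is precisely the output of this trade-off; it is also the step where properness of $uwv$ must be watched in the near-periodic cases.
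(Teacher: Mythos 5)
Your setup and the two ``rigidity'' observations --- loops of the accepting run at a repeated state must commute by exactness and are therefore powers of a common primitive word by Lyndon--Sch\"utzenberger, and an exchange of two disjoint loops of equal length across an intermediate segment makes that whole stretch periodic --- are exactly the ingredients of the paper's proof, and your easy case ($\cN(x)\le\sqrt n$, most-visited state recurs $\ge\sqrt n$ times) goes through. The extraction of $u,w,v$ from a periodic factor $Z$ of period $\ell$ and exponent $\ge 3$ is also sound, so your reduction of the problem to ``find a long factor of small period'' is correct.

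The gap is in your hard case, and it is real. With cutoff $n/2$ you retain more than $(1-2q)n$ short loops, but their lengths may range over all of $\{1,\dots,\lfloor n/2\rfloor\}$, so the pigeonhole yields only about $2(1-2q)$ loops of some common length --- which is strictly less than $2$ as soon as $q\ge 1/4$. So you cannot in general produce even \emph{two} loops of equal length, let alone a pair with your required separation $\ge\ell+\frac{1-2q}{2}\sqrt n$; and you never argue that the chosen pair is disjoint, which is needed before the two-loop swap can be applied and is not automatic when the two loops sit at different states.

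The paper closes this gap with a different dichotomy, keyed to the very threshold $(1-2q)\sqrt n$. Restricting to \emph{loop triples} $(i,j,k)$ with $q_i=q_j=q_k$ (\Cref{lem:combLem} provides at least $(1-2q)n$ indices $k$ that complete such a triple), it splits as follows: either some loop triple contains a loop of length $>(1-2q)\sqrt n$, in which case the two adjacent same-state loops commute by exactness and Lyndon--Sch\"utzenberger finishes directly; or \emph{every} loop occurring in a loop triple has length $\le(1-2q)\sqrt n$. In the latter case the pigeonhole is over at most $(1-2q)\sqrt n$ possible lengths, so one gets $\ge\sqrt n$ loops of a common length $\ell$, and the same short-loop hypothesis is then reused to show that the run spends at most $(1-2q)\sqrt n<\sqrt n$ letters inside loops at any single state --- which is what forces the first and last of the $\ge\sqrt n$ selected loops to be disjoint, with the run segment between them long enough. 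Both the success of the pigeonhole and the disjointness of the chosen pair thus hinge on splitting at $(1-2q)\sqrt n$ rather than at $n/2$ or $\sqrt n$; that is the idea missing from your sketch, and without it the ``hard regime'' is not handled.
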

Note that if $|u| = |v| = |w|$ then the conclusion of \Cref{prp:2x} yields a square. To prove the general case of \Cref{prp:2x}, we again need a classical auxiliary result, in this case due to Lyndon and Sch\"utzenberger \cite{lyndon}.

\begin{theorem}[The First Lyndon-Sch\"utzenberger-Theorem]\label{lynd1}
	Suppose $x,y \in \sstar$. Then $xy = yx$ if and only if there exists $z \in \sstar$ and $k,\ell \in \nn$ for which $x = z^k$ and~$y = z^\ell$.
\end{theorem}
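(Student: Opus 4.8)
The final statement to be proved is the First Lyndon--Sch\"utzenberger Theorem: for $x, y \in \sstar$, we have $xy = yx$ if and only if $x$ and $y$ are both powers of a common word $z \in \sstar$. The backward direction is trivial: if $x = z^k$ and $y = z^\ell$ then $xy = z^{k+\ell} = yx$, and this works even when some of the exponents are zero (so that $x$ or $y$ is empty). The substantive content is the forward direction, and the plan is to prove it by induction on $|x| + |y|$, using a Euclidean-style reduction on word lengths.

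\textbf{Key steps.} First I would dispose of the edge cases: if $x = \varepsilon$ then taking $z = y$ and $k = 0$, $\ell = 1$ works (symmetrically if $y = \varepsilon$), so assume both $x, y \in \splus$. Next, by symmetry assume $|x| \le |y|$. From $xy = yx$, comparing the two sides as strings of the same length, the prefix of length $|x|$ of $yx$ is $x$ itself; but the prefix of length $|x|$ of $xy$ is also $x$, so this gives nothing directly --- instead I read it as: $x$ is a prefix of $yx$, hence (since $|x| \le |y|$) $x$ is a prefix of $y$, so write $y = xw$ for some $w \in \sstar$ with $|w| < |y|$. Substituting into $xy = yx$ yields $x(xw) = (xw)x$, i.e.\ $xxw = xwx$, and cancelling the common prefix $x$ on the left gives $xw = wx$. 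Now $|x| + |w| < |x| + |y|$, so by the induction hypothesis there is a common word $z$ and exponents $i, j$ with $x = z^i$ and $w = z^j$; then $y = xw = z^{i+j}$, and we are done with $k = i$, $\ell = i + j$. The base case of the induction is when $|x| + |y|$ is minimal among the non-trivial cases, e.g.\ $|x| = |y| = 1$: then $xy = yx$ forces $x = y$ (a single letter), so take $z = x$, $k = \ell = 1$; alternatively one can fold the base case into the reduction by noting that when $w = \varepsilon$ the equation $xw = wx$ holds vacuously and the induction hypothesis still applies (with $w = z^0$).

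\textbf{Main obstacle.} This is a classical and short argument, so there is no serious mathematical obstacle; the only thing to be careful about is the bookkeeping of the induction, specifically ensuring the measure $|x| + |y|$ strictly decreases (which it does, since $|w| < |y|$ as $x \in \splus$) and handling the empty-word cases so that the statement as worded --- which allows $x$, $y$, $z \in \sstar$ and $k, \ell \in \nn$ including $0$ --- is covered uniformly. One clean way to present it is to first reduce to $x, y \in \splus$, prove the result there by the strong induction above producing $z \in \splus$ with $k, \ell \ge 1$, and then observe that the empty-word cases are immediate. I would also remark that the theorem says in particular that $x$ and $y$ commute exactly when they lie in a common cyclic sub-monoid of $\sstar$, which is the form in which it will be applied in the proof of \Cref{prp:2x}.
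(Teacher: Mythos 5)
The paper states \Cref{lynd1} without proof, citing it as a classical result of Lyndon and Sch\"utzenberger, so there is no in-paper argument to compare yours against. Your proof is the standard textbook one --- reduce to $x,y \in \splus$ with $|x| \le |y|$, observe $x$ is a prefix of $y$, write $y = xw$, cancel to get $xw = wx$, and apply strong induction on $|x|+|y|$ --- and it is correct: the measure strictly decreases since $|w| = |y| - |x| < |y|$ because $x \in \splus$, and the empty-word cases serve as the base of the induction. The only blemish is expository: the prefix comparison should simply say that the length-$|x|$ prefix of $xy$ is $x$ while the length-$|x|$ prefix of $yx$ is the length-$|x|$ prefix of $y$ (valid since $|x| \le |y|$), so $xy = yx$ forces $x$ to be a prefix of $y$; your phrasing momentarily muddles this before reaching the right conclusion. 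No gap.
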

Observe that the First Lyndon-Sch\"utzenberger-Theorem characterises \emph{bordered words}\footnote{For more on bordered words, see e.g.\ \cite{border}. A more general characterisation is given by the Second Lyndon-Sch\"utzenberger-Theorem \ref{thm:ls2}.}---those which have a non-trivial decomposition of the form $uw = wv$---as those generated by powers of a common word $z$. This will be important in the proof of \Cref{prp:2x}.
We also require the following combinatorial lemma.

\begin{lemma}\label[lemma]{lem:combLem}
	Suppose $x \in \Sigma^n$ for some $n \geq 4$. Assume $x \in L_q$, and let $\mN(x)$ be the witnessing automaton with accepting run $(q_0,\ldots,q_n)$. Then
	\[
		|\set{k \in \nn}{(\exists \, i,j)(i < j < k \land q_i = q_j = q_k)}| \geq (1-2q)n.
	\]
\end{lemma}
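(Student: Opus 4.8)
The plan is to prove this by a direct counting argument on the accepting run, with the hypothesis $x \in L_q$ entering only through the bound on the number of states. Write $N = \cN(x)$ for the number of states of $\mN(x)$; since $x \in L_q$ we have $N < q|x| = qn$. The accepting run $(q_0,\ldots,q_n)$ has $n+1$ terms, each of which is a state of $\mN(x)$, so if $d$ denotes the number of \emph{distinct} states occurring in this run, then $d \le N < qn$.

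Next I would reorganise the $n+1$ positions $0,1,\ldots,n$ according to the state sitting at that position. Fix a state $s$ occurring in the run and list its occurrences as $t_1 < t_2 < \cdots < t_{m}$ (so $m = m_s \ge 1$ depends on $s$). A position $k$ belongs to the set counted in the lemma exactly when $k = t_\ell$ for some $\ell \ge 3$: if $k$ is at least the third occurrence of its state then $t_1 < t_2 < k$ are witnesses $i<j<k$ with $q_i=q_j=q_k$, and conversely any such witnesses force $q_k$ to have already occurred at least twice before position $k$. Hence the set in the lemma has cardinality exactly $\sum_s \max(m_s-2,0)$, where $s$ ranges over the $d$ distinct states appearing in the run.

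Finally, the lower bound falls out: since $\max(m_s-2,0)\ge m_s-2$ for each $s$ and $\sum_s m_s = n+1$,
\[
	\sum_s \max(m_s-2,0) \;\ge\; \sum_s (m_s-2) \;=\; (n+1)-2d \;>\; (n+1)-2qn \;\ge\; (1-2q)n,
\]
using $d<qn$ in the penultimate step. This gives the claimed inequality (with a little room to spare). There is no real obstacle here; the only points needing care are the bookkeeping that a position is counted precisely when it is the third-or-later occurrence of its state, and the observation that the run visits at most $\cN(x)$ distinct states — which, together with $x\in L_q$, is exactly what keeps the state count below $qn$. (The hypothesis $n \ge 4$ is not actually used in this argument, but it is harmless.)
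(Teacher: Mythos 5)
Your argument is correct and is essentially the same pigeonhole counting the paper uses: the run visits at most $\cN(x) < qn$ distinct states, the first two occurrences of each account for fewer than $2qn$ positions, and the remaining positions are third-or-later visits. The paper compresses this into a one-line appeal to the pigeonhole principle; you have simply spelled out the bookkeeping.
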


\begin{proof}
	Consider the list of states $(q_0,\ldots,q_{n})$.
	Since~$q < 1/2$, we have~$2qn < n$. In particular,~$n = 2qn + (1-2q)n$. Hence, by the pigeonhole principle, there exist at least~$(1-2q)n$ indices at which some state is visited a third time.
\end{proof}
We now prove \Cref{prp:2x}. Call triples $(i,j,k)$ as provided by \Cref{lem:combLem} \define{loop triples (for $x$)}.
Before we give the proof of \Cref{prp:2x}, we introduce the following notation: write~$x_{[i,j]} = x_i\cdots x_j$. For instance, if~$n \geq 4$, then~$x_0 x_1 \cdots x_{n-1} = x_{[0,n-1]} = x_{[0,2]}x_{[3,n-1]}$.

\begin{proof}[Proof of \Cref{prp:2x}]
	Let $x \in \Sigma^n$ be as assumed, and suppose $(q_0,\ldots,q_n)$ is the run of~$\mN$ which accepts~$x$. Observe that if $(i,j,k)$ is a loop triple for~$x$ (by \Cref{lem:combLem} there are at least~$(1-2q)n$ many), then the witnessing NFA $\mN(x)$ has completed at least two loops by the time it has read the word~$\x{0,k-1}$. There are two cases.
	\begin{enumerate}
		\item There exists a loop triple $(\is,j,k)$ for which
		\[
			\max\left(\left|\x{\is,j-1}\right|,\left|\x{j,k-1}\right|\right) > (1-2q)\sqrt{n}.
		\]
		Assume w.l.o.g.\ that $\left|\x{j,k-1}\right| \geq \left|\x{\is,j-1}\right|$ and write
		\[
			x = \x{0,\is-1}\x{\is,j-1}\x{j,k-1}\x{k,n-1}.
		\]
		Since the triple~$(i,j,k)$ is a loop triple, $q_{\is} = q_j = q_k$, and thus~$\mN(x)$ also accepts the word $\x{0,\is-1}\x{j,k-1}\x{\is,j-1}\x{k,n-1}$.
	Since $\mN(x)$ exactly accepts $x$, we have
	\[
		\x{j,k-1}\x{\is,j-1} = \x{\is,j-1}\x{j,k-1}
	\]
	and so \Cref{lynd1} implies $\x{\is,j-1} = z^k$ and $\x{j,k-1} = z^\ell$ for some~$z \in \splus$ and~$k,\ell \in \nn$. Since~$k,\ell \ge 1$, the decomposition of~$\x{\is,k-1}$ trivialises into a product of copies of~$z$:
	\[
		\x{\is,k-1} = \x{\is,j-1}\x{j,k-1} = zz^{k+\ell-1} = z^{k+\ell-1}z
	\]
As $\left|z^{k+\ell-1}\right| \geq \left|\x{j,k-1}\right| \geq (1-2q)n > (1-2q)\sqrt{n}$, we are done.
		\item For all loop triples~$(\is,j,k)$ we have
		\[
			\max\left(\left|\x{\is,j-1}\right|,\left|\x{j,k-1}\right|\right) \leq (1-2q)\sqrt{n}.
		\]
		By \Cref{lem:combLem}, there exist~$(1-2q)n$ indices~$k$ for which there exist~$(\is,j)$ such that~$(\is,j,k)$ is a loop triple. Since every loop in a loop triple has length at most~$ (1-2q)\sqrt{n}$, the pigeonhole principle gives an~$\ell \leq (1-2q)\sqrt{n}$ such that there exist at least~$m \geq \sqrt{n}$ such indices $k$ at which a loop of length~$\ell$ was just completed (hence, we only focus on the \emph{second} loops in each loop triple). Let this set of indices be given in ascending order, denoted by~$\mathcal{K} = \{ k_1,\ldots,k_m \}$, with associated loops~$\rho_1, \ldots, \rho_m \prec x$, each of length~$\ell$.
		
		We show that $\rho_1$ and $\rho_m$ must be disjoint, i.e.\ share no states along their traversals in~$\mN(x)$. Let $q_{k_1}$ be the origin state of the loop $\rho_1$. By definition, $\rho_1$ is the second loop in the loop triple~$(\is_1,j_1,k_1)$.  Suppose $\tau$ is the first loop at $q_{k_1}$ so that $\tau\rho_1$ is a loop triple at $q_{k_1}$. Then, if we read~$b > (1-2q)\sqrt{n}$ letters along the loops at state~$q_{k_1}$, then we could concatenate those loops with $\tau$ to obtain a loop triple, one of whose lengths exceeds~$(1-2q)\sqrt{n}$, which contradicts the assumption of this case. Therefore, at state~$q_{k_1}$, we can only read at most~$(1-2q)\sqrt{n}$ letters of the factors contained in~$\rho_1,\ldots,\rho_m$, before moving on to a different state, never to return. However, by construction, for every~$i \leq m$ we know that~$x_{k_i}$ appears in~$\rho_i$, and thus we must read at least~$m \geq \sqrt{n}$ letters throughout all loops~$\rho_1,\ldots,\rho_m$. Since~$q < 1/2$, we have~$(1-2q)\sqrt{n} < \sqrt{n} \leq m$; hence, the first and last loops $\rho_1$ and $\rho_m$ must be disjoint.
		Thus, $x = u \; \rho_1 \; y \; \rho_m \; u'$ where~$u,y,u' \prec x$ and~$|\rho_1| = |\rho_m| = \ell$.
		By exact acceptance of~$\mN(x)$, we have
		\[
			x = u \; (\rho_1)^2 \; y \; u'
		\]
		since $|\rho_1| = |\rho_m|$. Therefore,~$\rho_1 y = y \rho_m$, and thus, with $y' = \rho_1 y$, we have~$y' \rho_m = \rho_1 y'$.
To show that~$y'$ has the desired length, note that~$y \rho_m$ must contain the set~$\{ x_{k_2},\ldots,x_{k_m} \}$; the loop $\rho_1$, since it is the first loop in $\mathcal{K}$, can only contain~$x_{k_1}$. Since $n \geq 4$, we have
		\[
			|y'| = |y \rho_m| \geq |\mathcal{K}| - 1 = m - 1 \geq \sqrt{n} - 2 \geq \frac{\sqrt{n}}{2}. \qedhere
		\]
	\end{enumerate}
\end{proof}
We now apply \Cref{prp:2x} to go even finer: instead of studying the complexity of~$L_q$, we classify the complexity of \emph{words} in~$L_q$, using plain Kolmogorov complexity. Fix an alphabet~$\Sigma$ of cardinality~$k$, and let~$C_k$ denote the \define{plain Kolmogorov complexity} on words in~$\Sigma$:
\[
	C_k(x) = \min \set{\ell(p)}{U_k(p) = x}.
\]
where~$U_k$ is a universal Turing machine on the $k$-element alphabet.
For details on Kolmogorov complexity, see e.g.~\cite{downey+:alg:2010}.

\begin{proposition}\label[proposition]{prp:Kolm}
	If $x \in \Sigma^n \cap L_q$, then
	\[
		C_k(x) \leq n - \frac{(1-2q)}{2}\sqrt{n} + 5 \log_k(n) + O(1).
	\]
\end{proposition}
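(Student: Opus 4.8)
The plan is to feed \Cref{prp:2x} into a compression argument: that proposition hands us a long \emph{bordered} factor of $x$, and a bordered word is periodic and hence cheap to describe. Fix $x \in \Sigma^n \cap L_q$. I may assume $n$ is as large as I like, since for $n$ below any fixed threshold the asserted inequality is subsumed by the trivial bound $C_k(x) \le n + O(1)$ after enlarging the additive constant, and I will want $n \ge 4$ so that \Cref{prp:2x} applies. That proposition yields $w,u,v$ with $u,v \in \splus$, $|u| = |v| \le |w|$, $|w| \ge \frac{1-2q}{2}\sqrt{n}$, $uw = wv$, and $uwv \prec x$; write $x = \alpha\,(uwv)\,\beta$ with $\alpha,\beta \in \sstar$.

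\emph{Step 1: the bordered block $uwv$ has period $|u|$.} From $wv = uw$ we get $uwv = u(wv) = u(uw)$, so deleting the first $|u|$ letters of $uwv$ leaves $uw$; on the other hand $uwv = (uw)v$, so the prefix of $uwv$ of length $|u|+|w|$ is also $uw$. Comparing these two facts letter by letter, $(uwv)_i = (uwv)_{i+|u|}$ for every $i < |u|+|w|$, i.e.\ $uwv$ has period $|u|$, and is therefore the prefix of length $|uwv|$ of $\pi^{\omega}$, where $\pi$ is the length-$|u|$ prefix of $uwv$. (This is also immediate from the structure of the solutions of $uw = wv$ underlying \Cref{lynd1}.) The point of this step is that the \emph{whole} block $uwv$, of length $|w| + 2|u|$, is recoverable from the short word $\pi$ of length $|u| \le |w|$ together with the number $|uwv|$.

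\emph{Step 2: the program, and the count.} I describe a program $p$ with $U_k(p) = x$: it consists of a self-delimiting encoding of $n$ (length $2\log_k n + O(1)$), then the three integers $|\alpha|$, $|u|$, $|uwv|$, all $\le n$, each written in a fixed field of $\lceil \log_k(n+1)\rceil$ symbols (total $3\log_k n + O(1)$), and finally the literal strings $\alpha$, $\pi$, $\beta$. On input $p$ the machine reads $n$, hence the field width, then the three integers, then $\alpha$ and $\pi$; it rebuilds $uwv$ as the length-$|uwv|$ prefix of $\pi^{\omega}$; and it reads the remaining $|\beta| = n - |\alpha| - |uwv|$ symbols as $\beta$ and outputs $\alpha\,(uwv)\,\beta = x$. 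The literal part of $p$ has length $|\alpha| + |u| + |\beta| = (|\alpha|+|\beta|) + |u| = (n - |uwv|) + |u| = n - |w| - |u| \le n - |w| \le n - \frac{1-2q}{2}\sqrt{n}$, using $|uwv| = |w| + 2|u|$ and $|w| \ge \frac{1-2q}{2}\sqrt{n}$. Adding the $5\log_k n + O(1)$ of bookkeeping gives $C_k(x) \le \ell(p) \le n - \frac{1-2q}{2}\sqrt{n} + 5\log_k n + O(1)$.

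I do not expect a serious obstacle here, as the proposition does the heavy lifting; the two points that need a little care are Step 1 — recognising that a border $uw = wv$ forces the enveloping block $uwv$, not merely $w$, to be periodic with the \emph{short} period $|u| \le |w|$, which is exactly what makes the saving grow like $\sqrt{n}$ rather than merely like $|u|$ — and the bookkeeping in Step 2: to land on the claimed $5\log_k n$ rather than $6\log_k n$, one should commit to encoding a single length ($n$) self-delimitingly and then record the remaining lengths in fixed-width fields once $n$ is known.
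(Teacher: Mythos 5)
Your proof is correct and takes essentially the same route as the paper's: feed \Cref{prp:2x} into a compression argument, exploit the periodicity of the bordered factor to drop $\Theta(\sqrt{n})$ literal symbols, and pay $5\log_k n + O(1)$ in bookkeeping. The only cosmetic difference is that the paper stores the suffix copy $w$ of the bordered block $zw = w'z$ and invokes the Second Lyndon--Sch\"utzenberger theorem for reconstruction (treating the square case separately in a footnote), whereas you store the period $\pi$ and reconstruct $uwv$ directly as a prefix of $\pi^{\omega}$, which handles the square case uniformly.
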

Its proof requires an extension of \Cref{lynd1}, which gives a sufficient and necessary criterion for the decomposition of words with same prefix and suffix. As it will be useful to us in the sequel outside of the proof of \Cref{prp:Kolm}, we state it right here in the version of~\cite{Shallit_2008}.

\begin{theorem}[The Second Lyndon-Sch\"utzenberger-Theorem]\label{thm:ls2}
	Let $x,y,z \in \sstar$. Then~$xy = yz$ iff there exist $e \in \nn \setminus \{ 0 \}$, $u \in \splus$ and $v \in \sstar$ such that
	\[
		x = uv, \;\;\;\;\; z = vu, \;\;\;\;\; \text{ and } \;\;\;\;\; y = x^eu = uz^e.
	\]
\end{theorem}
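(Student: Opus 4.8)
The plan is to prove the two implications separately: the converse direction is a one-line computation, while the forward direction is the substantive one, and I would obtain it by a Euclidean-style length descent on $|y|$ (the special case $x = z$ recovers \Cref{lynd1}).

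\emph{Converse direction.} Suppose $x = uv$, $z = vu$ and $y = x^e u = u z^e$ for some $e \ge 1$, $u \in \splus$, $v \in \sstar$. The single identity that drives everything is $xu = (uv)u = u(vu) = uz$; an immediate induction on $j$ upgrades it to $x^j u = u z^j$ for every $j \ge 0$. Hence
\[
	xy = x \cdot x^e u = x^{e+1} u = u z^{e+1} = (u z^e)z = yz,
\]
as required. (Note in passing that $|x| = |uv| = |vu| = |z|$, so the two sides have equal length, as they must.)

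\emph{Forward direction.} Assume $xy = yz$. Comparing lengths of the two sides gives $|x| = |z|$; put $n := |x| = |z|$, $m := |y|$ and $w := xy = yz$. The structural observation is that $y$ occurs as both a prefix of $w$ (since $w = yz$) and a suffix of $w$ (since $w = xy$ and $|x| = |z|$); consequently $w$ has period $n$ and is a prefix of the infinite word $xxx\cdots$. To read off the decomposition I would descend. Whenever the current middle word has length $\ge n$, it and $x$ are both prefixes of the current word with $x$ the shorter, so $x$ is a prefix of it; peeling off one copy of $x$ and cancelling on the left of $xy = yz$ shortens the middle word by $n$. After $e := \lfloor m/n \rfloor$ steps we arrive at $x\tilde y = \tilde y z$ where $\tilde y$ is a prefix of $x$ with $|\tilde y| = m - en < n$ and $y = x^e \tilde y$. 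Now set $u := \tilde y$ and let $v$ be the remaining suffix of $x$, so $x = uv$; the equation $x\tilde y = \tilde y z$ becomes $(uv)u = uz$, whence $z = vu$, and $y = x^e u$, which equals $u z^e$ by the identity from the converse direction. This is exactly the asserted form.

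\emph{Boundary cases and the main obstacle.} The genuine work is the length bookkeeping: one must track $n$, $m$, and the residue $m - en$, verify that the descent terminates (immediate, since each step shrinks the middle word by $n \ge 1$), and check that the output has precisely the claimed shape — in particular that the two expressions $x^e u$ and $u z^e$ for $y$ coincide, which is exactly the identity $x^j u = u z^j$. One must also dispose of the degenerate situations in which the statement has to be read with the evident non-degeneracy proviso, namely $x = \varepsilon$ (forcing $z = \varepsilon$) and $y$ a power of $x$ (including $y = \varepsilon$), where $n = 0$ or $u = \varepsilon$; these are settled by direct inspection and are irrelevant to the uses of this theorem later on — for instance \Cref{prp:2x} supplies the length bound $|u| = |v| \le |w|$, which places us in the regime $m \ge n$ with $e \ge 1$.
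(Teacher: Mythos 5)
The paper offers no proof of this theorem: it is quoted verbatim (as ``the version of \cite{Shallit_2008}'') and used as a black box, so there is no in-paper argument to compare against. Your proof is the standard textbook one --- left-cancellation descent on $|y|$ --- and its substantive steps are all correct: from $xy=yz$ one gets $|x|=|z|=:n$; whenever $|y|\ge n$ both $x$ and $y$ are prefixes of $xy=yz$, so $x$ is a prefix of $y$ and cancelling one copy of $x$ on the left shortens $y$ by $n$; the descent stops after $e=\lfloor |y|/n\rfloor$ steps with $y=x^e u$, $u$ a prefix of $x$, $x=uv$, $z=vu$; and the identity $x^ju=uz^j$ (from $xu=uvu=uz$) gives $y=uz^e$. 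The converse computation is also right.

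The point you should not wave away is the quantifier $e\in\nn\setminus\{0\}$ together with $u\in\splus$ in the statement as printed. Your descent yields $e=\lfloor|y|/|x|\rfloor$, which is $0$ whenever $|y|<|x|$, and no re-choice of $u,v$ repairs this: for $x=ab$, $y=a$, $z=ba$ we have $xy=yz$, yet $|x^e u|\ge|x|+1>|y|$ for every $e\ge1$ and $u\in\splus$, so the forward implication is \emph{false} as literally stated; likewise $y=\varepsilon$ (or $y=x$) with $x=z$ defeats the required form. These are counterexamples to the transcription, not boundary cases that can be ``settled by direct inspection'' --- the version in \cite{Shallit_2008} has $e\ge0$ and $x,y,z\in\splus$, and that is the statement your argument actually proves. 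Two loose ends worth writing out even then: when $|x|$ divides $|y|$ your $u$ comes out empty, and one must pass to $(u,v,e)\mapsto(x,\varepsilon,e-1)$ (using that $z=x$ in that case) to restore $u\in\splus$; and the step ``$w$ has period $n$'' is motivation you never use. You are right that the paper's applications (e.g.\ \Cref{prp:2x}, where $|u|=|v|\le|w|$ puts the middle word in the regime $|y|\ge|x|$, hence $e\ge1$) avoid the degenerate cases entirely.
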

With $|\Sigma| = k$ as before, note that the function which maps $x \in \Sigma^*$ to its $C_k$-witness is an injection. Hence, \Cref{prp:Kolm} immediately yields the following bound on~$|L_q|$.
\begin{corollary}\label[corollary]{usecor27}
	The set $L_q \cap \Sigma^n$ has cardinality in $o\left(|\Sigma|^n\right)$.
\end{corollary}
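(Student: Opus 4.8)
The plan is to repackage \Cref{prp:Kolm} as a counting bound, using the fact that distinct words have distinct shortest descriptions. Write $k = |\Sigma|$, and let $C$ be a constant for which
\[
	C_k(x) \;\le\; f(n) \;:=\; n - \tfrac{1-2q}{2}\sqrt{n} + 5\log_k(n) + C
\]
holds for every $n$ and every $x \in \Sigma^n \cap L_q$, as provided by \Cref{prp:Kolm}. For each such $x$ I would fix a shortest program $p_x$ over the $k$-letter alphabet with $U_k(p_x) = x$, so that $\ell(p_x) = C_k(x) \le f(n)$. Since $U_k$ is single-valued, the assignment $x \mapsto p_x$ is injective; hence it embeds $L_q \cap \Sigma^n$ into the set of words over a $k$-letter alphabet of length at most $\lfloor f(n)\rfloor$.

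Next I would bound that target set: it has at most $\sum_{i=0}^{\lfloor f(n)\rfloor} k^i \le k^{f(n)+1}$ elements (using $k \ge 2$), so
\[
	|L_q \cap \Sigma^n| \;\le\; k^{f(n)+1} \;=\; k^{C+1}\, n^5\, k^{-\frac{1-2q}{2}\sqrt{n}}\, k^{n}.
\]
Since $q \in (0,1/2)$, the constant $c := \tfrac{1-2q}{2}$ is strictly positive, and $k^{C+1} n^5 k^{-c\sqrt{n}} \to 0$ as $n \to \infty$, because the factor $k^{-c\sqrt{n}}$ decays faster than the polynomial $n^5$ grows (equivalently, $5\ln n - c\sqrt{n}\,\ln k \to -\infty$). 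Therefore $|L_q \cap \Sigma^n| \in o\left(|\Sigma|^n\right)$, as claimed.

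There is no serious obstacle here: the entire argument is a consequence of \Cref{prp:Kolm}. The two points deserving a sentence of care are that $x \mapsto p_x$ is genuinely injective — this is exactly where single-valuedness of the universal machine $U_k$ enters, and what licenses bounding $|L_q \cap \Sigma^n|$ by a program count — and the elementary asymptotic estimate $n^5 k^{-c\sqrt{n}} = o(1)$ for the positive constant $c$.
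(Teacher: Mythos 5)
Your proof is correct and follows exactly the same route as the paper: the paper dispatches the corollary with the one-line remark that $x \mapsto (\text{shortest }U_k\text{-program for }x)$ is injective, so the bound on $C_k$ from \Cref{prp:Kolm} immediately caps $|L_q \cap \Sigma^n|$ by a program count. You have simply spelled out the arithmetic the paper leaves implicit, and that arithmetic is right ($k^{f(n)+1} = k^{C+1}\,n^5\,k^{-c\sqrt{n}}\,k^n$ with $c>0$, hence $o(k^n)$).
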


For the proof of \Cref{prp:Kolm}, we require the following piece of notation. Let~$\ip{\cdot}$ denote the \define{integer part function}; e.g.~$\ip{\frac{3}{2}} = 1$.

\begin{proof}[Proof of \Cref{prp:Kolm}]
	Assume that \Cref{prp:2x} showed there is a word $z \prec x$ which occurs twice, but not as a square\footnote{The case where the square $z^2$ appears is even easier, as less information needs to be coded.}.
In order to code $x$, one only needs to code $z$ as well as the starting positions of its first and second copy inside $x$, plus the remaining bits. The fact that~$|z| \geq (\frac{1-2q}{2})\sqrt{n}$ (which follows from \Cref{prp:2x}) is crucial here.
	Since $z$ appears twice inside~$x$, there exist~$w,w' \prec x$ such that
	\[
		zw = w'z.
	\]We can locate the two copies of $z$ inside $x$ explicitly: define $\ell,\ell',t < n$ such that
	\begin{itemize}
		\item $\ell$ is the starting index of the first copy of $z$ inside $x$;
		\item $\ell'$ is the starting index of the second copy of $z$ inside $x$; and
		\item $t$ is the fist index after the second copy of $z$ inside $x$.
	\end{itemize}
	In particular, $z = \x{\ell,\ell + |z| - 1} = \x{\ell', t-1}$, which we use to write 
	\begin{align*}
		x &= \x{0,\ell-1}zw\x{t,n-1} = \x{0,\ell-1}w'z\x{t,n-1}.\\
		\intertext{For ease of readability, we rewrite this again as}
		x &= x_1zwx_2 = x_1w'zx_2.
	\end{align*}
	We now isolate an upper bound on $C_k(x)$. Let~$m = \ip{\log_k(n)} + 1$, and define the following shorthand: for~$n < k^m - 1$, denote by~$c_n$ the~$k$-ary expression of $n$ in a string of length\footnote{I.e.\ add leading zeroes to fill up the string to length~$m$, if needed. Note that $m$ is \emph{defined} to be sufficiently large for this coding to work.} $m$. Then consider the string
	\[
		c = 0^m1c_{|x_1|}c_{|z|}c_{|w|}c_{|x_2|}x_1wx_2.
	\]
	Since $|z| \geq (\frac{1-2q}{2})\sqrt{n}$, we know that $|x_1wx_2| \leq n - (\frac{1-2q}{2})\sqrt{n}$. Combining this with the fact that
	\[
		|0^m1c_{|x_1|}c_{|z|}c_{|w|}c_{|x_2|}| = 5m + 1
	\]
	we obtain
	\[
		|c| \leq n - \frac{(1-2q)}{2}\sqrt{n} + 5m + 1 \leq n - \frac{(1-2q)}{2}\sqrt{n} + 5\log_k(n) + O(1).
	\]
	One can now compute $x$ from $c$ via the Second Lyndon-Sch\"utzenberger-Theorem.
\end{proof}

From \Cref{usecor27}, we now deduce the Shannon effect for $\cN$. Originally conjectured by Shannon \cite{shannon} and proven (and named) by Lupanov for Boolean functions~\cite{lupanov1,lupanov2}, the Shannon effect says that most strings are of almost maximal complexity. We give a definition due to Wegener \cite{wegener}.
\begin{definition}\label[definition]{dfn:aa}
	Let $P \subset \Sigma^*$. We say that~\define{almost all $x$ have property $P$} if
	\[
		\lim_{n \rightarrow \infty} \frac{|P \cap \Sigma^n|}{|\Sigma|^n} = 1.
	\]
\end{definition}
\begin{definition}
	Let $\Gamma$ be a complexity measure defined on $\Sigma^*$. For~$n \in \nn$, let~$\Gamma\left( \Sigma^n \right) = \max_{x \in \Sigma^n} (\Gamma(x))$. Then~$\Gamma$ has the \define{Shannon effect} if for almost all~$x \in \sstar$ we have
	\[
		\Gamma(x) \geq \Gamma\left(\Sigma^{|x|}\right) - o\left(\Gamma\left(\Sigma^{|x|}\right)\right).
	\]
\end{definition}
By exhibiting upper and lower bounds of complexity for \emph{all} words, it is readily seen that (plain and prefix-free) Kolmogorov complexity satisfy the Shannon effect~\cite{kolmogorov:three:1965,sol1,sol2,levinthesis,levin2}, as do~$A_D$ \cite{shallitWang} and $A_n$ \cite{hydeMA,bjornIncompr}. The cardinality argument of \Cref{usecor27} shows:
\begin{theorem}\label{thm:sh}
	$\cN{}$ satisfies the Shannon effect.
\end{theorem}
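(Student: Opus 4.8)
The plan is to combine the linear upper bound $\cN(\Sigma^n) \le n/2 + 1$ with the density estimate of \Cref{usecor27}. The upper bound follows immediately from \Cref{lem:poyt}(1) together with $\cN(x) \le \cNu(x)$: taking the maximum over $x \in \Sigma^n$ gives $\cN(\Sigma^n) \le n/2 + 1$. For the lower side, note that $x \notin L_q$ holds precisely when $\cN(x) \ge q|x|$, so \Cref{usecor27} (which says $|L_q \cap \Sigma^n| \in o(|\Sigma|^n)$ for each fixed rational $q \in (0,1/2)$) means, by \Cref{dfn:aa}, that for every such $q$ almost all $x$ satisfy $\cN(x) \ge q|x|$. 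In particular, picking any one such $q$, for all large $n$ at least one word of length $n$ lies outside $L_q$, so $\cN(\Sigma^n) \ge qn$; hence $\cN(\Sigma^n) \to \infty$ (and, with the upper bound, even $\cN(\Sigma^n)/n \to 1/2$, though only the divergence is needed).

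First I would fix $\epsilon > 0$ and choose a rational $q$ with $(1-\epsilon)/2 < q < 1/2$. For almost all $x$ we then have $\cN(x) \ge q|x|$, and rewriting $|x| \ge 2\bigl(\cN(\Sigma^{|x|}) - 1\bigr)$ via the upper bound yields $\cN(x) \ge 2q\,\cN(\Sigma^{|x|}) - 2q > (1-\epsilon)\,\cN(\Sigma^{|x|}) - 1$ for almost all $x$. To package this into the single error term $o\bigl(\cN(\Sigma^{|x|})\bigr)$ demanded by the definition of the Shannon effect, I would diagonalise over $\epsilon = 1/j$: choose an increasing sequence $n_1 < n_2 < \cdots$ so that for all $n \ge n_j$ at least a $(1-1/j)$-fraction of $\Sigma^n$ satisfies $\cN(x) \ge (1-1/j)\,\cN(\Sigma^n) - 1$, and set $f(n) = \tfrac1j\,\cN(\Sigma^n) + 1$ for $n_j \le n < n_{j+1}$. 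Then $f(n)/\cN(\Sigma^n) = 1/j + 1/\cN(\Sigma^n) \to 0$ since both $j \to \infty$ and $\cN(\Sigma^n) \to \infty$, so $f(n) = o\bigl(\cN(\Sigma^n)\bigr)$; and for $n_j \le n < n_{j+1}$ at least a $(1-1/j)$-fraction of length-$n$ words $x$ satisfy $\cN(x) \ge \cN(\Sigma^{|x|}) - f(|x|)$, so this fraction tends to $1$. That is exactly the Shannon effect for $\cN$.

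The only genuine subtlety — and hence the "main obstacle," modest as it is — is the bookkeeping in this final diagonalisation: \Cref{usecor27} supplies, for each fixed $q$, a density bound whose rate of convergence may depend on $q$, so one cannot let $q \to 1/2$ uniformly and must instead interleave the thresholds $q = 1/2 - 1/j$ as above while simultaneously using $\cN(\Sigma^n) \to \infty$ to absorb the additive constant. Everything else is a direct consequence of the two cited bounds.
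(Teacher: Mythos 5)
Your proof is correct and follows essentially the same route as the paper: the upper bound $\cN(\Sigma^{n}) \le n/2 + 1$ via \Cref{lem:poyt} and $\cN \le \cNu$, the density estimate from \Cref{usecor27}, and a limiting argument in the parameter $q$. The one place you differ is in rigor rather than method: the paper fixes a single $q = 1/(2+\epsilon)$, deduces $n/(2+\epsilon) \le \cN(x) \le n/2 + 1$ for almost all $x$, and then passes somewhat informally to the conclusion, whereas your explicit diagonalisation over $\epsilon = 1/j$ --- interleaving the thresholds $q = 1/2 - 1/j$ over successive ranges $[n_j, n_{j+1})$ and using $\cN(\Sigma^n) \to \infty$ to absorb the additive $+1$ --- is exactly the bookkeeping needed to convert the family of per-$q$ density bounds (whose convergence rates may depend on $q$) into a single error function $f(n) = o\bigl(\cN(\Sigma^n)\bigr)$ valid for almost all $x$, which is what \Cref{dfn:aa} and the definition of the Shannon effect actually demand. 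So your write-up is a tightened version of the paper's proof, not a different one.
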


\begin{proof}
	Fix $q = 1/(2 + \epsilon)$ for some small $\epsilon > 0$. Since
	\[
		\cN(x) \leq \cNu(x) \leq (|x|/2) + 1
	\]
	by \Cref{lem:poyt}, identifying a suitable lower bound suffices. By \Cref{usecor27}, for $o(|\Sigma|^n)$-many words~$x \in \Sigma^n$ we have~$x \in L_q$. Hence, for almost all (as per \Cref{dfn:aa}) $x \in \Sigma^n$,
	\[
		\frac{n}{2 + \epsilon} \leq \cN(x) \leq \frac{n}{2} + 1
	\]
	and so, for large enough $n$ and $x \in \Sigma^n$, $\cN(x) \in (n/2, n/2 + 1)$, as required.
\end{proof}

\section[Sets of Low-Complexity Words Are Not Context-Free]{$L_q$ Is Not Context-Free}\label{sec:Lq}

Fix $q \in (0,1/2)$ and suppose w.l.o.g.\ that $0,1 \in \Sigma$. In this section, we demonstrate that $L_q$ cannot be generated by a context-free grammar (CFG); hence,~$L_q$ is not context-free. To this end, we first define the concept of a \emph{rich} CFG. We then prove that if a CFG generates~$L_q$, it must be rich. Finally, we show that any rich CFG generates words of arbitrarily high complexity, which contradicts the fact that the CFG generates~$L_q$.

We provide the required definitions. (For more details, see e.g.~\cite{Shallit_2008}.) A \define{context-free grammar} (CFG) is a tuple~$\grammar = (V_T,V_N,S,P)$ where:
\begin{itemize}
	\item $V_T$ is the set of \define{terminal symbols}.
	\item $V_N$ is the set of \define{non-terminal symbols}.
	\item $S \in V_N$ is the \define{start symbol}.
	\item $P$ is a finite set of \define{productions}. 
\end{itemize}
We also insist that~$V_T \cap V_N = \emptyset$, and we define the set of \define{symbols} by~$V = V_T \cup V_N$. Elements of~$V^*$ are called \define{sentential forms}. Productions in~$P$ are pairs~$(A,\gamma)$ where~$A \in V_N$ and~$\alpha \in V^*$. We denote such a production by
\[
	A \rightarrow \gamma.
\]

The \define{derivation relation}~$\Longrightarrow$ is defined as follows: if~$\alpha,\beta \in V^*$ then \define{$\beta$ is derived from~$\alpha$} if~$\alpha = \alpha' A \alpha''$ and~$\beta = \alpha' \gamma \alpha''$ for some~$\alpha',\alpha'' \in V^*$, and there exists a production~$A \rightarrow \gamma$ in~$P$. We write
\[
	\alpha \Longrightarrow \beta.
\]
The transitive and reflexive closure of~$\Longrightarrow$ is denoted by~$\deriv$.

A language that is recognisable by a CFG is called a \define{context-free language}, abbreviated ``CFL''.

\begin{definition}
	A CFG has \define{no useless nonterminals} if:
	\begin{enumerate}
		\item each nonterminal is reachable from the starting symbol; and
		\item a terminal string can be derived from each nonterminal.
	\end{enumerate}
\end{definition}

\begin{definition}
	Let $\grammar$ be a CFG. A nonterminal symbol~$A \in \grammar$ is \define{rich} if for some~$v,w,x,y \in \sstar$ we have both~$vwxy \neq \varepsilon$ and $A \deriv vAx \mid wAy$ as well as:
	\begin{enumerate}
		\item if $vw \neq \varepsilon$ then $vw \neq wv$; and
		\item if $xy \neq \varepsilon$ then $xy \neq yx$.
	\end{enumerate}
	A \define{rich CFG} has a rich nonterminal but no useless nonterminals. A \define{rich CFL} is generated by a rich CFG.
\end{definition}
Our motivation for rich CFGs follows from \Cref{lynd1}, however, we note here that, in style, our richness characterisation is similar to classical results by Ginsburg \cite[Theorem~5.5.1]{ginsburg}, who characterised boundedness of CFLs via syntactical properties of grammars. Our syntactical notion of richness, similarly, characterises the complexity of generated languages, in our case~$L_q$.
The equivalence in \Cref{lynd1} implies that a rich non-terminal can construct words which do not collapse to repeating copies of a common factor $z$. This is needed in \Cref{subsubsub}, where we construct high-complexity words.

\subsection[Only Rich CFGs Can Generate Can Generate Sets of Low-Complexity Words]{Only Rich CFGs Can Generate~$L_q$}

We require the following normal form theorem due to Greibach~\cite{greibach} (see \cite[p.\ 277]{introL} for a modern exposition).

\begin{theorem}[Greibach Normal Form Theorem]\label{thm:gbn}
	Every CFG that has no~$\varepsilon$-productions can be expressed in \define{Greibach Normal Form}: all its production rules are of the form $A \rightarrow x \overline{A}$ where~$x \in \Sigma$ and $\overline{A}$ is a finite word of nonterminals. 
\end{theorem}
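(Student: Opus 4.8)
The plan is to establish this classical result by the standard pipeline of grammar transformations, converting an arbitrary $\varepsilon$-production-free CFG $\grammar$ into an equivalent one every one of whose right-hand sides begins with a terminal.

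First I would normalise $\grammar$. Since it has no $\varepsilon$-productions, one may eliminate unit productions $A \rightarrow B$ by taking transitive closures, then replace every terminal occurring inside a right-hand side of length at least two by a fresh nonterminal with a single production to that terminal, and finally split up long right-hand sides, arriving at Chomsky Normal Form --- every production is $A \rightarrow BC$ with $B,C$ nonterminals or $A \rightarrow x$ with $x$ a terminal --- all without changing the generated language.

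Next, fix an enumeration $A_1,\ldots,A_n$ of the nonterminals. I would rewrite the productions so that every production $A_i \rightarrow A_j\gamma$ has $j > i$, maintaining the invariant that for each $k < i$ every $A_k$-production begins with a terminal or with some $A_{k'}$ having $k' > k$. At stage $i$: whenever a rule $A_i \rightarrow A_j\gamma$ has $j < i$, substitute for the leading $A_j$ each right-hand side of $A_j$; by the invariant the leading symbol then becomes a terminal or a nonterminal of strictly larger index, so after finitely many substitutions (the index is bounded by $n$) every $A_i$-rule begins with a terminal or with $A_j$ for some $j \ge i$. The rules with $j = i$ constitute direct left recursion $A_i \rightarrow A_i\gamma_1 \mid \cdots \mid A_i\gamma_k$; listing the remaining $A_i$-rules as $A_i \rightarrow \delta_1 \mid \cdots \mid \delta_m$, I eliminate the left recursion with a fresh nonterminal $B_i$, replacing these by $A_i \rightarrow \delta_\ell \mid \delta_\ell B_i$ and $B_i \rightarrow \gamma_\ell \mid \gamma_\ell B_i$; a direct inspection of the strings derivable from $A_i$ shows the language is preserved, and now each $A_i$-rule begins with a terminal or with $A_j$ for some $j > i$.

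Finally I would perform a downward sweep $i = n-1, n-2, \ldots, 1$: since $A_n$'s rules already begin with terminals, I substitute into each rule $A_i \rightarrow A_j\gamma$ (necessarily $j > i$) the right-hand sides of $A_j$, which by downward induction are already of the desired form; afterwards every $A$-rule begins with a terminal, so one last sweep substituting the leading $A_k$ in each $B_i$-rule puts those rules in the desired form as well. What remains is a grammar generating the same language, all of whose productions read $A \rightarrow x\overline{A}$ with $x$ a terminal and $\overline{A}$ a (possibly empty) word of nonterminals. I expect the main obstacle to be purely bookkeeping: justifying termination of the substitution at each stage through the index-ordering invariant, and verifying that left-recursion elimination genuinely preserves the generated language; both are routine but require careful tracking of which leading symbols can occur after each transformation.
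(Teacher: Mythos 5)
The paper does not prove this theorem; it is cited to Greibach (1965) with a pointer to a textbook exposition, and the paper moves on immediately to \Cref{gbn}. Your sketch is exactly the standard textbook pipeline — conversion to Chomsky Normal Form, the ordered substitution/left\-/recursion\-/elimination pass to enforce $A_i \rightarrow A_j\gamma$ only for $j>i$, then a downward back\-/substitution sweep followed by cleanup of the fresh $B_i$ variables — so you have, in effect, reconstructed the argument the paper delegates to its references rather than proposed an alternative route.

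One remark on a point you yourself flag as "bookkeeping": your final step ("one last sweep substituting the leading $A_k$ in each $B_i$\-/rule") presupposes that after the first two phases no $B_i$\-/production begins with some $B_j$. This is true, but it is the one invariant that genuinely requires proof rather than being obvious from the construction: one must argue, by induction along the stages, that the tails $\gamma_\ell$ extracted when eliminating direct left recursion at stage $i$ always begin with a terminal or an original nonterminal $A_k$, essentially because the $B_j$ symbols are only ever appended at the very end of right\-/hand sides and because CNF guarantees every non\-/terminal right\-/hand side has length at least two. Spelling this invariant out is where the proof earns its keep; your outline is otherwise correct and complete as a plan.
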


The following corollary is immediate.
\begin{corollary}\label[corollary]{gbn}
	Every CFL omitting $\varepsilon$ is generated by a CFG in Greibach Normal~Form.
\end{corollary}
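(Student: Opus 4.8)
The plan is to reduce directly to the Greibach Normal Form Theorem (\Cref{thm:gbn}) after first removing $\varepsilon$-productions. Let $L$ be a context-free language with $\varepsilon \notin L$, and fix a CFG $\grammar = (V_T, V_N, S, P)$ with $L(\grammar) = L$. First I would carry out the classical elimination of $\varepsilon$-productions: compute the set $N \subseteq V_N$ of \emph{nullable} nonterminals (those $A$ with $A \deriv \varepsilon$) by the evident fixed-point iteration, then form $P'$ by adding, for each production $A \rightarrow \gamma$ in $P$ and each way of deleting some subset of the occurrences of nullable symbols from $\gamma$, the corresponding production, while discarding any production whose resulting right-hand side is $\varepsilon$. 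This yields a grammar $\grammar' = (V_T, V_N, S, P')$ with no $\varepsilon$-productions, and a routine induction on derivation length shows $L(\grammar') = L(\grammar) \setminus \{\varepsilon\} = L$; the hypothesis $\varepsilon \notin L$ is exactly what guarantees that no word of $L$ is lost when the empty right-hand sides are thrown away.

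Next I would simply apply \Cref{thm:gbn} to $\grammar'$, obtaining an equivalent CFG in Greibach Normal Form, which therefore generates $L$. This completes the argument. I would not bother to remove useless nonterminals, since the statement of the corollary makes no such demand; if one did want that, it could be arranged by a further standard cleanup.

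There is essentially no obstacle here: the only content beyond invoking \Cref{thm:gbn} is the $\varepsilon$-production removal, which is entirely routine, and the restriction to $\varepsilon$-free languages is genuinely necessary — a grammar all of whose productions have the form $A \rightarrow x\overline{A}$ with $x \in \Sigma$ can never derive the empty word — so no stronger statement is possible. Hence the corollary is, as claimed, immediate.
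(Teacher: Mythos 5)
Your proof is correct and fills in the standard details behind what the paper simply declares immediate: given a CFG for an $\varepsilon$-free CFL, eliminate $\varepsilon$-productions in the classical way (the hypothesis $\varepsilon \notin L$ being exactly what ensures no word is lost), then apply \Cref{thm:gbn} to the resulting grammar. This is precisely the intended reading, and your observation about the necessity of the $\varepsilon$-free hypothesis is the one real point worth noting.
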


Our main result in this subsection is the following.

\begin{theorem}\label{thm:gammaRich}
	If $L_q$ is generated by a context-free grammar~$\grammar$, then $\grammar$ is rich.
\end{theorem}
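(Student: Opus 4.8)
The plan is to argue by contraposition: assuming $\grammar$ generates $L_q$ but is \emph{not} rich, I will derive a contradiction with \Cref{prp:2x} (or more precisely with the combinatorial structure forced on words of $L_q$ by that proposition). By \Cref{gbn} we may assume $\grammar$ is in Greibach Normal Form, and by the standard cleanup we may assume $\grammar$ has no useless nonterminals. So the failure of richness means: \emph{every} nonterminal $A$ that admits a pair of "pumping" derivations $A \deriv vAx \mid wAy$ with $vwxy \neq \varepsilon$ in fact satisfies $vw = wv$ and $xy = yx$ whenever those are nonempty. By \Cref{lynd1}, this says each such pumping pair is built from powers of a common word on the left and a common word on the right — i.e. pumping at $A$ can only insert (powers of) a fixed word $z_A$ on the left of $A$ and a fixed word $z_A'$ on the right.

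First I would make precise the notion of a pump in a GNF grammar generating an infinite language: since $L_q$ is infinite (it contains $x^m$ for large $m$ by \Cref{prp:longgg}), $\grammar$ has at least one recursive nonterminal, and by iterating derivations one obtains, for each sufficiently long $x \in L_q$, a "pumpable" decomposition $x = \alpha \, u \, \beta \, w \, \gamma$ (with the two pump sites coming from a single recursive nonterminal $A$, in the manner of the context-free pumping lemma) where $uw \neq \varepsilon$ and, because $\grammar$ is non-rich, $u$ and $w$ are powers of a common primitive word. The next step is the heart of the argument: I want to show that non-richness forces every long word in $L_q$ to be "simple" in a sense incompatible with \Cref{prp:2x}. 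Concretely, I would use non-richness to show that whenever we pump \emph{up} along $A$ many times, the resulting word is essentially $\alpha z^N \beta z'^{N'} \gamma$ for fixed $\alpha,\beta,\gamma,z,z'$ and growing $N,N'$ — a \emph{bounded} language in Ginsburg's sense. Ginsburg's theorem (alluded to in the text, \cite[Theorem 5.5.1]{ginsburg}) or a direct induction on GNF then shows that a non-rich CFG with no useless nonterminals generates a bounded language, i.e. $L(\grammar) \subseteq w_1^* w_2^* \cdots w_t^*$ for finitely many fixed words $w_1,\dots,w_t$.

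Then I would derive the contradiction: $L_q$ is \emph{not} bounded. Indeed, take any bounded set $B = w_1^* \cdots w_t^*$; I claim $L_q \not\subseteq B$. One clean way: by the Gap Lemma (\Cref{prp:gapsPrp}), for suitable large $n$ \emph{every} word $x \in \Sigma^n$ with $\supp(x) \le c$ lies in $L_q$, and there are far more such words (for fixed $c \ge 2$, on the order of $\binom{n}{c}|\Sigma|^c$) than the number of length-$n$ words in $B$, which is polynomial in $n$ of degree at most $t-1$. Since $\binom{n}{2}$ already outgrows any fixed polynomial... wait, it doesn't — $t$ is fixed but could be large. Better: use \Cref{prp:longgg} with two incommensurable base words, e.g. the words $(01)^m$ and $(001)^m$ both lie in $L_q$ for large $m$, and more generally $L_q$ contains $x^m$ for every $x$ once $m > |x|/q$; a bounded language $w_1^* \cdots w_t^*$ cannot contain $x^m$ for \emph{all} primitive $x$ and large $m$, since that would force each primitive word to be a power of some $w_i$, impossible as there are only finitely many $w_i$. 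This last incommensurability step is the one I expect to require the most care: one must check that "non-rich $+$ no useless nonterminals $\Rightarrow$ bounded" is genuinely valid in GNF (the delicate point is that a nonterminal may be recursive in several ways, and all of them must be constrained simultaneously by \Cref{lynd1}), and that the boundedness is uniform enough to contradict the abundance of powers in $L_q$. Once boundedness is established, the contradiction with \Cref{prp:longgg} is immediate.
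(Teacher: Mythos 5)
Your proposal has a genuine gap, and it sits at the very first step of the contrapositive. You read non-richness as asserting that every self-embedding pair $A \deriv vAx \mid wAy$ with $vwxy \neq \varepsilon$ satisfies \emph{both} $vw = wv$ \emph{and} $xy = yx$ (when nonempty). But negating the paper's definition of a rich nonterminal gives only the weaker condition: for every such pair, $vw = wv$ \emph{or} $xy = yx$. Richness requires failure of commutativity on both sides simultaneously, so non-richness forces commutativity on at least one side, and which side may vary from pair to pair. With the correct reading, ``non-rich $+$ no useless nonterminals $\Rightarrow$ bounded'' is false. Take the grammar with productions
\[
A \to abAcd \;\mid\; baAcd \;\mid\; e,
\]
rewritten in Greibach Normal Form with the obvious single-letter auxiliary nonterminals. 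Every derivation $A \deriv vAx$ has $v \in \{ab,ba\}^{+}$ and $x = (cd)^{|v|/2}$, so all right-hand pumps are powers of $cd$ and always commute; condition (2) of richness therefore never holds, so no nonterminal is rich. Yet the generated language is $\set{w\,e\,(cd)^{n}}{w \in \{ab,ba\}^{n}}$, which has $2^{n}$ words of each length $5n+1$ and is decidedly not bounded. So the Ginsburg-style dichotomy you lean on simply does not hold for this notion of richness, and the citation of~\cite[Theorem~5.5.1]{ginsburg} in the paper is offered as an analogy of style, not as a usable lemma.

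The paper's own proof avoids boundedness entirely and argues directly. Still working in Greibach Normal Form, it picks the family of words $x_i = 0^{i}1^{4k-i}$ for $i = 1, \ldots, 4k-1$, where $k$ is the number of nonterminals, and uses \Cref{prp:longgg} to place high powers $x_i^{M}$ in $L_q$. A pigeonhole argument on long branches of the derivation trees of three such words produces a single nonterminal $A$ together with two derivations $A \deriv yAz$ (from $x_i^{M}$) and $A \deriv vAw$ (from $x_j^{M}$), with all of $|y|,|z|,|v|,|w| \geq 4k$. Because $y$ and $v$ contain different numbers of zeroes among their first $4k$ letters (and likewise $z$ and $w$), \Cref{lynd1} forces $yv \neq vy$ and $zw \neq wz$ \emph{simultaneously}, which is exactly what richness demands. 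The two-sidedness of that construction --- non-commutation on both the left and the right pumps at once, coming from two different words in $L_q$ --- is precisely what the boundedness route cannot deliver. Your secondary argument that $L_q$ is unbounded (via \Cref{prp:gapsPrp} or via powers $x^{m}$) is salvageable, but it rests on the first reduction, which fails.
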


\begin{proof}
	By our results in the previous section,~$L_q$ is non-empty; further, by definition,~$\varepsilon \not\in L_q$. So, by \Cref{gbn}, there exists a CFG~$\grammar$ in Greibach Normal Form which generates $L_q$. We show that $\grammar$ must be rich by a counting argument on the number of nonterminals of~$\grammar$.
	Let~$k \in \nn$ denote the number of nonterminals in~$\grammar$. Define
	\begin{align*}
		&x_i= 0^i 1^{4k-i}  &\text{for $i = 1,2,\ldots,4k-1$.}
	\end{align*}
	By \Cref{prp:longgg}, for every $i$ there exists $m_i \in \nn$ for which $x_i^{m_i} \in L_q$. Similarly, for each~$i$ there exists~$m'_i \in \nn$ for which the derivation tree of~$x_i^{m'_i}$ has a branch which contains some nonterminal~$A$ at least~$(4k)^2+1$ times. Let~$M \in \nn$ be sufficiently large to satisfy these requirements for all~$x_i$ simultaneously. By the pigeonhole principle, there exist $i,j,\ell \leq 4k-1$ such that some nonterminal $A$ appears at least~$(4k)^2 +1$ times in some branch of the derivation tree of each of~$x_i^M, x_j^M$ and~$x_\ell^M$.
	
	Consider such a sufficiently long branch of the derivation tree of $x_i^M$, in which we choose to expand~$A$ at the end. Since~$\grammar$ is in Greibach Normal Form, the derivation is of the form
	\[
		S \deriv y_i^1 y_i^2 y_i^3 \ldots y_i^s A z_i^s \ldots z_i^3 z_i^2 z_i^1
	\]
	from which $x_i^M$ can be derived in at least $(4k)^2 + 1$ expansions of $A$. Observe that each~$y_i^j \neq \varepsilon$, since $\grammar$ is in Greibach Normal Form. Consider the number of expansions of~$A$ in terms of blocks~$B_1,B_2,\ldots,B_n$ such that each block has cardinality $4k$. By assumption, $n \geq 4k + 1$. Let~$A_m$ be the derivation of $A$ from the expansions in block $B_m$. There are two cases:
	\begin{enumerate}
		\item For some $m \leq n$, $A_m = yA$ with $y \in \sstar$ and\footnote{This is a consequence of the observation immediately following \Cref{thm:gbn}.} $|y| \geq 4k$.
		\item For all $m \leq n$, $A_m = y_m A z_m$ with $y_m, z_m \in \splus$. Then, $A_{4k} = y A z$ where we have~$|y|, |z| \geq 4k$.
	\end{enumerate}
Since these two cases apply to all $x_i^M, x_j^M$ and $x_{\ell}^M$, two of them must share the same case above. W.l.o.g.\ assume both $x_i^M$ and $x_j^M$ fall into case 2 (the argument for case 1 is similar). Hence, $T \deriv y A z$ (from the derivation of $x_i^M$) and $T \deriv v A w$ (from the derivation of $x_j^M$) where~$|y|,|z|,|v|,|w| \geq 4k$. By definition,~$y,z$ contain~$i$ zeroes, while~$v,w$ contain~$j$ zeroes among the first~$4k$ letters. It is now seen from the First Lyndon-Sch\"utzenberger-Theorem that~$yv \neq vy$ and~$zw \neq wz$. Hence, $A$ is a rich nonterminal.
\end{proof}

\subsection{Every Rich CFG Generates High-Complexity Words}\label{subsubsub}

In this section, we prove that every rich CFG generates words of arbitrarily high complexity relative to its length. In particular, there exists a word $x$ for which $\cN(x) > q|x|$ for every~$q \in (0,1/2)$. This contradicts the fact that any rich CFG can generate $L_q$ for any~$q$, since any~$x \in L_q$ satisfies $\cN(x) < q|x|$. We also isolate the following technical proposition.

\begin{proposition}\label[proposition]{prp:isol}
	Suppose $u,v \in \Sigma^n$ with $uv \neq vu$. Then the following set is infinite:
	\begin{align*}
		\ii_{(u,v)} = \{ \, x \in \{ u,v \}^* : &\text{if $y \prec x$ satisfies $|y| > 2\log(|x|)$}\\ 
		&\qquad\qquad \text{then $y$ occurs exactly once in $x$} \}
	\end{align*}
\end{proposition}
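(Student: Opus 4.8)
The plan is to build the infinite set $\ii_{(u,v)}$ explicitly by iterating a carefully chosen morphism (or an explicit concatenation scheme) on $\{u,v\}$ that introduces no long repeated factors, and then to verify the length condition. Since $uv \neq vu$, the word $uv$ is \emph{primitive} and, more importantly, $\{u,v\}$ generates a free sub-semigroup of $\Sigma^*$ in which $u$ and $v$ behave like two distinct letters; so it suffices to work one level up, over a binary alphabet $\{a,b\}$ with $a \mapsto u$, $b \mapsto v$, and produce an infinite family of binary words $w_1 \prec w_2 \prec \cdots$ whose long factors are all unique, then set $x_i = h(w_i)$ where $h(a)=u$, $h(b)=v$. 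The natural candidate for the $w_i$ is a prefix code of strings coming from a \textbf{squarefree} or \textbf{overlap-free}-type construction — e.g.\ the fixed point of a suitable uniform morphism on a ternary alphabet projected down, or more simply: use the binary \emph{ruler}/\emph{Zimin}-style words, or the initial segments of a Sturmian word — but the cleanest choice is to take $w_i$ to be the first $2^i$ letters of an infinite binary word all of whose factors of length $\ell$ are distinct up to $\ell \approx \log$, i.e.\ a word with low \emph{subword complexity growth}. Concretely, I would use the \textbf{de Bruijn}-based or \textbf{paperfolding} construction; any infinite word whose factor complexity function $p(\ell)$ grows subexponentially, combined with the freeness of $\{u,v\}$, forces repeated long factors to collide with their own positions.

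The key steps, in order, would be: (1) Record that $uv \neq vu$ implies (via \Cref{lynd1}) that $u,v$ are not both powers of a common word, hence the monoid morphism $h\colon\{a,b\}^*\to\Sigma^*$, $a\mapsto u$, $b\mapsto v$, is \emph{injective} and in fact \emph{code-like}: a factor $y \prec h(w)$ of length $> 2|u|$ (say $u,v$ have length $n$, so $>2n$) synchronises, i.e.\ determines a unique factorisation aligning with the $u/v$-blocks of $w$ up to bounded boundary ambiguity. This is a standard synchronisation-delay argument for a two-element code, and it is where the hypothesis $uv\neq vu$ is essential. (2) Fix an infinite binary word $\omega$ with the property that for every $\ell$, the number of length-$\ell$ factors of $\omega$ is at most, say, $C\ell$ (linear factor complexity — Sturmian words give $\ell+1$), and let $w_i$ be its prefix of length $i$; these are pairwise distinct and nested, so the $x_i = h(w_i)$ are pairwise distinct, giving $|\ii_{(u,v)}| = \infty$ once we check each $x_i$ lies in $\ii_{(u,v)}$. (3) Fix $i$, set $N = |x_i| = n\cdot i$. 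Suppose $y \prec x_i$ with $|y| > 2\log N$ occurs at two positions. Using step (1), each occurrence forces an occurrence (up to a boundary shift of $<n$ letters on each side) of a factor $y' \prec w_i$ of length $\geq (|y|-2n)/n$ in $\omega$. If $|y| - 2n > n(\text{something})$, the two aligned occurrences of $y'$ in $w_i$ are a repeated factor of $w_i$ of length $\geq \frac{|y|}{n} - 3$; but $w_i$ has length $i$ and linear factor complexity, so — here is the quantitative heart — a repeated factor of length $m$ in a length-$i$ word with $\leq Cm$ distinct factors of each length is impossible once $m$ is of order $\log i$ \emph{if} we instead demand $\omega$ has all factors distinct up to length $\log$-scale; so I would actually pick $\omega$ to be a word whose factors of length $\leq \log_2 i$ appearing in its prefix $w_i$ are \emph{all distinct} (a de Bruijn-sequence prefix achieves exactly this), forcing $\frac{|y|}{n}-3 > \log_2 i$, i.e. $|y| > n(\log_2 i + 3)$, to be violated, and then check this is consistent with the bound $|y| > 2\log N = 2\log(ni)$.

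The calibration in step (3) is the main obstacle: I need the threshold ``$|y| > 2\log|x|$'' in the statement to be \emph{weaker} than the ``forces a long repeat'' threshold coming from the construction, so that no such $y$ can repeat. This means choosing $\omega$ so that its prefix $w_i$ has \emph{all} factors of length $\leq \log_2(i)$ distinct — a \textbf{de Bruijn word} of order $\approx\log_2 i$ has this property exactly, with length $i = 2^{\log_2 i}$ — and then the morphism blow-up by factor $n = |u| = |v|$ (note the statement assumes $u,v \in \Sigma^n$, so they have equal length, which makes the block structure of $h(w)$ perfectly uniform and the synchronisation argument clean) turns the forbidden-repeat length $\log_2 i$ into roughly $n\log_2 i \approx n\log_2(N/n)$, comfortably above $2\log_2 N$ for $N$ large, once $n \geq 2$; the finitely many small $i$ can be discarded or handled by hand since we only need \emph{infinitely} many $x$ in $\ii_{(u,v)}$. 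So the real work is (a) proving the synchronisation/bounded-boundary lemma for the two-block code $\{u,v\}$ from $uv\neq vu$, and (b) checking the two logarithmic thresholds line up with the right constants; both are routine but fiddly, and (b) is where a careless choice of construction would break the proof.
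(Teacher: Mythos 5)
The synchronisation claim in your step~(1) is the load-bearing step, and it is false. You assert that, because $uv \neq vu$, the uniform code $\{u,v\}$ has bounded synchronisation delay, so that any sufficiently long factor of $h(w)$ determines its block alignment up to bounded boundary ambiguity, and hence a repeated long factor of $h(w)$ forces a repeated factor of $w$. A uniform two-element code with $uv \neq vu$ need not be circular. Take $u = 001$, $v = 010$ (so $n = 3$ and $uv = 001010 \neq 010001 = vu$). Reading $(001)^m$ from offset $1$, the first $3(m-1)$ letters form $(010)^{m-1}$, so $(010)^{m-1}$ is a factor of $h(a^m)$ at offset $1$, while it also equals $h(b^{m-1})$ at offset $0$. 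Your de Bruijn word $w_i$ of order $m$ (length $2^m+m-1$) contains both $a^m$ and $b^{m-1}$ as disjoint factors, so $x_i = h(w_i)$ contains two occurrences of $(010)^{m-1}$, a word of length $3m-3$, whereas $2\log_2|x_i| \approx 2m + 2\log_2 3$. For $m \geq 7$ this repeated factor exceeds the threshold, so $x_i \notin \ii_{(u,v)}$ --- and the two occurrences sit at different block offsets, which is exactly what your synchronisation claim was supposed to rule out. Replacing de Bruijn by Sturmian dodges this particular pair (balance forbids $a^m$ and $b^m$ from both occurring), but you have supplied no replacement mechanism excluding offset-misaligned repeats in general, and the claim that $uv \neq vu$ alone gives ``a standard synchronisation-delay argument'' is simply not true.

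This is precisely the difficulty the paper's construction is engineered to solve. Rather than rely on $\{u,v\}$ synchronising on its own, the paper inserts an explicit marker block $h(2) = u^3v^4$, proves that $v^4$ cannot occur inside any $h(w)$ with $w \in \{0,1\}^*$ (\Cref{lem1} and \Cref{lem2}), and then performs a four-case Lyndon--Sch\"utzenberger analysis showing that any factor occurrence of $h(2)$ in $h(\sigma_k)$ must be block-aligned with an official copy of $h(2)$. The marker plus the alignment argument are exactly the substitute for the synchronisation property you assumed for free; without something playing that role, your proposal does not close.
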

Proving \Cref{prp:isol} takes a few technical lemmas on the behaviour of non-commuting strings in formal languages. Denote the \define{set of factors} of~$w \in \sstar$ by~$[w] = \set{x \in \sstar}{x \prec w}$. For convenience, we now fix some $n \in \nn$ and a pair~$u,v \in \Sigma^n$ for which~$uv \neq vu$.

\begin{lemma}\label[lemma]{lem1}
	$uv, vu \not\in [u^3] \cup [v^3]$
\end{lemma}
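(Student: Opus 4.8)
The statement to prove is \Cref{lem1}: for $u,v \in \Sigma^n$ with $uv \neq vu$, we have $uv, vu \notin [u^3] \cup [v^3]$.

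\medskip

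The plan is to argue by contradiction, assuming that one of $uv$ or $vu$ is a factor of one of $u^3$ or $v^3$, and in each case deduce that $u$ and $v$ commute, contradicting the hypothesis $uv \neq vu$. By symmetry (swapping the roles of $u$ and $v$, and noting that $vu$ is a factor of $w^3$ iff $uv$ is a factor of the reverse of $w^3$, which has the same structure), it suffices to treat the representative case $uv \preceq u^3$; the case $uv \preceq v^3$ is handled by an entirely parallel argument, and the cases involving $vu$ reduce to these by reversal.

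\medskip

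So suppose $uv \preceq u^3$. Since $|uv| = 2n$ and $|u^3| = 3n$, write $u^3 = s\,(uv)\,t$ with $|s| + |t| = n$, so $|s| \le n$. First I would observe that since $|uv| = 2n \ge n$, the occurrence of $uv$ inside $u^3$ overlaps all three copies of $u$ in a controlled way: the prefix of $uv$ of length $n - |s|$ is a suffix of the first copy of $u$, the next $n$ letters of $uv$ form the entire second copy of $u$, and the remaining $|s|$ letters of $uv$ are a prefix of the third copy of $u$. Reading this off: if we set $u = pq$ where $|p| = |s|$, then the first $n - |s| = |q|$ letters of $uv$ equal $q$, the middle $n$ letters equal $u = pq$, and the last $|s| = |p|$ letters equal $p$. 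Hence $uv = q \cdot pq \cdot p$ on the one hand, and $uv = pq \cdot v$ on the other. Comparing the length-$n$ prefixes gives $pq = q\,p'$ where $p'$ is the length-$|p|$ prefix of $pq$; more directly, matching $uv = qpqp$ against $uv = u v = pq\,v$ gives $pq = qp$ (comparing first $n$ symbols) and then $v = qp$ (comparing the last $n$ symbols). By \Cref{lynd1}, $pq = qp$ forces $p = z^a$, $q = z^b$ for a common word $z$, so $u = pq = z^{a+b}$ and $v = qp = z^{a+b}$ as well, whence $u = v$ and trivially $uv = vu$ — contradiction. (If $|s| = 0$ or $|s| = n$ the argument degenerates but the conclusion $u = v$ or $uv = vu$ still follows directly.)

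\medskip

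The main obstacle I anticipate is bookkeeping the overlap positions cleanly — making sure the decomposition $u = pq$ with $|p| = |s| \le n$ is consistent with $uv$ straddling exactly the second copy of $u$, and correctly identifying which segments must coincide. This is where the constraint $|u| = |v| = n$ (equal lengths) is doing real work: it pins the occurrence of $uv$ in $u^3$ so that the middle block is exactly one full copy of $u$, which is what forces the clean equation $pq = qp$. Once that equation is in hand, invoking the First Lyndon--Sch\"utzenberger-Theorem (\Cref{lynd1}) to conclude $u = v$, and hence $uv = vu$, is routine. The case $uv \preceq v^3$ is symmetric with $u,v$ interchanged, and the $vu$ cases follow by applying the $uv$ cases to the reversals $u^R, v^R$ (which still satisfy $u^R v^R \neq v^R u^R$, since $uv = vu$ iff $u^R v^R = v^R u^R$ by reversing, using that reversal of a square $w^2$ is $(w^R)^2$ and reversal distributes over the relevant concatenations).
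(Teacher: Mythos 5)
Your proof is correct and takes essentially the same approach as the paper: decompose $u = pq$ at the offset where $uv$ begins inside $u^3$, use the period-$n$ structure of $u^3$ to get $uv = qpqp$, and compare against $uv = pq \cdot v$ to conclude $pq = qp$ and $v = qp$, hence $u = v$, contradicting $uv \neq vu$. The only (harmless) extra step is invoking \Cref{lynd1} to pass from $pq = qp$ to $u = v$ — since you already have $u = pq$ and $v = qp$, the equation $pq = qp$ gives $u = v$ immediately, which is exactly how the paper concludes.
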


\begin{proof}
	We give the argument for $uv \not\in [u^3]$; the other parts are similar. Assume that $uv \in [u^3]$; thus write $u^3 = xuvy$ for some $x,y \in \sstar$. Note that $|xy| = |u| = |v|$. Since~$uv \neq vu$ we cannot have $x,y \in \{ u,v \}$, and thus $|x|, |y| < |u| = |v|$. But now, by periodicity of~$u^3$, we must have $xy = u$. Thus $u^3 = xyxyxy = xuvy$. Therefore, we have~$uv = yxyx$, from which it follows that~$u = xy = yx = v$, contradicting~$uv \neq vu$.
\end{proof}
To motivate the next lemma, we introduce string homomorphisms.

\begin{definition}
	A function $h \colon \{ 0,1, \ldots,n-1 \}^* \rightarrow \sstar$ is a \define{string homomorphism} if for all~$n_i \in \{ 0,1,\ldots,n-1 \}$ we have $h(n_0\cdots n_k) = h(n_0)\cdots h(n_k)$.
\end{definition}
Observe that every such string homomorphism is uniquely defined by its action on the alphabet. Define a string homomorphism $h \colon \{ 0,1,2 \} \rightarrow \{ u,v \}^*$ given by
\begin{align*}
	h(0) &= uv & h(1) &= vu &  h(2) &= u^3 v^4
\end{align*}
With this string homomorphism fixed, the following lemma is immediate from \Cref{lem1}.

\begin{lemma}\label[lemma]{lem2}
	$u^4,v^4 \not\in \bigcup \set{[x]}{x \in h(\{ 0,1 \}^*)}$ 	
\end{lemma}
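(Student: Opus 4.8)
The plan is to analyse directly how an occurrence of $u^4$ could sit inside a word of the form $h(w)$, $w\in\{0,1\}^*$, and reduce to \Cref{lem1} via a short counting argument together with the periodicity of $u^4$. First I would record the trivial observation that $uv\neq vu$ forces $u\neq v$ (if $u=v$ then $uv=uu=vu$). Next, note that $h(\{0,1\}^*)=\{uv,vu\}^*$: every element is a concatenation of \emph{tiles}, each of length $2n$ and each equal to $uv$ or $vu$, and these tiles partition the word into consecutive position-intervals of the form $[2n(t-1),\,2nt-1]$.

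Now suppose for contradiction that $u^4\preceq h(w)$ for some $w$, say the occurrence covers the position-interval $I=[p,\,p+4n-1]$, which has length $4n=2\cdot 2n$. The key combinatorial step is that $I$ must \emph{contain} a full tile-interval: one needs an integer $t$ with $\tfrac{p}{2n}+1\le t\le\tfrac{p}{2n}+2$, and this closed interval has length $1$, so such a $t$ exists (and lies in the admissible range because $p+4n-1$ does not exceed the length of $h(w)$). Consequently some tile $T\in\{uv,vu\}$ occurs as a factor of $u^4$. I would then observe that $T$ is a factor of $u^4$ of length exactly $2n$, and $u^4$ has period $n$; writing the occurrence as $T=u^4_{[s,\,s+2n-1]}$ with $s=qn+s'$, $0\le s'<n$, a shift by $qn$ gives $T=u^4_{[s',\,s'+2n-1]}$, and since $s'+2n-1\le 3n-2$ this places $T$ inside the prefix $u^3$ of $u^4$. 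Hence $T\in[u^3]$ with $T\in\{uv,vu\}$, contradicting \Cref{lem1}. This establishes $u^4\notin\bigcup\set{[x]}{x\in h(\{0,1\}^*)}$, and the statement for $v^4$ follows verbatim after exchanging the roles of $u$ and $v$, which leaves the set $h(\{0,1\}^*)$ unchanged.

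I do not expect a genuine obstacle here — this is why the paper calls the lemma immediate from \Cref{lem1}. The one point requiring care is the alignment/counting step showing that a length\=/$4n$ window inside a tiling by length\=/$2n$ tiles must swallow a whole tile; once that is in hand, the passage from a length\=/$2n$ factor of $u^4$ to a factor of $u^3$ is a one-line periodicity argument, and the rest is a direct appeal to \Cref{lem1}.
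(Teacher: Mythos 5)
Your argument is correct, and it is the natural unpacking of what the paper declares immediate (the paper gives no proof of this lemma, only the remark that it follows from \Cref{lem1}). The three ingredients — that $h(\{0,1\}^*)=\{uv,vu\}^*$ is tiled by length-$2n$ blocks, that any window of length $4n$ must engulf a full tile, and that a length-$2n$ factor of $u^4$ reduces by $n$-periodicity to a factor of $u^3$ — are exactly the steps needed to invoke \Cref{lem1}, and your alignment computation and periodicity shift are both accurate.
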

To give a proof of \Cref{prp:isol}, we first code words as follows. For every~$k \in \nn$, let~$\sigma_k$ be the lexicographical concatenation of all positive integers which, coded in binary, have length~$k$; each is then followed by a $2$. For instance, $\sigma_2 = 002012102112$.
We consider the images of these words under $h$, and collect some immediate properties of the~$\sigma_k$ and the~$h(\sigma_k)$ below, whose proofs are readily deduced, hence omitted.

\begin{lemma}\label[lemma]{lemT}
	Let $k \in \nn$.
	\begin{enumerate}
		\item $|\sigma_k| = 2^k(k+1)$ \label{item1}
		\item $|h(\sigma_k)| = 2^k|v|(2k + 7)$ \label{item2}
		\item $2 \log(|h(\sigma_k)|) \geq 2k + 14|v|$ for sufficiently large $k$. \label{item3}
	\end{enumerate}
\end{lemma}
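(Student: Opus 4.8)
\textbf{Proof proposal for Lemma~\ref{lemT}.}

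The plan is to prove the three claims in turn, each following directly from the definitions of $\sigma_k$ and of the homomorphism $h$. First, for item~(\ref{item1}): by construction $\sigma_k$ is the concatenation, in lexicographic order, of all $2^k$ binary strings of length $k$ (read as positive integers of that binary length), each followed by the symbol $2$. Hence each block contributes $k+1$ symbols, and there are exactly $2^k$ blocks, giving $|\sigma_k| = 2^k(k+1)$. I would spell this out by noting that the number of binary strings of length $k$ is $2^k$ and that appending a $2$ adds one symbol per block.

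For item~(\ref{item2}): apply the length function to $h(\sigma_k) = h(\sigma_k[0])\cdots h(\sigma_k[|\sigma_k|-1])$. Since $|u| = |v| = n$, we have $|h(0)| = |h(1)| = 2|v|$ and $|h(2)| = 7|v|$. Of the $|\sigma_k| = 2^k(k+1)$ symbols of $\sigma_k$, exactly $2^k$ are occurrences of the symbol $2$ (one per block), and the remaining $2^k k$ are symbols from $\{0,1\}$. Therefore
\[
	|h(\sigma_k)| = 2^k k \cdot 2|v| + 2^k \cdot 7|v| = 2^k|v|(2k + 7),
\]
as claimed.

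For item~(\ref{item3}): using item~(\ref{item2}), $\log(|h(\sigma_k)|) = k + \log(|v|) + \log(2k+7)$, so $2\log(|h(\sigma_k)|) = 2k + 2\log(|v|) + 2\log(2k+7)$. It therefore suffices to show $2\log(|v|) + 2\log(2k+7) \geq 14|v|$ for all sufficiently large $k$; but the left-hand side tends to infinity with $k$ (the term $2\log(2k+7)$ is unbounded) while $14|v|$ is a fixed constant, so the inequality holds once $k$ is large enough. I would note explicitly that $|v| = n$ is fixed throughout this part of the argument, so the right-hand side really is constant in $k$.

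None of the three steps presents a genuine obstacle: the lemma is purely bookkeeping, and indeed the paper flags it as having "readily deduced" proofs. The only point requiring a moment's care is the count of occurrences of the symbol $2$ in $\sigma_k$ in item~(\ref{item2}) — one must be sure that each block contributes exactly one $2$ and that no binary string contributes a $2$ — and the observation in item~(\ref{item3}) that $14|v|$ is constant in $k$, which is what makes the asymptotic comparison trivial.
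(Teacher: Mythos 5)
Your proposal is correct, and since the paper omits the proof as ``readily deduced,'' there is no alternative argument to compare against --- your bookkeeping is exactly what is intended. You also correctly read past the paper's slightly loose phrasing (``positive integers which, coded in binary, have length~$k$'') to the count of $2^k$ blocks that the example $\sigma_2 = 002012102112$ actually exhibits, and you correctly observe that $|v|=n$ is fixed so that $14|v|$ is a constant while $2\log(2k+7)$ is unbounded, which is the only real content of item~(\ref{item3}).
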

To prove \Cref{prp:isol}, we show that for large enough~$k$, every substring of~$h(\sigma_k)$ of length at least~$2\log|h(\sigma_k)|$ must contain two copies of~$h(2)$; since the word between any two copies of~$h(2)$ is unique within~$h(\sigma_k)$, the proposition is proven.

\begin{proof}[Proof of \Cref{prp:isol}]
	Fix $k \in \nn$ sufficiently large so as to satisfy \cref{item3}, and consider the word $h(\sigma_k)$. By construction and the choice of $k$, if $y \prec h(\sigma_k)$ and
	\[
		|y| \geq 2\log(|h(\sigma_k)|)
	\]
	then~$y$ contains two copies of~$h(2)$. By definition, $v^4 \prec h(2)$; on the other hand, \Cref{lem2} shows that~$v^4$ cannot be a factor of any~$h(w)$ with~$w \in \{ 0,1 \}^*$. Hence,~$v^4 \prec y$ must be a factor of some~$h(2)$ occurring in~$h(\sigma_k)$. We show that the copies of~$h(2)$ in~$y$ and in~$h(\sigma_k)$ overlap perfectly. Consider the word~$h(2)z = xh(2)$ contained in $h(\sigma_k)$. This bordered word is in fact a proper square, which can seen by a case analysis on $x$. Write
	\begin{align*}
		x &= a|u| + \ell &\text{for some $a \in \nn, 0 \leq \ell < |u|$}\\
		u &= \alpha\beta &\text{where $|\alpha| = \ell$}\\
		v &= \gamma\delta &\text{where $|\gamma| = \ell$}
	\end{align*}
	and note that this renaming implies $|\beta| = |\delta|$, and that
	\[
		h(2)z = xh(2) = x_1x_2(\alpha\beta)^3(\gamma\delta)^4 = (\alpha\beta)^3(\gamma\delta)^4z.
	\]
	There are four cases describing $x_1$; using Theorems \ref{lynd1} and \ref{thm:ls2}, each will lead to a contradiction.
\begin{enumerate}
		\item $a = 0$:
		Since $|\alpha| = \ell$, this case implies $\alpha\beta = \beta\alpha$. In addition, we also have~$|\beta\gamma| = |\alpha\beta| = |\beta\alpha|$, and so~$\alpha = \gamma$. By comparing lengths, it is easily seen that~$\beta = \delta$, and so $u = \alpha\beta = \gamma\delta = v$, a contradiction.
		\item $a = 1$:
		Since $u = \alpha\beta$, by comparing initial segments it is readily seen that in this case~$uv \prec v^4$, contradicting \Cref{lem1}. 
		\item $a = 2$ or $a = 3$:
		Since $xh(2) = h(2)z$, we have~$|x| = |z|$. So, if $a = 2,3$, then again by comparing initial segments it is readily seen that $uv \prec v^4$, contradicting \Cref{lem1}.
		\item $a > 3$:
		Here,~$v^4 \prec z$. Since $z = h(w)$ for some $w \in \{ 0,1 \}^*$, \Cref{lem1} gives a contradiction.
\end{enumerate}
Hence, the copies of $h(2)$ appearing in $y$ are exactly those appearing in $h(\sigma_k)$. But now, if~$y' \prec h(\sigma_k)$ is of length at least $2\log(|h(\sigma_k|)$, then it contains a factor of the form~$h(2)\rho h(2)$ where $\rho \in h(\{ u,v \}^*)$. Each such $\rho$ appears only once in $h(\sigma_k)$, by construction. Thus, for large enough $k$, the word $h(\sigma_k)$ is as required, and thus the set $\ii_{(u,v)}$ is infinite.
\end{proof}

\begin{theorem}\label{thm:richHigh}
	If $\grammar$ is a rich CFG, then $\grammar$ generates a word~$x \in \sstar$ such that~$\cN(x) > q|x|$ for every~$q \in (0,1/2)$.
\end{theorem}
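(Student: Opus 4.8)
The plan is to exploit a rich nonterminal $A$ together with the auxiliary machinery just developed, chiefly \Cref{prp:isol}, \Cref{prp:2x} and \Cref{prp:Kolm}. Since $\grammar$ is rich with no useless nonterminals, fix the rich nonterminal $A$, and fix $v,w,x,y$ with $vwxy\neq\varepsilon$, $A\deriv vAx\mid wAy$, and the non-commutation clauses (1)--(2). Because there are no useless nonterminals, $A$ is reachable, so $S\deriv \alpha A\beta$ for some sentential forms $\alpha,\beta$, and $A\deriv t$ for some terminal string $t$; also each of $\alpha,\beta$ derives some terminal string. The idea is to pump $A$ along a carefully chosen sequence in $\{vAx,\,wAy\}$, substitute terminal strings for the surrounding nonterminals at the end, and obtain a family of terminal words in $L(\grammar)=L_q$ whose ``interesting'' left halves are built out of two non-commuting blocks in exactly the pattern controlled by \Cref{prp:isol}. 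First I would reduce to the situation where at least one of $v,w$ is used to drive the iteration on the \emph{left} side (the symmetric case uses $x,y$ on the right, or if only one side is non-trivial, just that side): by clause (1), if $vw\neq\varepsilon$ then $vw\neq wv$, so we may pick two non-commuting blocks among the available left-pieces. Set $(U,V)$ to be such a non-commuting pair of terminal blocks obtained by concatenating a bounded number of the $v$'s and $w$'s (padding with the fixed terminal strings derived from $\alpha$, from $A$, and from $\beta$ as needed so that the surrounding context contributes only a bounded prefix and suffix).

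Next I would invoke \Cref{prp:isol} with this pair $(U,V)$: the set $\ii_{(U,V)}$ is infinite, so it contains words $z$ of arbitrarily large length such that every factor of $z$ longer than $2\log|z|$ occurs exactly once in $z$. For each such $z\in\{U,V\}^*$, the derivation $S\deriv \alpha A\beta \deriv \alpha\,(\text{block sequence realising }z)\,A\,(\cdots)\,\beta \deriv$ (a terminal word $W_z$) produces $W_z\in L(\grammar)=L_q$, where $W_z = P\,z\,Q$ for fixed bounded terminal strings $P,Q$ (absorbing $\alpha$, $\beta$, the terminal string from $A$, and any fixed padding blocks). Thus $|W_z| = |z| + O(1)$, and every factor of $W_z$ longer than $2\log|W_z| + O(1)$ that lies inside the $z$-portion occurs exactly once there. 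Now apply \Cref{prp:2x} to $W_z\in L_q\cap\Sigma^{|W_z|}$: there must be a proper factor $\hat w\prec W_z$ with $|\hat w|\geq\left(\frac{1-2q}{2}\right)\sqrt{|W_z|}$ and $u',v'\in\splus$ with $u'\hat w = \hat w v'\prec W_z$, and $u'\hat wv'\prec W_z$. For $|z|$ (hence $|W_z|$) large, $\left(\frac{1-2q}{2}\right)\sqrt{|W_z|}$ greatly exceeds $2\log|W_z|+O(1)$, so the factor $\hat w$ lies inside the $z$-portion (the bounded $P,Q$ parts cannot absorb it), and $u'\hat w=\hat wv'$ with $|u'|=|v'|$ forces $\hat w$ to occur as a factor of $W_z$ in two distinct overlapping positions --- contradicting that such a long factor occurs exactly once. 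Therefore for large $z$ the word $W_z$ cannot lie in $L_q$, i.e. $\cN(W_z)\geq q|W_z|$; since $W_z$ was produced by $\grammar$ and $q\in(0,1/2)$ was arbitrary, $\grammar$ generates words of arbitrarily high complexity, as claimed. (Alternatively, one may bypass \Cref{prp:2x} and argue directly via \Cref{prp:Kolm}: a word $W_z$ built as above from a sequence $z$ of non-commuting blocks can be chosen with $C_k(W_z)\geq |W_z|-o(\sqrt{|W_z|})$ by picking $z$ to be incompressible over $\{U,V\}$, which contradicts the bound $C_k\leq n-\frac{1-2q}{2}\sqrt n+O(\log n)$ forced by membership in $L_q$.)

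The main obstacle I anticipate is bookkeeping around the derivation: ensuring that iterating $A$ along a prescribed word in $\{vAx,wAy\}^*$ really does deposit, on the left, a terminal string that \emph{equals} (up to bounded context) an arbitrary prescribed element of $\{U,V\}^*\cap\ii_{(U,V)}$, while the right side and the surrounding $\alpha,\beta$ contribute only a bounded-length prefix/suffix. One has to be careful that in the step $A\deriv vAx\mid wAy$ the pieces $x,y$ accumulating on the right do \emph{not} themselves create long repeated factors that would spoil the uniqueness property --- but since the right accumulation is governed by the \emph{same} non-commutation clause (2) and we only ever need finitely many right-copies before closing off with the fixed terminal string from $A$, we can truncate the iteration so the right-hand contribution is bounded, putting it into $Q$. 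If instead $vw=\varepsilon$ (so only the right side is non-trivial and clause (2) applies), the whole argument runs on the mirror image, using $\hat w v' = u'\hat w$ read right-to-left; and if $vwxy$ is non-trivial only through, say, $v\neq\varepsilon$ with $v=v'$ a single block, one takes $U=v$, $V$ = a fixed terminal string from $A$ or from a one-time use of $w,x,y$ chosen to not commute with $v$, which is possible precisely because richness guarantees a non-commuting witness on at least one side. Once the derivation bookkeeping is pinned down, the rest is the clean combinatorial clash between ``unique long factors'' from \Cref{prp:isol} and ``forced repeated long factor'' from \Cref{prp:2x}.
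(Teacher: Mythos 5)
Your high-level strategy is the same as the paper's: exploit a rich nonterminal to pump a word whose long factors are forced to occur uniquely (via \Cref{prp:isol}), and then derive a contradiction with the bordered factor that \Cref{prp:2x} would force if the word were in $L_q$. But there is a concrete gap in the construction of the witness word.

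You write $W_z = P\,z\,Q$ with $P,Q$ of bounded length, where $z\in\ii_{(U,V)}$ is produced by iterating the rich productions $A\deriv vAx\mid wAy$. This cannot be arranged in the generic case. Each application of a production of $A$ deposits a block on the left \emph{and} a block on the right, and the number of applications must grow with $|z|$. So whenever $xy\neq\varepsilon$ (which, as the proof of \Cref{thm:gammaRich} shows, is exactly the case that arises for $L_q$, where all of $v,w,x,y$ have length $\geq 4k$), the right-hand accumulation has length comparable to $|z|$, not $O(1)$. Your proposed fix --- ``we can truncate the iteration so the right-hand contribution is bounded'' --- cannot work: truncating the iteration also truncates the left side, so $z$ is no longer long enough for \Cref{prp:isol} or \Cref{prp:2x} to have any bite. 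The paper gets around this by building a word of the form $y_{m,k}=(w_1\,g(\sigma_k))\,(x^m w_2 y^m)\,(\overline{h(\sigma_k)}\,w_3)$ in which \emph{both} the left block $g(\sigma_k)$ and the right block $\overline{h(\sigma_k)}$ are long and controlled by the $\sigma_k$ encoding, and in which the periodic middle part $x^m w_2 y^m$ is deliberately kept short ($m|x|\in O(\log n)=o(\sqrt n)$) so it cannot itself supply the $\Omega(\sqrt n)$ border that \Cref{prp:2x} would demand. The analysis then has to show, by splitting $z=z_1z_2=z_1'z_2'$ and locating the pieces relative to the three blocks, that $b=c=\varepsilon$; this case-split has no analogue in a one-sided $PzQ$ decomposition, so more than bookkeeping is missing.

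The alternative route you sketch via \Cref{prp:Kolm} also fails. A word $W_z$ assembled from two fixed blocks $U,V$ of constant length is compressible to roughly $(|W_z|/|U|)\log_k 2 + O(\log |W_z|)$, which for $|U|\geq 2$ is at most about $|W_z|/2$. Choosing the block sequence incompressible over $\{U,V\}$ therefore gives nothing close to $C_k(W_z)\geq |W_z|-o(\sqrt{|W_z|})$; the Kolmogorov upper bound from \Cref{prp:Kolm} is easily satisfied by every such $W_z$, so no contradiction arises. The contradiction in the paper's argument really does come from the uniqueness-of-long-factors property of $\ii_{(U,V)}$ clashing with the bordered factor from \Cref{prp:2x}, not from an incompressibility count.
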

For notation, if $\sigma \in \{ 0,1 \}^*$, let $\overline{\sigma}$ denote the reverse of $\sigma$. Further, if $x,y \in \sstar$ satisfy~$xz = zy$ and both $xz,zy \prec w$ such that $xz$ and $zy$ overlap at $z$, then call~$xzy$ its \define{union}, written as~$xz \cup zy$. We use \Cref{prp:isol}.

\begin{proof}
	Let $\grammar$ be a rich CFG with rich nonterminal~$A$ and witnesses~$x,y,x',y' \in \sstar$ for which~$A \deriv xAy \mid x' A y'$ and~$xx' \neq x'x$ and $yy' \neq y'y$. Hence, define the words~$u_1 = xx', v_1 = x'x$ and~$u_2 = yy', v_2 = y'y$. Now define string homomorphisms~$g,h$ by:
	\begin{align*}
		g(0) &= u_1v_1 & g(1) &= v_1u_1 & g(2) &= u_1^3v_1^4\\
		h(0) &= u_2v_2 & h(1) &= v_2u_2 & h(2) &= u_2^3v_2^4
	\end{align*}
	Fix any $w_1,w_2,w_3 \in \sstar$ for which
	\begin{align}
		S \deriv w_1Aw_3 \;\;\;\;\;\; \text{ and } \;\;\;\;\;\; A \deriv w_2. \tag{$*$}
	\end{align}
	By repeated application of the generation rules in $(*)$, it is readily seen that for any~$m,k \in \nn$, the word $y_{m,k}$ of the following form is generated by $\grammar$:
	\[
		y_{m,k} = (w_1 \; g(\sigma_k)) \; (x^m \; w_2 \; y^m) \; (\overline{h(\sigma_k)} \; w_3).
	\]
We show that, for sufficiently large $m,k$, the word $y_{m,k}$ has large non-deterministic automatic complexity. Choose $m,k$ large enough so that $|y_{m,k}| \gg |w_1w_2w_3|$, and let~$n = |y_{m,k}|$. Since we may choose~$k,m$ freely, we may also impose that
	\begin{align}
		2 \log(n) \leq m|x| \leq 3 \log(n) \in o(\sqrt{n}). \tag{$\dagger$}
	\end{align}
Now, let~$z \prec y_{m,k}$ whose length is in~$O(\sqrt{n})$ be the first occurrence of a word in~$y_{m,k}$ of the form~$zb = cz$ for words~$b,c \prec y_{m,k}$. Below, we show that this is only possible if $b = c = \varepsilon$.
	
By choosing $m,k$ wisely, we may assume that~$|z|$ is even. Further, it will be convenient to distinguish the words which make up the left-hand and right-hand squares of~$z$; hence, write~$z = z_1z_2 = z_1'z_2'$ so that~$|z_1| = |z_2|$ and~$z_1 = z_1', z_2 = z_2'$, and $z_1z_2b = cz_1'z_2'$.

	We show that~$z_1 \prec w_1 g(\sigma_k)$; the case that~$z_2' \prec \overline{h(\sigma_k)} w_3$ is similar. Note that, otherwise, we may choose~$k$ large enough so that~$z_1$ intersects~$\overline{h(\sigma_k)} w_3$, and in particular, we may enforce that this intersection $s \in \sstar$ has length at least~$2\log(n)$. By construction and the fact that~$z_1z_2b = cz_1'z_2'$, the word~$s$ must appear twice in~$\overline{h(\sigma_k)}$, which contradicts \Cref{prp:isol}.\footnote{See in particular the proof of \Cref{prp:isol} to note that~$\ii_{(u_2,v_2)}$ can be generated by sets of the form~$h(\sigma_k)$, and by a similar argument, by those of the form~$\overline{h(\sigma_k)}$.}
	
	Thus, $z_1 \prec w_1g(\sigma_k)$ and $z_2' \prec \overline{h(\sigma_k)}w_3$ imply that~$x^m w_2 y^m$ is a factor of the union~$z_2b \cup cz_1'$. By a counting argument, it is seen that either $x^m \prec z_2$ or $y^m \prec z_1'$. If~$x^m \prec z_2$---the other case is similar---then also~$x^m \prec z_2' \prec \overline{h(\sigma_k)}$. But this is impossible by \Cref{prp:isol} and since~$|z_2'| \geq m|x| \geq 2\log(n)$ by $(\dagger)$. Therefore, we have arrived at a contradiction: we can only have $z_1z_2b = cz_1'z_2'$ if $b = c = \varepsilon$.
But, the contrapositive of \Cref{prp:2x} shows that~$y_{m,k} \not\in L_q$ for every~$q \in (0,1/2)$. Since $y_{m,k}$ is generated by $\grammar$, the result is proven.
\end{proof}
\Cref{thm:richHigh} and \Cref{thm:gammaRich} combined imply our main result of this section:

\begin{theorem}\label{thm:Lq}
	For every $q \in (0,1/2)$, the language $L_q$ is not context-free.
\end{theorem}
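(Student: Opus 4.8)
The plan is to derive \Cref{thm:Lq} directly from the two structural theorems already proven in this section, \Cref{thm:gammaRich} and \Cref{thm:richHigh}, by a short proof-by-contradiction. First I would suppose, for contradiction, that $L_q$ is context-free for some fixed $q \in (0,1/2)$. Then some context-free grammar $\grammar$ generates $L_q$. By \Cref{thm:gammaRich}, this $\grammar$ must be rich. But then \Cref{thm:richHigh} applies: a rich CFG generates some word $x \in \sstar$ with $\cN(x) > q|x|$. By the definition of $L_q$, any word in $L_q$ satisfies $\cN(x) < q|x|$, so this $x$ cannot lie in $L_q$, contradicting the fact that $\grammar$ generates exactly $L_q$. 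Hence $L_q$ is not context-free.

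There is essentially no obstacle here, since the heavy lifting has already been done: the combinatorial core is \Cref{prp:2x} (every long word in $L_q$ contains a long repeated factor) and its consequences, packaged into the dichotomy ``generates $L_q$ $\Rightarrow$ rich'' and ``rich $\Rightarrow$ generates a high-complexity word''. The only thing to be slightly careful about is the universal quantifier on $q$: \Cref{thm:richHigh} produces a single word $x$ with $\cN(x) > q|x|$ simultaneously for all $q \in (0,1/2)$ (equivalently, $\cN(x) \geq |x|/2$, up to the relevant bound), so the same witness word defeats membership in $L_q$ no matter which $q$ we started with. This matches the statement of \Cref{thm:Lq}, which ranges over all $q \in (0,1/2)$.

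In summary, the proof is:

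\begin{proof}
	Fix $q \in (0,1/2)$ and suppose, towards a contradiction, that $L_q$ is context-free. Then there is a CFG $\grammar$ generating $L_q$. By \Cref{thm:gammaRich}, $\grammar$ is rich. By \Cref{thm:richHigh}, $\grammar$ generates some word $x \in \sstar$ with $\cN(x) > q|x|$. But $\grammar$ generates $L_q$, so $x \in L_q$, whence $\cN(x) < q|x|$ by the definition of $L_q$ --- a contradiction. Therefore $L_q$ is not context-free.
\end{proof}
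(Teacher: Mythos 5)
Your proof is correct and is exactly the argument the paper intends: \Cref{thm:Lq} is stated immediately after the remark that ``\Cref{thm:richHigh} and \Cref{thm:gammaRich} combined imply our main result of this section,'' so your explicit write-up of that combination matches the paper's own (implicit) proof. Your observation about the quantifier on $q$ is also the right reading of \Cref{thm:richHigh}.
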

A language~$L$ is \define{CFL-immune} if it contains no infinite context-free language as a subset. We note here that $L_q$ cannot be CFL-immune, since for every~$x \in \Sigma$, the regular language~$\{ x \}^+$ is contained in $L_q$ (modulo finitely many words, depending on $q$), and each of its words has constant complexity. However, the following holds:

\begin{theorem}\label{thm:compl}
	For every $q \in (0,1/2)$, the language $\sstar \setminus L_q$ is CFL-immune.
\end{theorem}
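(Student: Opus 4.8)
The plan is to show that every infinite context-free subset of $\sstar$ contains a word lying in $L_q$, hence cannot be a subset of $\sstar \setminus L_q$. This immediately gives CFL-immunity of $\sstar \setminus L_q$. So suppose $L \subseteq \sstar$ is an infinite context-free language; I must produce $x \in L$ with $\cN(x) < q|x|$.

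**First I would** invoke the pumping lemma for context-free languages: there is a constant $p$ such that every $w \in L$ with $|w| \geq p$ decomposes as $w = \alpha\beta\gamma\delta\varepsilon$ (relabel to avoid clashing with the empty string; say $w = r s t u v$) with $|sv| \geq 1$, $|stu| \leq p$, and $r s^i t u^i v \in L$ for all $i \geq 0$. Since $L$ is infinite, pick such a $w$. The key observation is that the pumped factors $s$ and $v$ have bounded length (at most $p$), so the family $\{r s^i t u^i v : i \in \nn\}$ consists of words that are ``mostly periodic'' in a controlled way: as $i$ grows, the word is built from a bounded-size scaffold $r, t, v$ together with $i$ copies of the bounded-length blocks $s$ and $u$. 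I would then argue that for large $i$, such a word has low $\cN$-complexity — intuitively because the long repeated blocks can be captured by small loops in an NFA, exactly in the spirit of \Cref{prp:longgg} and the Gap Lemma \Cref{prp:gapsPrp}.

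**The main step** is the complexity estimate. If $s \neq \varepsilon$ (the case $u \neq \varepsilon$ is symmetric, and at least one holds), then for $w_i = r s^i t u^i v$ we have $|w_i| = |rtv| + i(|s|+|u|)$, which grows linearly in $i$, while I claim $\cN(w_i)$ stays bounded by a constant independent of $i$ — or at worst grows far slower than linearly. The cleanest route: build an NFA that reads $r$ along a simple path, then has a loop of length $|s|$ reading $s^i$, then reads $t$, then a loop of length $|u|$ reading $u^i$, then reads $v$ to an accepting state; the number of states is $O(|rtv| + |s| + |u|) = O(p + |rtv|)$, a constant once $w$ is fixed. The delicate point is \emph{exactness}: this NFA must reject all other words of length $|w_i|$. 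Here I would mimic the disjointness/divisibility argument from the proof of the Gap Lemma — by choosing the pumping index $i$ from an arithmetic progression tuned so that the two loop lengths $|s|$ and $|u|$ cannot ``trade off'' against each other (e.g. ensuring the numbers of iterations are forced, using that $|s|, |u| \leq p$ and $|w_i| = n$ pins down the iteration counts uniquely). If $t$ between the two loops is long enough this is automatic; if not, one can instead pump only along a single position (choose $i$ so that only the $s$-loop matters) or pad $r$ and $v$ appropriately, then $\cN(w_i) \leq C$ for a constant $C$ depending only on $L$, while $|w_i| \to \infty$, so eventually $\cN(w_i) < q|w_i|$, i.e. $w_i \in L_q$.

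**The hard part will be** handling exactness cleanly without the freedom of choosing primes that the Gap Lemma enjoyed: here the loop lengths $|s|$ and $|u|$ are handed to us by the grammar, not chosen, so a word of the same length might be accepted by iterating the loops a different number of times. I expect the fix is to restrict to pumping indices $i$ in a residue class (or simply use that once $|t| > |s| + |u|$, or after absorbing excess into the linear parts, the iteration counts are uniquely determined by the total length), or — most robustly — to first pass to the subcase $u = \varepsilon$, $s \neq \varepsilon$ by noting $\{r s^i t v : i \in \nn\} \subseteq L$ is an infinite regular sublanguage whose words have a single loop and hence unambiguously determined structure, for which \Cref{prp:longgg}-style NFAs give $\cN(r s^i t v) \leq |r| + |s| + |tv| = O(1)$ exactly. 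Since at least one of $s, u$ is nonempty, after relabelling we may always reduce to this single-loop situation, and the theorem follows.
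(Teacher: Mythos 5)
Your overall strategy matches the paper's: apply the CFL pumping lemma, observe that the pumped words $rs^i t u^i v$ have long repeated blocks that a small NFA can absorb by looping, and conclude that for large $i$ such words land in $L_q$. You also correctly identify that \emph{exactness} is the crux — a two-loop NFA might accept another word of the same length by trading iterations of the $s$-loop against the $u$-loop. Unfortunately, none of the three fixes you sketch actually closes this gap, and the last one is outright wrong.

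The ``most robust'' fix you propose — passing to the single-loop subcase by taking $u = \varepsilon$, $s \neq \varepsilon$ — is not available. The CFL pumping lemma asserts $rs^itu^iv \in L$ for all $i$; it does \emph{not} let you set one pumping factor to $\varepsilon$ while pumping the other, and in general $\{rs^itv : i \in \nn\}$ is not a subset of $L$. (Consider $L = \{a^nb^n\}$, $s = a$, $u = b$: $rs^itv = a^{i+\mathrm{const}}b^{\mathrm{const}}$ leaves $L$ immediately.) If such a reduction were possible, every infinite CFL would contain an infinite regular language in this trivial way, which is not what the pumping lemma says. Your other hedge — that $|t| > |s| + |u|$ forces the iteration counts — is also false: the trade-off is between the two loop counts and is entirely independent of the length of the middle segment $t$. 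When $|s|$ and $|u|$ share a common factor (e.g.\ $|s| = |u| = 1$), the naive two-loop NFA genuinely accepts spurious words of the same total length, so a real idea is needed here.

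What the paper actually does is re-block the pumped word: it fixes $k = \lfloor 3/q \rfloor + 1$ and chooses the pumping index $\ell' = k|b|(m|a|+1)$, then reads $a^{\ell'}$ as $i_0 = k|a|$ iterations of a loop of word-length $m|b|$ (plus a fixed tail $a^{k|b|}$), and reads $b^{\ell'}$ as $j_0 = k|b|$ iterations of a loop of word-length $m|a|+1$. The resulting Diophantine equation $i\,(m|a|) - j\,(m|a|+1) = 0$ has general solution $(i,j) = \bigl((m|a|+1)t, \, m|a|\,t\bigr)$, and for $m$ large every $t \neq 0$ forces a negative iteration count, which is impossible. This re-blocking — making the two loop lengths differ by exactly $|b|$, so that any nonzero trade-off is too large to be realized — is the missing idea in your proposal, and it is what makes exactness go through without the freedom to choose primes that the Gap Lemma (\Cref{prp:gapsPrp}) enjoys.
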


\begin{proof}
	Recall that $\sstar \setminus L_q = \set{x \in \sstar}{\cN(x) \geq q|x|}$. By the Pumping Lemma for CFGs, if~$L$ is an infinite context-free language, then it contains a set $L'$ of the form
	\[
		L' = \set{ua^\ell v b^\ell w}{u,v,w \in \sstar \land a,b \in \splus \land \ell \geq 0}.
	\]
	We show that $\sstar \setminus L_q$ cannot contain any such $L'$, hence,~$\sstar \setminus L_q$ cannot contain an infinite CFL. Consider some such $L'$ and denote its defining word by
	\[
		\alpha(\ell) = ua^\ell v b^\ell w.
	\]
	We show that~$\cN(\alpha(\ell)) < q|\alpha(\ell)|$ for large enough $\ell$, proving that~$L' \cap (\sstar \setminus L_q)$ is finite.
	
	Consider $\alpha(\ell)$ with base words $a,b$. With~$k = \ip{\frac{3}{q}} + 1$, define the repetition number~$\ell'$ by
	\[
		\ell' = (mk|a||b|) + |b|k.
	\]
Note that $\ell'$ depends on $m$. Now, letting $i_0 = k|a|$ and $j_0 = k|b|$, rewrite $\alpha(\ell')$ as
	\begin{align*}
		\alpha(\ell') = u \; a^{\ell'} \; v \; b^{\ell'} \; w = u \; \left(a^{m|b|} \right)^{i_0} \; a^{k|b|} \; v \; \left(b^{m|a| + 1} \right)^{j_0} \; w.
	\end{align*}
We claim that, for large enough $m$, there exists exactly one accepting run in the automaton~$\mN(\alpha(\ell'))$; the one in which the loop $a^{m|b|}$ is taken exactly $i_0$ times, and, similarly, $b^{m|a| + 1}$ is taken $j_0$ times. To see this, suppose there exists a pair~$(i,j)$ for which~$(i_0+i,j_0-j)$ is a pair of positive naturals, and
\[
	\left| \left( a^{m|b|} \right)^{(i_0 + i)} \; \left( b^{m|a| + 1} \right)^{(j_0-j)} \right| = \left| \left( a^{m|b|} \right)^i \; \left( b^{m|a| + 1} \right)^j \right|
\]
which readily reduces to the Diophantine equation
\[
	i(m|a|) + j(-(m|a| + 1)) = 0.
\]
A particular solution is $(i,j) = (m|a| + 1, m|a|)$, and, hence, the set of general solutions is given by the following (cf.\ for instance \cite[p.\ 34]{stillwell} for a proof of this classical fact):
\[
	S = \set{((m|a| + 1)(1-t), (m|a|)(1-t))}{t \in \zz} = \set{((m|a| + 1)t, m|a|t)}{t \in \zz}
\]
Note that the solution $t=0$ corresponds to our choice of~$(i_0,j_0)$.
We show that, once~$m$ is large enough, no other solution for~$t$ is possible. To see this, note that e.g.\ $t = 1$ implies~$i = m|a| + 1$ and~$j = m|a|$. However, for large enough~$m$, we then have~$j_0 - j = k|b| - m|a| < 0$, which does not make sense---one cannot traverse a loop a negative number of times. This proves exactness. Now, note that for sufficiently large~$m$ we have
\begin{align*}
	\cN(\alpha(\ell')) &\leq |u| + |a|m|b| + |a|k|b| + |v| + |b|(m|a| + 1) + |w|\\
	&= 2m|a||b| + \mathsf{const}
\intertext{while}
	|\alpha(\ell')| &= |u| + \ell'|a| + |v| + \ell'|b| + |w|\\
	&= \ell'(|a| + |b|) + \mathsf{const}\\
	&= (mk|a||b|)(|a| + |b|) + \mathsf{const}.
\intertext{We now complete the proof by noting that}
	\cN(\alpha(\ell')) &\leq 2m|a||b| \leq q \left( m \left(\frac{3}{q} \right) |a||b| \right)(|a| + |b|)\\
	&< q \left( mk|a||b| \right)(|a| + |b|)\\
	&\leq q|\alpha(\ell')|. \qedhere
\end{align*}
\end{proof}

Since~$\cN(x) \leq A_D(x)$ for all words~$x$, \Cref{thm:compl} also implies: 

\begin{corollary}\label[corollary]{cor:ADextension}
For every $q \in (0,1/2)$, $\set{x \in \sstar}{A_D(x) \geq q|x|}$ is CFL-immune.
\end{corollary}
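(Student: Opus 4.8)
The plan is to revisit the \emph{proof} of \Cref{thm:compl} rather than quote its statement: since $\set{x \in \sstar}{A_D(x) \ge q|x|}$ contains $\sstar \setminus L_q = \set{x \in \sstar}{\cN(x) \ge q|x|}$, CFL-immunity of the smaller set does not formally give it for the larger one. Recall that the proof of \Cref{thm:compl}, starting from an arbitrary infinite CFL $L$, uses the Pumping Lemma to extract $L' = \set{u a^\ell v b^\ell w}{\ell \ge 0} \subseteq L$ with $a,b \in \splus$, and then, with $k = \ip{3/q}+1$ and $\ell' = \ell'(m) = (mk|a||b|) + k|b|$, produces an automaton for $\alpha(\ell') = u\,(a^{m|b|})^{k|a|}\,a^{k|b|}\,v\,(b^{m|a|+1})^{k|b|}\,w$ with at most $2m|a||b| + O(1)$ states. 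I would show that for every large $m$ this automaton can be taken \emph{deterministic}, which upgrades the bound to $A_D(\alpha(\ell')) \le 2m|a||b| + O(1) < q|\alpha(\ell')|$; as infinitely many such $\alpha(\ell')$ then lie in $L' \subseteq L$ but outside $\set{x \in \sstar}{A_D(x) \ge q|x|}$, no infinite CFL can be contained in it. (Since $\cN(x) \le A_D(x)$ always, this simultaneously recovers and strictly strengthens \Cref{thm:compl}.)

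Concretely, I would reassemble the automaton from deterministic spines reading $u$, $v$, $w$; two cycles of $m|a||b|$ and $(m|a|+1)|b|$ states reading $a^\infty = aaa\cdots$ and $b^\infty$ respectively; and a single dead state absorbing all unexpected letters. The block $a^{k|b|}$ is realized by reusing the first $k|a||b|$ states of the $a$-cycle (legitimate as $k<m$) and attaching the exit into the $v$-spine at cycle state $k|a||b|$; the $b$-side is handled symmetrically. This is deterministic everywhere except, possibly, at the two states from which a cycle is exited, where the letter continuing the cycle may coincide with the letter read on leaving it. If those two letters differ the automaton is already a DFA. If they coincide --- say $a$ and $v$ begin with the same letter --- I would first absorb into the $a$-block any whole copies of $a$ that head $v$, so that $v$ agrees with $a^\infty$ only on a proper prefix, and then splice in a short deterministic decision gadget that reads this common prefix and branches on the first letter at which $v$ departs from $a^\infty$; this costs fewer than $|a|$ extra states. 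The other exit is treated identically. In total only $O(|u|+|v|+|w|+|a|+|b|)$ states are added --- a constant depending on $L$ alone --- so the bound $2m|a||b| + O(1)$ is preserved.

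Exactness then carries over essentially verbatim from the proof of \Cref{thm:compl}: an accepting run on a string of length $|\alpha(\ell')|$ winds the $a$-cycle $c$ full times and the $b$-cycle $c'$ full times, with $\alpha(\ell')$ corresponding to $(c,c') = (k|a|,k|b|)$, and the length requirement collapses to the linear Diophantine equation $(c+c')\,m|a| + c' = K$ in nonnegative unknowns, whose number of solutions is $O\!\big(k(|a|+|b|)/(m|a|)\big)$ and hence exactly $1$ once $m$ is large. Comparing the $2m|a||b| + O(1)$ states against the length $|\alpha(\ell')| = k|b|(m|a|+1)(|a|+|b|) + O(1)$ gives a ratio tending to $2/\big(k(|a|+|b|)\big) \le q/3 < q$, which closes the argument.

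The main obstacle is precisely the second paragraph. For $\cN$ one may simply leave a loop non-deterministically, whereas a DFA must be \emph{told when} to stop reading copies of $a$, so the two boundary pairs of letters genuinely need to be separated. The care lies in checking that the decision gadget can always be installed with only a constant number of additional states --- this is where absorbing whole copies of $a$ into the preceding block matters --- and that inserting it perturbs neither determinism nor exactness elsewhere in the machine. Everything else is a direct rerun of the argument behind \Cref{thm:compl}.
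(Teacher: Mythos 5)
Your observation that the paper's one-line deduction is insufficient is correct and worth making explicit. The paper invokes $\cN(x) \le A_D(x)$ to derive this corollary from \Cref{thm:compl}; but that inequality makes $\set{x \in \sstar}{A_D(x) \ge q|x|}$ a \emph{superset} of $\set{x \in \sstar}{\cN(x) \ge q|x|}$, and CFL-immunity never transfers from a set to a superset. (The analogous shorthand used in \Cref{sec:open} for the $\cNu$-versions of \Cref{thm:richHigh,thm:compl,thm:sac1} can be justified because the witnessing automata in those proofs happen to accept \emph{uniquely}; but the automaton built in the proof of \Cref{thm:compl} is genuinely non-deterministic at its two cycle exits, so the $A_D$ version requires a separate construction.) Your plan of re-running that proof with an explicit DFA is therefore the right fix, and is genuinely different from the line the paper prints.

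The execution, however, has a concrete gap. Your claim that the decision gadget at a cycle exit costs fewer than $|a|$ extra states is not justified. After absorbing whole copies of $a$ from the head of $v$, the remainder $v'$ may still be a proper prefix of $a$ (so $|v'| < |a|$), and the symbols that follow are the start of $\left(b^{m|a|+1}\right)^{j_0}$; these may continue to agree with $a^\infty$. If $v' b^\infty$ agrees with a shift of $a^\infty$ on a prefix longer than $|v'| + |a| + |b|$, the divergence point is not controlled by $|a|$ alone --- in principle the gadget would have to read $\Theta(m)$ symbols before finding it, which destroys the state count $2m|a||b| + O(1)$. Such a long agreement in fact forces, via the Fine--Wilf periodicity lemma, that $a$ and $b$ are powers of a common primitive word $z$ and that $v'$ is compatible with $z^\omega$; in that case $\alpha(\ell')$ collapses essentially to $u\, z^N\, w'$ and a much simpler single-cycle DFA suffices. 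But this degenerate branch must be detected and treated separately, and the symmetric situation at the $b$-cycle exit into $w$ likewise; your proposal neither detects nor treats it. With that case split added the argument goes through; without it, the bound on the gadget --- and hence the claimed $2m|a||b| + O(1)$ state count --- is unjustified.
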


We conjecture at this stage that recognising $L_q$ in linear space is equivalent to recognising it via a linearly bounded automaton. Using the fact that
\[
	\cN(x) \le \cNu(x) \le 1 + (|x|/2)
\]
and via an encoding of NFAs in the style of~\cite{shallitWang}, this should yield a linear-space brute-force algorithm. However, a careful check is required to claim this as a theorem.

\section[Sets of Low-Complexity Words Cannot Be Recognised by Certain Constant-Depth Circuits]{$L_q$ Cannot Be Recognised by Certain Constant-Depth Circuits}\label{sec:circ}

In this section, we expand on our work in \Cref{sec:Lq} by investigating the complexity of $L_q$ further. Instead of considering pushdown automata, in this section we consider constant-depth circuits. We show that two types of circuits cannot recognise $L_q$ either, which is analogous to \Cref{thm:Lq} for pushdown automata.

Fix $q \in (0,1/2)$ and fix $\Sigma = \{ 0,1 \}$. We first introduce two types of constant depth circuits explicitly---the class $\sac$ in \Cref{ss:c1}, and $\psac$ in \Cref{ss:c2}---and then show that neither can recognise $L_q$, nor its complement.

\subsection[The Circuit Class \textbf{SAC}0]{The Circuit Class $\sac$}\label{ss:c1}

Suppose $k \geq 1$. A language $L$ is called \define{$\mathbf{SAC}^k$-recognisable} if it is recognised by a polynomial-size,~$O(\log^k n)$-depth, uniform\footnote{Requiring uniformity is debatable; see e.g.\ \cite[Remark 29]{bjornMax}.} semi-unbounded fan-in circuit (a circuit is \define{semi-unbounded} if it has unbounded fan-in-$\lor$, bounded fan-in-$\land$, and admits negative literals but no other negations~\cite{borodin}; cf.\ \cite{venk,bjornMax}). Of these classes of particular interest is $\mathbf{SAC}^1$, since it is the class $\mathbf{logCFL}$ of languages which are log-space reducible to context-free languages \cite{sud,venk}. More generally,~$\mathbf{SAC}^k$ enjoys the following relationship with the classical classes~$\mathbf{AC}^k$ and~$\mathbf{NC}^k$:
\begin{align*}
	&\mathbf{NC}^k \subseteq \mathbf{SAC}^k \subseteq \mathbf{AC}^k \subseteq \mathbf{NC}^{k+1} &\text{for all $k \geq 1$.}
\end{align*}
Just like $\mathbf{NC}^k$ and $\mathbf{AC}^k$, the class $\mathbf{SAC}^k$ is also closed under complements \cite[Corollary~15]{borodin}.

Here, we consider the class~$\sac$. Contrary to the classes above,~$\sac$ is \emph{not} closed under complementation \cite{borodin}. Note that $\sac$-circuits have \emph{constant} depth; hence, the $\sac$-recognisable languages can be characterised by formulas in a simple propositional language, as expressed in \Cref{lemt}. We give a formal definition of $\sac$ due to Kjos-Hanssen~\cite{bjornMax}.

\begin{definition}\label[definition]{dfn:cccc}
	A language $L \subset \{ 0,1 \}^*$ is \define{$\sac$-recognisable} if there exists a family~$(C_i)_{i < \omega}$ of Boolean circuits which recognises $L$ and which satisfies the following:
	\begin{enumerate}
		\item Each $C_i$ is defined over the basic set $\{ \land,\lor \}$ and accepts negative literals.
		\item The family $(C_i)_{i < \omega}$ has constant depth.
		\item Each $C_i$ has unbounded fan-in-$\lor$ and bounded fan-in-$\land$.
		\item Each $C_i$ accepts words of length $i$.
	\end{enumerate}
\end{definition}

Note that, for the classes $\mathbf{SAC}^k$ with~$k > 0$, the size of the circuit must be polynomial in~$n$. However, this requirement is redundant for $\sac$ (cf.~\cite[Remark~30]{bjornMax}).
An important characterisation of $\sac$-recognisable languages follows (cf.\ \cite{bjornMax} and \cite[p.~560]{borodin}).

\begin{lemma}\label[lemma]{lemt}
	A language $L \subset \sstar$ is $\sac$-recognisable iff there exists $c \in \nn$ such that: for every~$n \in \nn$ and every $x \in \Sigma^n$ there exist~$k_n \in \nn$ and a Boolean formula~$\psi_n = \bigvee^{k_n}_{i=1} \varphi_{i,n}$ for which~$\varphi_{i,n}$ is a conjunction of at most $c$ literals, and
	\[
		x \in L \iff \psi_n(x) \text{ holds.}
	\]
\end{lemma}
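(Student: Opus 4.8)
The statement to prove is \Cref{lemt}: that a language $L \subseteq \sstar$ is $\sac$-recognisable precisely when membership in $L$ at each length $n$ is captured by a DNF-style formula $\psi_n = \bigvee_{i=1}^{k_n}\varphi_{i,n}$ whose disjuncts $\varphi_{i,n}$ are conjunctions of at most $c$ literals, with $c$ a single constant independent of $n$. The plan is to prove both directions by a structural depth-reduction argument on the circuit, exploiting the fact that $\sac$-circuits have \emph{constant} depth. The easy direction is right-to-left: given the formulas $\psi_n$, each is already a depth-$2$ semi-unbounded circuit over $\{\land,\lor\}$ with negative literals at the leaves (an unbounded $\lor$ of fan-in-$\le c$ conjunctions), so the family $(\psi_n)_{n<\omega}$ witnesses $\sac$-recognisability directly by \Cref{dfn:cccc}; the only point to check is that each $\psi_n$ reads exactly the $n$ input bits, which we may assume w.l.o.g.\ by padding with trivially true conjuncts.

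For the left-to-right direction, the key step is to push all the $\land$-gates down to the leaves and all the $\lor$-gates up to the root, collapsing the circuit into a depth-$2$ form. Let $(C_n)_{n<\omega}$ be an $\sac$-family of depth $d$ (a fixed constant) with bounded $\land$-fan-in $b$. The plan is to argue by induction on depth: at the bottom of the circuit we have literals; an $\land$-gate of fan-in $\le b$ applied to subformulas each already in DNF with conjunct-width $\le w$ distributes over the disjunctions to yield a DNF with conjunct-width $\le bw$ (and a larger, but irrelevant, number of disjuncts); an $\lor$-gate applied to DNF subformulas simply concatenates the disjunctions, leaving the conjunct-width unchanged. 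Iterating through the $d$ layers, the conjunct-width is multiplied by at most $b$ each time an $\land$-layer is crossed, so the final DNF has conjunct-width at most $c := b^{d}$, which is a constant because $b$ and $d$ are. This gives the desired $\psi_n = \bigvee_i \varphi_{i,n}$ with each $\varphi_{i,n}$ a conjunction of at most $c$ literals, and $x\in L \iff C_n(x)=1 \iff \psi_n(x)$ holds.

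The main obstacle — or rather the main point requiring care — is that the distributive normalisation blows up the \emph{size} (number of disjuncts) of the formula exponentially, so this transformation is legitimate only because \Cref{dfn:cccc} and the preceding \Cref{lemt} statement place no size bound on $\psi_n$ (indeed the paper explicitly notes, citing \cite[Remark~30]{bjornMax}, that the polynomial-size requirement is redundant for constant-depth $\sac$). One should also be slightly careful about negations: $\sac$-circuits admit negative literals but no internal negation gates, so negations never need to be propagated through $\land$/$\lor$ during the normalisation — the leaves are already (possibly negated) variables and stay that way. A final bookkeeping remark: the constant $c$ depends only on the circuit family's depth and $\land$-fan-in bound, both of which are uniform in $n$ by definition, so the same $c$ works for all $n$, as the statement demands; conversely, in the easy direction the depth of the witnessing circuit family is the constant $2$ and its $\land$-fan-in bound is the constant $c$ from the hypothesis, so semi-unboundedness holds with room to spare.
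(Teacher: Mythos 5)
Your proof is correct and follows exactly the approach the paper alludes to (but does not spell out): the paper merely remarks that \Cref{lemt} ``can be deduced from the distributive properties of propositional languages,'' and your distributive depth-reduction argument, with the observation that the exponential blowup in the number of disjuncts is harmless because $\sac$ imposes no size bound at constant depth, is precisely that deduction carried out in full.
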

Using this lemma, which can be deduced from the distributive properties of propositional languages, we conclude:

\begin{theorem}\label{thm:sac1}
	For $q \in (0,1/2)$ and $|\Sigma| = 2$,~$L_q \not\in \sac$ and $\sstar \setminus L_q \not\in \sac$.
\end{theorem}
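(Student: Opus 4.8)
The strategy is a counting argument over $\sac$-recognisable languages, using the structural characterisation of \Cref{lemt}. Suppose toward a contradiction that $L_q \in \sac$. Then there is a constant $c \in \nn$ such that for each $n$ there is a formula $\psi_n = \bigvee_{i=1}^{k_n} \varphi_{i,n}$, each $\varphi_{i,n}$ a conjunction of at most $c$ literals, with $x \in L_q \iff \psi_n(x)$ for all $x \in \Sigma^n$. The point is that a single clause $\varphi_{i,n}$ constrains at most $c$ coordinates of $x$, so the set of $x \in \Sigma^n$ satisfying it is a subcube of dimension at least $n - c$, hence of cardinality at least $2^{n-c}$. Thus $|L_q \cap \Sigma^n| \geq 2^{n-c}$ whenever $k_n \geq 1$, i.e.\ whenever $L_q \cap \Sigma^n \neq \emptyset$. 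But by \Cref{prp:longgg} (or the Gap Lemma, \Cref{prp:gapsPrp}), $L_q \cap \Sigma^n$ is non-empty for all sufficiently large $n$ — for instance any sufficiently long power $x^m$ of a fixed short word lands in $L_q$, and such words exist of every sufficiently large length by padding. Therefore $|L_q \cap \Sigma^n| \geq 2^{n-c}$ for all large $n$, so $|L_q \cap \Sigma^n| \in \Omega(2^n)$, contradicting \Cref{usecor27}, which asserts $|L_q \cap \Sigma^n| \in o(2^n)$.

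For the complement, argue symmetrically: if $\sstar \setminus L_q \in \sac$, then by the same subcube argument $|(\sstar \setminus L_q) \cap \Sigma^n| \geq 2^{n-c}$ whenever this set is non-empty, which holds for all large $n$ by \Cref{thm:compl} (the complement is CFL-immune, in particular infinite, and in fact contains words of every sufficiently large length — e.g.\ any word of near-maximal complexity). Hence $|L_q \cap \Sigma^n| = 2^n - |(\sstar \setminus L_q) \cap \Sigma^n| \leq 2^n - 2^{n-c} = 2^n(1 - 2^{-c})$, so $|L_q \cap \Sigma^n|/2^n$ is bounded away from $1$, contradicting \Cref{usecor27} together with \Cref{dfn:aa}: since $|L_q \cap \Sigma^n| \in o(2^n)$ we have $|(\sstar \setminus L_q) \cap \Sigma^n|/2^n \to 1$, which is incompatible with the upper bound $1 - 2^{-c}$ just derived. (Equivalently: $\sstar \setminus L_q$ being $\sac$-recognisable forces $|(\sstar\setminus L_q)\cap\Sigma^n| \ge 2^{n-c}$, but $o(2^n) = |L_q\cap\Sigma^n|$ forces $|(\sstar\setminus L_q)\cap\Sigma^n| = 2^n - o(2^n)$, and both can coexist — so we instead play the counting argument the other direction, bounding $|L_q\cap\Sigma^n|$ from below by $2^{n-c}$ using $L_q \in \sac$. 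The cleanest route is thus: only $L_q \not\in \sac$ needs the non-emptiness of $L_q$, and $\sstar\setminus L_q \not\in \sac$ needs a lower bound on $|L_q\cap\Sigma^n|$ which we must supply separately — but here $|L_q \cap \Sigma^n| \ge c'$ for a small absolute constant always, which already gives $|L_q\cap\Sigma^n| \ge 1$, not enough; so for the complement we genuinely need $|L_q \cap \Sigma^n| \ge 2^{n-c}$, which is false. Hence the correct argument for the complement is the one given first: $\sstar\setminus L_q \in \sac$ yields $|(\sstar\setminus L_q)\cap\Sigma^n| \ge 2^{n-c}$, hence $|L_q\cap\Sigma^n| \le (1-2^{-c})2^n$, contradicting $|L_q\cap\Sigma^n| = 2^n - o(2^n)$ which follows from \Cref{usecor27} applied to the complement.)

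The only genuine subtlety — and the step to check carefully — is the claim that $L_q \cap \Sigma^n$ and $(\sstar\setminus L_q)\cap\Sigma^n$ are non-empty for \emph{all} sufficiently large $n$, not merely infinitely many $n$, since $\sac$-recognisability as formalised in \Cref{lemt} quantifies over all $n$. For $L_q$ this follows from \Cref{prp:longgg}: fix a one-letter word, say $0$; then $0^m \in L_q$ for all $m > 1/q$, covering every length past a threshold. For the complement, one notes that words of length $n$ with $\cN(x) \ge q n$ exist for all large $n$ — e.g.\ by the Shannon effect (\Cref{thm:sh}) almost every word of length $n$ has $\cN$-complexity near $n/2 > qn$, so certainly at least one does. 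With non-emptiness secured, the subcube counting and the appeal to \Cref{usecor27} finish the argument. I expect the subcube cardinality bound and the non-emptiness bookkeeping to be the only places needing care; the rest is immediate from \Cref{lemt} and the combinatorial results of \Cref{sec:LqProps}.
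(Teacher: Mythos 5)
Your first half (showing $L_q \not\in \sac$) is exactly the paper's argument: any non-empty clause $\varphi_{1,n}$ with at most $c$ literals forces $\psi_n$ to accept at least $2^{n-c}$ words of length $n$, contradicting $|L_q \cap \Sigma^n| \in o(2^n)$ from \Cref{usecor27}. That part is fine.

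Your second half, however, has a genuine gap. The counting bound is \emph{one-sided}: if a set is $\sac$-recognisable and nonempty at length $n$, then it has at least $2^{n-c}$ elements there. This rules out a set that is $o(2^n)$ in size, but it cannot rule out a set of size $2^n - o(2^n)$. You notice this yourself mid-proposal (``both can coexist''), but then the final sentence of the parenthetical reintroduces the argument by asserting that $|L_q \cap \Sigma^n| = 2^n - o(2^n)$ ``follows from \Cref{usecor27} applied to the complement.'' That is a sign error: \Cref{usecor27} says $|L_q \cap \Sigma^n| \in o(2^n)$, i.e.\ $L_q$ is the \emph{small} set, so the bound $|L_q \cap \Sigma^n| \le (1 - 2^{-c}) 2^n$ you derive is perfectly consistent with it. There is no contradiction, and hence no proof.

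The paper's argument for the complement is of a different flavour and actually uses the Gap Lemma, \Cref{prp:gapsPrp}, rather than counting. If $\sstar \setminus L_q$ is recognised by $\psi_n = \bigvee_{i} \varphi_{i,n}$ with each $\varphi_{i,n}$ a conjunction of at most $c$ literals, consider a single clause $\varphi_{1,n}$ and write down a specific $x \in \Sigma^n$: set $x_i = 1$ on the (at most $c$) coordinates appearing as positive literals in $\varphi_{1,n}$ and $x_i = 0$ everywhere else. This $x$ satisfies $\varphi_{1,n}$, hence $\psi_n(x)$ holds and the circuit places $x$ in $\sstar \setminus L_q$. But $x$ has Hamming weight at most $c$, so by the Gap Lemma, for all sufficiently large $n$ we have $x \in L_q$ --- a contradiction. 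The crucial idea you are missing is that a $\sac$-clause can be satisfied by a word of bounded Hamming weight, and the Gap Lemma guarantees such words have small $\cN$-complexity. A cardinality argument alone cannot reach the complement.
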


\begin{proof}
The proof uses a counting argument using \Cref{lemt}. First, suppose that~$L_q \in \sac$, witnessed by a sequence of formulas $(\psi_n)_{n < \omega}$ and a constant~$c$. Consider~$\psi_n$. Since~$\varphi_{1,n}$ mentions at most~$c$ variables, the circuit accepts every word which agrees on these~$c$ variables. Hence,~$\psi_n$ accepts at least~$2^{n-c}$ words. Yet the order of $L_q$ is in~$o\left(2^n\right)$, by \Cref{usecor27}, which contradicts the fact that~$(\psi_n)_{n < \omega}$ recognises~$L_q$.
	
Now, suppose $\sstar \setminus L_q \in \sac$, again accepted by~$(\psi_n)_{n < \omega}$ with constant~$c$. Separate the positive from the negative literals in $\varphi_{1,n}$; there are at most $c' \leq c$ such positive literals. Thus, for any word~$x = x_1\cdots x_n \in \sstar$, if~$x_i = 1$ for all such positive literals, and~$x_i = 0$ everywhere else, then~$\psi_n$ accepts~$x$. But for large enough~$n$, such~$x$ is in~$L_q$ by \Cref{prp:gapsPrp}, which contradicts the fact that~$(\psi_n)_{n < \omega}$ recognises~$\sstar \setminus L_q$. 
\end{proof}

\subsection[The Circuit Class +\textbf{SAC}0]{The Circuit Class $\psac$}\label{ss:c2}

In this section, we consider the class $\psac$, whose definition differs from that of $\sac$ only in the choice of base set. Let $\oplus$ denote the $\mathsf{XOR}$ operation. As before, fix~$q \in (0,1/2)$.

\begin{definition}\label[definition]{psacdfn}
	A language $L \subset \{ 0,1 \}^*$ is \define{$\psac$-recognisable} if there exists a family~$(C_i)_{i < \omega}$ of Boolean circuits which recognises $L$ and which satisfies the following:
	\begin{enumerate}
		\item Each $C_i$ is defined over the basic set $\{ \land,\xor \}$ and accepts negative literals.
		\item The family $(C_i)_{i < \omega}$ has constant depth.
		\item Each $C_i$ has unbounded fan-in-$\xor$ and bounded fan-in-$\land$.
		\item Each $C_i$ accepts words of length $i$.
	\end{enumerate}
\end{definition}
From this definition and the following observation, we can investigate languages larger than binary. Recall that in the previous subsection, we focussed solely on the two-element alphabet~$\{ 0,1 \}$. This was forced by the fact that Boolean expressions have trouble expressing Boolean operations on non-binary languages (e.g.\ what does the formula~$0 \land 2$ evaluate to?). This can be remedied in the class $\psac$ for some languages, courtesy of the operator~$\xor$.

It is readily seen that $(\{ 0,1 \}, \xor, \land)$ is isomorphic to the field of two elements
\[
	\ff_2 = (\zz/2\zz ,  + , \times).
\]
(Studying Boolean circuits in terms of the arithmetic of~$\ff_2$ goes back to G\'al and Wigderson \cite{gal}. We also mention here similarities to the work of Razborov-Smolensky \cite{razb,smol,smol2}.) To extend this equivalence beyond binary alphabets, take the field~$\ff_{p}$ for some prime~$p > 2$. By interpreting~$(\xor,\land)$ as $(+,\times)$ mod~$p$, we extend~$\sac$-recognisability to alphabets of prime cardinality. Below, we give a natural characterisation of~$\psac$-recognisability in terms of propositional formulas, similar to that in \Cref{lemt}.\footnote{For a classical definition of $\psac$ in terms of the complexity of Boolean circuits see e.g.~\cite[Section~4]{bjornMax}.}

\begin{definition}\label[definition]{lemtt}
	Let $|\Sigma| = p$ for some~$p \in \pp$. Then~$L$ is~$\psac$-recognisable if there exists~$c \in \nn$ such that: for every~$n \in \nn$ and every $x \in \Sigma^n$ there exists $k_n \in \nn$ and a formula~$\psi_n = \bigoplus^{k_n}_{i=1} \varphi_{i,n}$ for which~$\varphi_{i,n}$ is a conjunction of at most $c$ literals and
	\[
		x \in L \iff \psi_n(x) \neq 0.
	\]
\end{definition}

\begin{remark}
Observe that there is a subtle difference between $\sac$ and $\psac$ in the case~$p=2$. An $\sac$ circuit accepts a word $x \in \Sigma^n$ if \emph{any} term in the disjunction of $\psi_n(x)$ holds. On the contrary, in $\psac$, the disjunction is interpreted as addition modulo $2$, and, hence,~$x$ is accepted only if the number of terms in the disjunction of $\psi_n$ is odd.
Also, note that \Cref{lemtt} requires a real-world formalism in which gates are able to carry out addition and multiplication modulo~$p$ as a primitive. This assumption is not needed when~$p=2$, as such Boolean circuits can be modelled using $\xor$ and $\land$, as mentioned.
\end{remark}
For completeness, we mention here that $\sac \neq \cosac$ (see \cite{borodin}), while we have~$\copsac = \psac$ (inverting a polynomial in a finite field requires only a constant number of layers; we use this fact in the proof of \Cref{thm:sac2}). Further, we know~$\sac \not\subseteq \psac$ \cite[Theorem 39]{bjornMax}.

Below, we prove the following complexity characterisation of alphabets of prime cardinality.

\begin{thm:ssss}
	Let $q \in (0,1/2)$ and $|\Sigma| = p$ for some prime $p$. Then $L_q \not\in \psac$ and~$\sstar \setminus L_q \not\in \psac$.
\end{thm:ssss}

By translating prime-cardinality-alphabets into finite fields, we may use the tools of field theory. In this section, we collect facts about finite fields which we require to prove \Cref{thm:sac2}.

\begin{lemma}\label[lemma]{propsFF}
	Let $\ff$ be a finite field.
	\begin{enumerate}
		\item By prime decomposition, $\ff$ has prime characteristic.
		\item $\ff$ has order $p^n$ for some $p \in \pp$. \cite[33.2, 33.10]{fraleigh}
		\item If $\ff$ has order~$p^n$ then~$\ff$ has characteristic~$p$. \cite[Sec.\ 14.3]{dummit}
		\item For every $p \in \pp$ and $n \in \nn$, there is one field up to isomorphism of order~$p^n$~\cite[33.12]{fraleigh}. This field has a subfield of order $p$, the \emph{prime subfield}.
		\item All functions from $\ff$ to itself are polynomials. \cite[Exercises 22: 31.c.] {fraleigh} \label{poly}
		\item If $\ff$ has order $p^n$ and $x \in \ff$ then $x^{p^m} = x^{p^{m+n}}$ for all $m \in \nn$. In particular, $x = x^{p^n}$, since the multiplicative subgroup of $\ff$ has order $p^n - 1$. \cite[p.\ 550]{dummit} \label{expo}
	\end{enumerate}
\end{lemma}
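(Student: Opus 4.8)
The plan is to treat each clause as a standard fact of elementary field theory; beyond the textbook citations already attached to the statement, I would supply short self\Hyphdash{}contained arguments so that the lemma is usable without further lookup. I begin with (1): in any field the characteristic is $0$ or prime, since a factorisation $\mathrm{char}(\ff) = ab$ with $1 < a,b < \mathrm{char}(\ff)$ would give $(a \cdot 1_\ff)(b \cdot 1_\ff) = 0$ with neither factor zero, contradicting the absence of zero divisors; and a finite field cannot have characteristic $0$, as the elements $1_\ff, 2 \cdot 1_\ff, 3 \cdot 1_\ff, \ldots$ would then be pairwise distinct. So the characteristic is a prime, which I fix as $p$.

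For (2)--(4) I would argue as follows. Having characteristic $p$, the field $\ff$ contains the subring $\{ 0, 1_\ff, 2 \cdot 1_\ff, \ldots, (p-1)\cdot 1_\ff \}$, which is a subfield isomorphic to $\zz/p\zz$ --- this is the prime subfield of order $p$ required in (4). Then $\ff$ is a vector space over $\zz/p\zz$ of some finite dimension $n \geq 1$ by finiteness, so $|\ff| = p^n$, which is (2). Conversely, if $|\ff| = p^n$ then by Lagrange the additive order of $1_\ff$ divides $p^n$ and hence is a power of $p$; being prime by (1), it equals $p$, which is (3). Uniqueness up to isomorphism in (4) --- $\ff$ being the splitting field of $X^{p^n} - X$ over $\zz/p\zz$ --- I would simply cite \cite[33.12]{fraleigh}.

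For (6), and then (5), I would use the multiplicative group $\ff \setminus \{ 0 \}$, of order $p^n - 1$: by Lagrange, $x^{p^n - 1} = 1$ for every nonzero $x$, so $x^{p^n} = x$ for all $x \in \ff$ (the case $x = 0$ being immediate). Substituting inside the exponent gives $x^{p^{m+n}} = \left( x^{p^n} \right)^{p^m} = x^{p^m}$ for all $m \in \nn$, which is (6); note that the displayed ``in particular'' clause $x = x^{p^n}$ is exactly the intermediate identity just used. Finally, for (5), Lagrange interpolation over the finite point set $\ff$ represents an arbitrary $f \colon \ff \to \ff$ by the polynomial
\[
	f(X) = \sum_{a \in \ff} f(a) \prod_{b \in \ff \setminus \{ a \}} \frac{X - b}{a - b}
\]
of degree at most $p^n - 1$; alternatively this is \cite[Exercises 22: 31.c.]{fraleigh}.

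There is no genuine mathematical obstacle here --- every step is routine --- so the only care needed is organisational: to present the implications in a non\Hyphdash{}circular order (characteristic prime $\Rightarrow$ prime subfield $\Rightarrow$ cardinality $p^n$ $\Rightarrow$ the Fermat\Hyphdash{}type identity $x = x^{p^n}$ $\Rightarrow$ the exponent identity in (6) and the interpolation formula in (5)), and to flag the identity $x = x^{p^n}$ explicitly, since it is what later proofs --- in particular the argument for \Cref{thm:sac2} --- actually invoke.
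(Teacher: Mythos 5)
Your proposal is correct, and it matches the paper's treatment: the paper offers no proof of this lemma beyond citations to Fraleigh and Dummit--Foote, and your self-contained arguments (zero divisors for prime characteristic, the prime subfield and vector-space dimension count for the order, Lagrange in the multiplicative group for $x^{p^n}=x$, and Lagrange interpolation for clause (5)) are exactly the standard proofs those citations point to. Nothing is missing, and your remark that the identity $x = x^{p^n}$ is the piece actually invoked later (in \Cref{thm:sac2} and \Cref{isoProp}) is accurate.
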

If $p \in \pp$ and $n \in \nn$, let $\ff_{p^n}$ denote the (unique up to isomorphism) field of order~$p^n$.

\begin{lemma}\label[lemma]{trace}
Suppose $\varphi \colon \ff_{p^n} \rightarrow \ff_{p}$ is linear, i.e.
\[
	\varphi(x+y) = \varphi(x) + \varphi(y) \;\;\;\;\; \text{ and } \;\;\;\;\; \varphi(ax) = a\varphi(x)
\]
for all $x,y \in \ff_{p^n}$ and $a \in \ff_p$. Then there exist $a_1,\ldots,a_n \in \ff_{p^n}$ for which
\[
	\varphi(x) = \sum_{i=1}^n a_i x^{p^i}.
\]
In fact, every linear function from $\ff_{p^n}$ to $\ff_p$ arises in this way.
\end{lemma}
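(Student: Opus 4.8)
The plan is to realise $\varphi$ as a polynomial map and read off the constraint that additivity imposes on its exponents. By \Cref{propsFF}(5), $\varphi$ — regarded as a function $\ff_{p^n}\to\ff_{p^n}$, its values lying in the prime subfield $\ff_p\subseteq\ff_{p^n}$ — is induced by some $f\in\ff_{p^n}[x]$. Reducing $f$ modulo $x^{p^n}-x$, which by \Cref{propsFF}(6) vanishes identically on $\ff_{p^n}$ and hence does not change the induced function, I may assume $\deg f<p^n$ and write $f=\sum_{j=0}^{p^n-1}c_jx^j$. I note that the second hypothesis, $\varphi(ax)=a\varphi(x)$ for $a\in\ff_p$, is automatic from additivity (for $a\in\ff_p$ one has $ax=x+\cdots+x$ with $a$ summands, so $\varphi(ax)=a\varphi(x)$), so the real content of the lemma concerns additive polynomial maps alone.

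Next I would turn additivity into a polynomial identity. Put $P(x,y)=f(x+y)-f(x)-f(y)\in\ff_{p^n}[x,y]$. Since $\deg f<p^n$, both $\deg_xP<p^n$ and $\deg_yP<p^n$, and by additivity of $\varphi$ the polynomial $P$ vanishes on all of $\ff_{p^n}\times\ff_{p^n}$; a double application of the fact that a nonzero univariate polynomial of degree $<p^n$ cannot vanish on all $p^n$ elements of $\ff_{p^n}$ forces $P=0$ identically. Expanding, $P=-c_0+\sum_{j=2}^{p^n-1}c_j\bigl((x+y)^j-x^j-y^j\bigr)$ (the $j=1$ term vanishes), so the constant term gives $c_0=0$. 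For the remaining coefficients I argue by descending total degree: if some $c_j\neq 0$ with $j\geq 2$ and $j$ not a power of $p$, choose the largest such $j^\ast$; since $(x+y)^j-x^j-y^j=0$ whenever $j$ is a power of $p$ (Frobenius) and has total degree exactly $j$ otherwise, the homogeneous degree-$j^\ast$ component of $P$ is $c_{j^\ast}\bigl((x+y)^{j^\ast}-x^{j^\ast}-y^{j^\ast}\bigr)$, which is nonzero — contradicting $P=0$. Hence $c_j=0$ unless $j=1$ or $j$ is a power of $p$, and $f(x)=c_1x+\sum_{m\geq 1,\ p^m\leq p^n-1}c_{p^m}x^{p^m}$ involves only the exponents $p^0,p^1,\ldots,p^{n-1}$.

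To match the asserted form, use $x^{p^n}=x$ on $\ff_{p^n}$ (\Cref{propsFF}(6)): setting $a_n=c_1$, $a_i=c_{p^i}$ for $1\leq i\leq n-1$, and $a_i=0$ for any index not appearing, gives $\varphi(x)=\sum_{i=1}^{n}a_ix^{p^i}$. For the converse clause, each map $x\mapsto a_ix^{p^i}$ is additive by $(x+y)^{p^i}=x^{p^i}+y^{p^i}$, so every function of the displayed form is additive and therefore $\ff_p$-linear; combined with the first part, the family $\{\,x\mapsto\sum_{i=1}^n a_ix^{p^i}:a_i\in\ff_{p^n}\,\}$ is precisely the set of linear maps $\ff_{p^n}\to\ff_{p^n}$ whose image lies in $\ff_p$, which is what the lemma asserts.

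The only step carrying real content is the combinatorial claim that $(x+y)^j-x^j-y^j\neq 0$ in $\ff_{p^n}[x,y]$ whenever $j\geq 2$ is not a power of $p$: writing $j=p^ab$ with $p\nmid b$ and necessarily $b\geq 2$, one has $(x+y)^j=(x^{p^a}+y^{p^a})^b$, whose coefficient of the mixed monomial $x^{p^a(b-1)}y^{p^a}$ is $\binom{b}{1}=b\neq 0$ in $\ff_p$. Everything else — the polynomial representation, the vanishing-polynomial lemma, and the exponent bookkeeping — is routine.
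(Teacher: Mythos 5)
Your argument is correct, and it takes a genuinely different route from the one the paper relies on. The paper invokes the standard field-trace machinery from Lidl--Niederreiter: one defines $\mathrm{Tr}(x)=\sum_{i=0}^{n-1}x^{p^i}$, checks it is a nonzero $\ff_p$-linear map onto $\ff_p$, shows the maps $x\mapsto\mathrm{Tr}(zx)$ for $z\in\ff_{p^n}$ give $p^n$ distinct elements of the $n$-dimensional dual space $\mathrm{Hom}_{\ff_p}(\ff_{p^n},\ff_p)$, and so exhaust it; unwinding $\mathrm{Tr}(zx)=\sum z^{p^i}x^{p^i}$ and reindexing via $x^{p^n}=x$ gives the stated form. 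That route yields, essentially for free, the refinement the paper records immediately after the lemma: the $a_i$ may all be taken as $z^{p^i}$ for a \emph{single} $z\in\ff_{p^n}$. You instead prove the classification of additive polynomial maps (``$p$-polynomials'') directly: represent $\varphi$ by a reduced polynomial of degree $<p^n$, turn additivity into the identity $f(x+y)=f(x)+f(y)$ in $\ff_{p^n}[x,y]$ via the vanishing-polynomial argument, and then use the binomial-coefficient computation (the $\binom{b}{1}$ coefficient of $x^{p^a(b-1)}y^{p^a}$) to kill every exponent that is not a power of $p$. This is more elementary and self-contained, but it does not exhibit the single $z$ that the trace argument produces. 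One small slip at the end: the family $\{\,x\mapsto\sum_{i=1}^n a_i x^{p^i}:a_i\in\ff_{p^n}\,\}$ is the set of \emph{all} $\ff_p$-linear endomorphisms of $\ff_{p^n}$, not only those landing in $\ff_p$; the lemma only claims that the maps into $\ff_p$ are contained in this family, which is exactly what your main argument establishes, so the concluding sentence overstates slightly but the proof itself is sound.
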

For a proof and related details on \emph{field traces}, see for instance~\cite[Theorem~2.24]{lidl} and~\cite[Chapter~2.3]{lidl}.
In fact, their proof shows that there exists \emph{one}~$z \in \ff_{p^n}$ for which $a_i = z^{p^i}$.
We now give a characterisation of $\psac$ in terms of finite fields and their operations. This characterisation is akin to that of $\sac$ in \Cref{lemt} in terms of propositional formulas.

\begin{proposition}\label[proposition]{isoProp}
	Let $\phi_n \colon \ff_p^n \rightarrow \ff_{p^n}$ be a linear isomorphism of vector spaces over~$\ff_p$, and suppose~$L \subset \sstar$ is~$\psac$-recognisable. Then there exists a family of polynomials $(f_n)_{n \in \nn}$ with~$f_n \colon \ff_{p^n} \rightarrow \ff_p$ for which
	\[
		x \in L \cap \Sigma^n \iff (f_n \circ \phi_n)(x) \neq 0
	\]
	and for which there exists $\ell \in \nn$ such that for all $n \in \nn$ we have $\deg(f_n) \leq p^n - p^{n-\ell}$.
\end{proposition}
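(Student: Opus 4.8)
The plan is to unfold the propositional characterisation of $\psac$-recognisability in \Cref{lemtt}, transport the resulting multivariate polynomial over $\ff_p$ to a univariate polynomial over $\ff_{p^n}$ along $\phi_n$, and then control the degree of the latter by a base-$p$ digit-counting argument.

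By \Cref{lemtt} there are a constant $c$ and, for each $n$, a formula $\psi_n=\bigoplus_{i=1}^{k_n}\varphi_{i,n}$ with each $\varphi_{i,n}$ a conjunction of at most $c$ literals such that, identifying $\Sigma$ with $\ff_p$ and reading $\oplus,\land$ as $+,\times$ in $\ff_p$, we have $x\in L\cap\Sigma^n\iff\psi_n(x)\neq0$ for every $x\in\ff_p^n$. Each $\varphi_{i,n}$ depends on at most $c$ of the inputs $x_1,\dots,x_n$; replacing it by its unique representative in which every variable occurs to degree $<p$ shows it is a polynomial of total degree at most $c(p-1)$, and summing over $i$ shows the same for $\psi_n$.

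To produce $f_n$, let $\lambda_1,\dots,\lambda_n\colon\ff_{p^n}\to\ff_p$ be the coordinate functionals of $\phi_n^{-1}$, so that $\lambda_j(\phi_n(x))=x_j$ for all $x\in\ff_p^n$. Each $\lambda_j$ is $\ff_p$-linear, so by \Cref{trace} it is a linearised polynomial $\lambda_j(y)=\sum_{i=1}^n a_{j,i}y^{p^i}$ with $a_{j,i}\in\ff_{p^n}$; using the identity $y^{p^n}=y$ in $\ff_{p^n}$ (\Cref{propsFF}) we may assume the exponents occurring in $\lambda_j$ lie in $\{p^0,p^1,\dots,p^{n-1}\}$. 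Define $f_n(y)=\psi_n(\lambda_1(y),\dots,\lambda_n(y))$, a polynomial over $\ff_{p^n}$. Then $(f_n\circ\phi_n)(x)=\psi_n(x)$ for every $x\in\ff_p^n$; since $\phi_n$ is onto, $f_n$ takes its values in $\ff_p$, and the required equivalence $x\in L\cap\Sigma^n\iff(f_n\circ\phi_n)(x)\neq0$ holds.

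It remains to bound $\deg f_n$, where we replace $f_n$ by its reduction modulo $y^{p^n}-y$ (which computes the same function and has degree $<p^n$); this is the heart of the argument. Every monomial of $f_n$ before reduction comes from a monomial $\prod_j x_j^{d_j}$ of $\psi_n$ with $t:=\sum_j d_j\le c(p-1)$, and after substitution is a sum of monomials $y^e$ with $e=p^{i_1}+\cdots+p^{i_t}$ and each $i_k\in\{0,\dots,n-1\}$. Reducing such an exponent modulo $y^{p^n}-y$ amounts to carrying base-$p$ digits, with a carry out of the top position wrapping back to the units position because $p^n\equiv1\pmod{p^n-1}$; each carry replaces $p$ of the powers of $p$ by a single one, so the canonical exponent has base-$p$ digit sum at most $t\le c(p-1)$. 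Since the largest integer below $p^n$ with digit sum at most $c(p-1)$ is $(p-1)(p^{n-1}+\cdots+p^{n-c})=p^n-p^{n-c}$, taking $\ell=c$ yields $\deg f_n\le p^n-p^{n-\ell}$ for every $n$ (for the finitely many $n<\ell$ this is trivial, as $\deg f_n<p^n$ already). The one genuinely nontrivial point is making this digit-sum bookkeeping precise — in particular verifying that the wrap-around carry never increases the digit sum — while everything else is routine.
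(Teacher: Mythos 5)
Your proposal takes the same overall route as the paper: apply \Cref{lemtt}, compose with $\phi_n^{-1}$, rewrite the coordinate functionals as linearised polynomials via \Cref{trace}, and bound the degree of the expansion. Where you genuinely improve on the paper is the degree bound itself. The paper rewrites $f_n$ as a sum over $B\in\pow(\{1,\dots,n\})$ of monomials $\prod_{j\in B}x^{p^j}$, implicitly treating every exponent arising from $\prod_j\bigl(\sum_t a_{(i,j,t)}x^{p^t}\bigr)$ as a sum of \emph{distinct} powers of $p$. That is not true in general — with $p=3$ and $n=2$, say, the term $x^{p}\cdot x^{p}=x^{6}$ is already reduced (since $6<p^n$) and $6$ is not a sum of distinct powers of $3$ — so the intermediate claim $\deg f_n\le p^{n-1}+\cdots+p^{n-\ell}$ is not justified by the paper's expansion alone. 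Your explicit reduction modulo $y^{p^n}-y$ via base-$p$ carrying, including the wrap-around carry through $p^n\equiv1\pmod{p^n-1}$, correctly shows that the digit sum of the reduced exponent never exceeds the number of summands, so the bound $\deg f_n\le p^n-p^{n-c}$ falls out of the extremal digit-sum observation with $\ell=c$. The conclusion is the same, and the structure of the argument is the same, but your derivation of the degree bound is the one that actually closes the gap.
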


\begin{proof}
As we work in $\ff_{p}$, we identify $
\oplus$ with addition modulo $p$, and write~$x + y$ for~$x \oplus y$. Consider the family~$(\psi_n)_{n \in \nn}$ given by \Cref{lemtt}. Therefore, there exists~$k_n \in \nn$ for which
	\[
		\psi_n(x) = \sum_{i = 1}^{k_n} \left(  \prod_{j = 1}^{m_i} \pi_{(i,j)}(x) \right)
	\]
	where~$\pi_{(\cdot,\cdot)}$ is a projection function from~$\ff_p^n$ to~$\ff_p$. Note that since the Boolean circuit has constant depth, the sequence~$(m_i)_{i \in \nn}$ is bounded. Consider the composition~$f_n$ defined by:
	\begin{align}\label{pppp}
		f_n(x) = (\psi_n \circ \phi_n^{-1})(x) = \sum_{i=1}^{k_n} \left( \prod_{j=1}^{m_i} \left(\pi_{(i,j)} \circ \phi_n^{-1}\right)(x) \right)
	\end{align}
	Note that $f_n \circ \phi_n = \psi_n$ and thus $x \in L \cap \Sigma^n$ if and only if $\psi_n(x) = (f_n \circ \phi_n)(x) \neq 0$; so, $f_n$ is as needed. We now show that $f_n$ is a polynomial. Since $\pi_{(\cdot,\cdot)}$ and $\phi_n^{-1}$ are linear, so is their composition, whose range is contained in $\ff_p$. \Cref{trace} tells us now that $\pi_{(i,j)} \circ \phi_n^{-1}$ may be expressed as
	\[
		\left( \pi_{(i,j)} \circ \phi_n^{-1} \right)(x) = \sum_{t = 1}^{n} a_{(i,j,\ell)} x^{p^t}.
	\]
	Therefore, eq.~(\ref{pppp}) shows that~$f_n$ itself is a polynomial on~$\ff_{p^n}$ with range in~$\ff_p$. To bound the degree of~$f_n$, use distributivity in the field~$\ff_p$ and \Cref{trace} to write
	\begin{align*}
		f_n(x) = \left(\psi_n \circ \phi_n^{-1}\right)(x) &= \sum_{i=1}^{k_n} \left( \prod_{j=1}^{m_i} \left( \sum_{t = 1}^n a_{(i,j,t)}x^{p^t} \right) \right)\\
		&= \sum_{B \in \pow(\{ 1,\ldots,n \})} \left( a_B \prod_{j \in B} x^{p^{n-(n-j)}} \right)
	\end{align*}
	where $\pow(\cdot)$ denotes the power set and $a_B \in \ff_p$ for every $B \in \pow(\{ 1,\ldots,n \})$. Recall from \Cref{propsFF} \cref{expo} that $x^{p^{m+n}} = x^{p^m}$; thus there exists some $\ell \geq 1$ for which
	\[
		\deg(f_n) \leq p^{n-1} + \ldots + p^{n-\ell} \leq (p-1)\left(p^{n-1} + \ldots + p^{n-\ell} \right) = p^n - p^{n-\ell} \qedhere
	\]
\end{proof}

We now combine the field-theoretic tools above to prove the main theorem of this section.

\begin{theorem}\label{thm:sac2}
	Let $q \in (0,1/2)$ and $|\Sigma| = p$ for some prime $p$. Then $L_q \not\in \psac$ and~$\sstar \setminus L_q \not\in \psac$.
\end{theorem}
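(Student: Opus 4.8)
The strategy is a counting argument comparing two quantities: the number of distinct functions $\ff_{p^n} \to \ff_p$ that can be realised as $f_n$ with the degree bound $\deg(f_n) \le p^n - p^{n-\ell}$ coming from \Cref{isoProp}, versus the number of distinct sets $L_q \cap \Sigma^n$ (respectively $(\sstar \setminus L_q) \cap \Sigma^n$) could possibly be. A polynomial of degree $\le p^n - p^{n-\ell}$ over $\ff_{p^n}$ in one variable has at most $p^n - p^{n-\ell}$ roots (unless it is the zero polynomial), so the set of $x$ with $f_n(x) = 0$ — equivalently, the complement of $L_q \cap \Sigma^n$ under the identification via $\phi_n$ — has size at most $p^n - p^{n-\ell}$, i.e.\ $|L_q \cap \Sigma^n| \ge p^{n-\ell} = |\Sigma|^n / p^{\ell}$. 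But by \Cref{usecor27} we have $|L_q \cap \Sigma^n| \in o(|\Sigma|^n)$, so $|L_q \cap \Sigma^n| < |\Sigma|^n/p^\ell$ for all large $n$, a contradiction. This handles $L_q \not\in \psac$.

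For $\sstar \setminus L_q \not\in \psac$, I would use the remark that $\copsac = \psac$, i.e.\ $\psac$ is closed under complementation: inverting (or, more precisely, replacing $f_n$ by $1 - f_n^{p-1}$, which is $1$ exactly when $f_n = 0$ and $0$ otherwise, using Fermat's little theorem in $\ff_p$) costs only a constant number of circuit layers. Concretely, if $\sstar \setminus L_q \in \psac$ then $L_q \in \psac$ by this closure, and we are back in the previous case. Alternatively, one can run the same root-counting argument directly: if $\sstar \setminus L_q$ were $\psac$-recognised then \Cref{isoProp} gives polynomials $g_n$ of degree $\le p^n - p^{n-\ell}$ with $x \in (\sstar\setminus L_q) \cap \Sigma^n \iff g_n(\phi_n(x)) \ne 0$, so $|(\sstar \setminus L_q)\cap \Sigma^n| \le p^n - p^{n-\ell}$, forcing $|L_q \cap \Sigma^n| \ge p^{n-\ell}$ again, the same contradiction. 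Either route works; I would present the closure-under-complement route as the clean one and mention the direct argument parenthetically.

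The one subtlety to nail down carefully is the passage from the degree bound on $f_n$ to the root count. The bound in \Cref{isoProp} is $\deg(f_n) \le p^n - p^{n-\ell}$, and we must rule out $f_n \equiv 0$: but $f_n \equiv 0$ would mean $\psi_n \equiv 0$, i.e.\ the circuit rejects every word of length $n$, which cannot recognise a language that (for large $n$) is non-empty — $L_q \cap \Sigma^n \ne \emptyset$ by \Cref{prp:longgg}, and $(\sstar\setminus L_q)\cap\Sigma^n \ne \emptyset$ for large $n$ since e.g.\ by \Cref{thm:richHigh} (or by a direct incompressibility count) there are words of complexity $\ge q|x|$. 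Also note $f_n$ is represented as a \emph{reduced} polynomial of degree $< p^n$ after applying $x^{p^n} = x$ from \Cref{propsFF}\,\cref{expo}, so the usual bound "a nonzero degree-$d$ polynomial over a field has $\le d$ roots" applies verbatim. I expect this bookkeeping — confirming the polynomial is nonzero and genuinely of the claimed degree after reduction — to be the only place requiring care; the counting inequality against \Cref{usecor27} is then immediate.
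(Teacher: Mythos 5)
Your primary argument matches the paper's proof almost exactly: the degree bound from \Cref{isoProp}, root counting to force $|L_q \cap \Sigma^n| \ge p^{n-\ell}$, contradiction with \Cref{usecor27}; then reduce the complement case to the first via the polynomial $\theta(x) = 1 - x^{p-1}$ (i.e.\ $\copsac = \psac$), which is precisely what the paper does. Your extra check that $f_n \not\equiv 0$ is sensible and is discharged by noting $L_q \cap \Sigma^n \ne \emptyset$ for all large $n$ (\Cref{prp:longgg}).

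However, your parenthetical ``alternative'' direct argument for $\sstar \setminus L_q$ is wrong, and it is worth seeing why. If $g_n(\phi_n(x)) \ne 0 \iff x \in (\sstar\setminus L_q)\cap\Sigma^n$, then the degree bound on $g_n$ controls the number of \emph{zeros} of $g_n$, and those zeros correspond to $L_q \cap \Sigma^n$, not to $(\sstar\setminus L_q)\cap\Sigma^n$. So you would get $|L_q \cap \Sigma^n| \le p^n - p^{n-\ell}$, which is perfectly consistent with \Cref{usecor27} and yields no contradiction; your claimed bound $|(\sstar\setminus L_q)\cap\Sigma^n| \le p^n - p^{n-\ell}$ does not follow, since a polynomial's degree says nothing about how many points can be nonzero. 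Root counting here gives a \emph{lower} bound on the ``nonzero'' side and an \emph{upper} bound on the ``zero'' side, and you applied it in the wrong direction. This is exactly why the complement case genuinely needs the closure-under-complement trick (flipping which side is the zero set) rather than a rerun of the first argument. Keep your primary route and drop the parenthetical.
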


\begin{proof}
Suppose some circuit recognises $L_q$. By \Cref{isoProp}, there exists a family of polynomials $(f_n)$ and $\ell \in \nn$ for which $x \in L_q$ if and only if $(f_n \circ \phi_{n})(x) \neq 0$ and~$\deg(f_n) \leq p^n - p^{n-\ell}$. So, the number of roots of~$f_n$---and, hence, the number of words not in~$L_q$---is bounded above by~$p^n - p^{n-\ell}$, so the cardinality of~$L_q$ is in~$\Omega(p^n)$, contradicting \Cref{usecor27}.

For~$\sstar \setminus L_q$, note that the circuit can be augmented by a constant number of layers to flip the output of~$f_n \circ \phi_{n}$ for any $n$ (note that~$\ran(f_n \circ \phi_n) \subseteq \ff_p$). If~$a_x = (f_n \circ \phi_{n})(x) \neq 0$ then use \Cref{propsFF} \Cref{expo} to see that $a_x^{p} = a_x$; thus, we have~$a_x^{p-1} = 1$, and so the polynomial~$\theta(x) = 1 - x^{p-1}$ satisfies
\[
	\theta(x) = 0 \iff a_x \neq 0.
\]
As~$p$ is fixed,~$\theta$ can be computed by a constant-depth circuit, which we may append to any~$\psac$-circuit recognising~$L_q$ to recognise~$\sstar \setminus L_q$. Since the former does not exist, neither does the latter.
\end{proof}

\subsection{Partial Generalisations to Non-prime-Cardinality Alphabets}\label{sec:last}

We provide a partial generalisation to non-prime-alphabets. Although our theorem reaches the same conclusion as \Cref{thm:sac2}, the generalisation is partial as we redefine the definition of~$\psac$-recognisability to make our arguments amenable to non-prime cardinality settings.

Fix $q \in (0,1/2)$ and an alphabet $\Sigma$ with $|\Sigma| = r$, where $r$ is not prime. Let $p > r$ be the smallest prime greater than $r$. Let~$\Sigma_p$ be an alphabet of cardinality~$p$ which contains~$\Sigma$. As before, identify~$\Sigma_p$ with $\ff_p$. We now work over~$\Sigma_p$.

\begin{definition}\label[definition]{newDfn}
	A language $L \subset \Sigma_r^*$ is \define{$\prsac$-recognisable} if it is $\psac$-recognisable over the field $\ff_p$ by a family of polynomials $(f_n)_{n \in \nn}$ for which
\[
	f_n \colon \ff_{p^n} \rightarrow \ff_p
\]
(as per \Cref{isoProp}) and for which the following conditions hold: for all $n \in \nn$,
	\begin{enumerate}
		\item $f_n(x) = 1$ if $x \in \Sigma_r^n \cap L_q$;
		\item $f_n(x) = 0$ if $x \in \Sigma^n_r \setminus L_q$;
		\item $f_n(x) \in \ff_p \setminus \{ 1 \}$ otherwise.
	\end{enumerate}
\end{definition}

We use this re-definition to code information about the language $\Sigma_r$ as it is embedded in~$\Sigma_p$. This renders \Cref{newDfn} more restrictive than \Cref{psacdfn}, so the following theorem is slightly weaker than its counterpart \Cref{thm:sac2}; the proofs are similar.

\begin{theorem}\label{finalThm}
	Let~$q \in (0,1/2)$ and~$|\Sigma| = r$ for some~$r \not\in \pp$. Then $L_q \not\in \prsac$ and $\sstar \setminus L_q \not\in \prsac$.
\end{theorem}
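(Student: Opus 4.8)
The plan is to mimic the proof of \Cref{thm:sac2} essentially verbatim, tracking how the redefinition in \Cref{newDfn} changes the counting. First I would suppose that $L_q \in \prsac$, so that there is a family of polynomials $(f_n)_{n \in \nn}$ with $f_n \colon \ff_{p^n} \to \ff_p$ and $\deg(f_n) \leq p^n - p^{n - \ell}$ for some fixed $\ell$ (this degree bound is exactly what \Cref{isoProp} delivers, since \Cref{newDfn} asks that $L$ be $\psac$-recognisable over $\ff_p$ in the sense of \Cref{isoProp}). The key new point is that condition~(1) of \Cref{newDfn} forces $f_n(x) = 1$ on every $x \in \Sigma_r^n \cap L_q$, so consider instead the polynomial $g_n(x) = f_n(x) - 1$. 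Then $g_n$ still has degree at most $p^n - p^{n-\ell}$, and every word in $\Sigma_r^n \cap L_q$ is a root of $g_n$. Since a nonzero polynomial over $\ff_{p^n}$ of degree $d$ has at most $d$ roots, either $g_n \equiv 0$ (impossible, as condition~(2) gives points where $f_n = 0 \neq 1$, hence $g_n = -1 \neq 0$, for all large $n$, because $\Sigma_r^n \setminus L_q$ is non-empty for large $n$ by \Cref{thm:compl}), or
\[
	|\Sigma_r^n \cap L_q| \leq \deg(g_n) \leq p^n - p^{n-\ell}.
\]
That bound is too weak on its own — it only bounds by $p^n$, not $r^n$ — so the argument needs one more input: \Cref{usecor27} already gives $|L_q \cap \Sigma^n| \in o(r^n)$, and I do not actually need the polynomial count to beat this. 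Rather, the contradiction should come from the \emph{complement}: condition~(2) of \Cref{newDfn} says every $x \in \Sigma_r^n \setminus L_q$ is a root of $f_n$ itself, so $|\Sigma_r^n \setminus L_q| \leq \deg(f_n) \leq p^n - p^{n-\ell}$; but this is again only a $p^n$-type bound and $|\Sigma_r^n \setminus L_q|$ is genuinely $\Theta(r^n)$, so I cannot derive a contradiction this way directly either.

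So the real route — and this is where I would spend the effort — is to restrict the polynomial to the subset $\Sigma_r^n \subseteq \ff_{p^n}$ and count there, not over all of $\ff_{p^n}$. Let $\phi_n \colon \ff_p^n \to \ff_{p^n}$ be the fixed linear isomorphism and let $V_n = \phi_n(\Sigma_r^n) \subseteq \ff_{p^n}$, a set of size $r^n$. The function $x \mapsto g_n(x) = f_n(x) - 1$ vanishes on $\phi_n(\Sigma_r^n \cap L_q)$. Now I want to argue: if $g_n$ vanishes on ``too large'' a fraction of $V_n$, then $g_n$ must be close to the zero function on $V_n$, which clashes with condition~(2). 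Concretely, suppose $|L_q \cap \Sigma_r^n| \geq \alpha r^n$ for infinitely many $n$, for some $\alpha > 0$; I want to derive a contradiction with $\deg(g_n) \leq p^n - p^{n-\ell}$. The clean way is a Schwartz–Zippel / combinatorial-nullstellensatz style estimate over the \emph{product set} $\Sigma_r^n$ viewed inside $\ff_p^n$ rather than inside $\ff_{p^n}$: pulling $f_n$ back through $\phi_n$ (as in the displayed computation in the proof of \Cref{isoProp}), $f_n \circ \phi_n$ is a polynomial in $n$ variables over $\ff_p$ of \emph{individual} degree $< p$ in each variable, and of total degree bounded by a constant depending only on $\ell$ and the fan-in bound $c$ (this is the content of \Cref{isoProp}, unwound one step further). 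A polynomial of bounded total degree $D$ over $\ff_p$ in $n$ variables that vanishes on a subset $S^n$ with $|S| = r$ vanishes on at most a $(1 - (1 - D/(n(r-1)))\cdots)$-fraction — more simply, by Schwartz–Zippel over the grid $S^n$, a nonzero polynomial of total degree $D$ has at most $\frac{D}{r} r^n$ zeros in $S^n$ when $D < r$, hence vanishes on a set of density at most $D/r < 1$. Taking $S = \Sigma_r$, $|S| = r$, and $D$ the fixed constant total degree, this forces $|L_q \cap \Sigma_r^n| \leq \frac{D}{r} r^n$, a \emph{positive} density, contradicting \Cref{usecor27} which gives density $\to 0$. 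That is the contradiction.

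For the complement statement, the extra layer is exactly as in \Cref{thm:sac2}: by conditions (1)–(3) of \Cref{newDfn}, $f_n$ takes the value $1$ precisely on $\Sigma_r^n \cap L_q$ and value $0$ precisely on $\Sigma_r^n \setminus L_q$, so appending the constant-depth polynomial $x \mapsto 1 - x$ (or, if one wants to stay inside the $\prsac$ formalism literally, the polynomial $x \mapsto 1 - x^{p-1}$ composed appropriately, using \Cref{propsFF} \cref{expo}) flips the roles and produces a $\prsac$-recogniser for $\sstar \setminus L_q$ whose associated polynomial has degree still bounded polynomially and whose vanishing set inside $\Sigma_r^n$ is now $\Sigma_r^n \cap L_q$; running the same Schwartz–Zippel-over-a-grid count then forces $|\Sigma_r^n \setminus L_q|$ to have density at most $D'/r < 1$, i.e.\ $|\Sigma_r^n \cap L_q|$ has density bounded \emph{below} by a positive constant — again contradicting \Cref{usecor27}. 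The main obstacle I anticipate is getting the bookkeeping right in the ``one step further'' unwinding of \Cref{isoProp}: I need that $f_n \circ \phi_n$, as a polynomial over $\ff_p$ in $n$ variables, has total degree bounded by a constant independent of $n$ (each $\land$-gate has fan-in $\leq c$ and each input is a literal, i.e.\ an affine function of one variable, so each monomial $\varphi_{i,n}$ has total degree $\leq c$, and $\xor$ does not raise degree), and then to invoke the finite-grid Schwartz–Zippel bound in the regime $D < r$ — which does require the constant $c$ (equivalently $D$) to be smaller than $r$. If $c \geq r$ this direct count fails and one must instead pass to $f_n^{e}$ for a suitable exponent $e$, or iterate, to push the density estimate below $1$; handling that case cleanly is the part that needs care rather than the part that needs a new idea.
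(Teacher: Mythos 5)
The paper itself offers no written proof of \Cref{finalThm}---it is disposed of with the single remark that ``the proofs are similar'' to \Cref{thm:sac2}---so I am comparing against the implicit argument. On that reading, you have correctly identified the genuinely new difficulty in the non-prime case: the single-variable degree bound $\deg(f_n)\le p^n-p^{n-\ell}$ from \Cref{isoProp}, which drives the proof of \Cref{thm:sac2}, becomes vacuous here because $|\Sigma_r^n|=r^n$ is already $o(p^n-p^{n-\ell})$, so neither $g_n=f_n-1$ nor $f_n$ itself yields any information when one counts roots over $\ff_{p^n}$. Your move to the multivariable picture over the grid $\Sigma_r^n\subseteq\ff_p^n$, using the constant total-degree bound $\deg(\psi_n)\le c$ rather than the near-trivial single-variable bound, is the right idea and is a genuinely different route from what \Cref{thm:sac2} does.

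However, there is a real gap, and it is exactly where you flag uncertainty. The grid Schwartz--Zippel bound gives $|\Sigma_r^n\setminus L_q|\le \deg(\psi_n)\,r^{n-1}\le (c/r)r^n$, hence a positive lower density for $L_q\cap\Sigma_r^n$ only when $c<r$; for $c\ge r$ this is vacuous, and nothing in \Cref{lemtt} or \Cref{newDfn} lets you assume $c<r$. Your proposed repair---``pass to $f_n^e$ for a suitable exponent $e$, or iterate''---does not close this. Raising $f_n$ to a power only \emph{increases} the degree, which makes the Schwartz--Zippel zero-count bound strictly worse, and on the grid $f_n^e$ represents the same $\{0,1\}$-valued function anyway, so no new information is gained. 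This is not a bookkeeping issue but a missing idea.

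The idea that actually closes the gap is to lower-bound the \emph{non-zeros} rather than upper-bound the zeros. Reduce $\psi_n$ modulo the ideal generated by $\prod_{a\in\Sigma_r}(x_i-a)$, $i=1,\dots,n$, to get a polynomial $\tilde\psi_n$ that agrees with $\psi_n$ on $\Sigma_r^n$, has individual degree $<r$ in each variable, and still has total degree at most $c$ (reduction never raises total degree). By \Cref{prp:longgg}, $L_q\cap\Sigma_r^n\ne\emptyset$ for large $n$, so $\tilde\psi_n\not\equiv 0$. Pick a monomial $\prod_{i\in T}x_i^{d_i}$ of maximal total degree in $\tilde\psi_n$; maximality forces its coefficient to be a nonzero \emph{constant} of $\ff_p$, and $|T|\le c$. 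Hence for every fixing of the variables outside $T$ in $\Sigma_r^{n-|T|}$, the restricted polynomial in the variables $x_i$, $i\in T$, still has that leading monomial with a nonzero constant coefficient, each $d_i<r$, and so by the Combinatorial Nullstellensatz it has at least one non-zero in $\Sigma_r^{T}$. Summing over the $r^{n-|T|}\ge r^{n-c}$ fixings gives $|L_q\cap\Sigma_r^n|\ge r^{n-c}=\Omega(r^n)$, contradicting \Cref{usecor27}, with no assumption on the size of $c$. The complement case then follows by the same constant-depth field-inversion trick from \Cref{thm:sac2} that you already invoke.
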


\section{Open Questions}\label{sec:open}

In this paper, we proved multiple results on the complexity of the measure $\cN$ via the proxy family of sets $\set{L_q}{q \in (0,1/2)}$. In particular, we showed that~$L_q$ is complicated from the viewpoint of pushdown automata (Theorems \ref{thm:Lq} and \ref{thm:compl} and \Cref{cor:ADextension}), and even certain Boolean circuits cannot recognise~$L_q$, nor its complement (\Cref{thm:sac1,thm:sac2}). We also proved the Shannon effect for~ $\cN{}$ (\Cref{thm:sh}). Pressing open questions pertain to refining these results on~$L_q$---and, ultimately, to understanding the measure~$\cN$ even better.

In \Cref{ss:c2}, and in \Cref{thm:sac2} in particular, we considered alphabets of prime cardinality and we gave a generalisation to non-prime-cardinality alphabets in \Cref{sec:last}. However, our proof uses a non-standard definition of~$\psac$. We wonder:

\begin{question}
Do the results from \Cref{thm:sac2} apply to arbitrary alphabets using the definition of~$\psac$ given in \Cref{psacdfn}?
\end{question} 

Finally, it is clear by definition that $\cN(x) \leq \cNu(x)$ for all $x \in \sstar$, for any finite alphabet $\Sigma$. This allows the extension of certain results from~$\cN$ to~$\cNu$. For instance, we note here that~\Cref{thm:richHigh,thm:compl,thm:sac1} also apply to~$\cNu$ since~$\cN(x) \le \cNu(x)$. However, whether the equality~$\cN(x) = \cNu(x)$ holds in general remains the cardinal open question to fully understand the impact of exactness in \Cref{dfn:eacc} compared to \Cref{dfn:unn}.

\begin{question}\label[question]{questionF}
Let $|\Sigma| = 2$. Does there exist $x \in \sstar$ for which $\cN(x) < \cNu(x)$?
\end{question}

Broadly, it is our hope that a better understanding of~$\cN$, via proxies such as the~$L_q$ sets or otherwise, will lead to a better understanding of~$\cNu$, and (non-deterministic) automatic complexity in general. Similar to the comparison between regular languages being characterised by both NFAs and DFAs, a proof of the fact that~$\cNu = \cN$ would simplify computing the non-deterministic automatic complexity while preserving its naturalness as a measure of complexity. Conversely,~$\cNu \neq \cN$ would show that non-deterministic automatic complexity is dependent on the actual computation path, which would also be of philosophical interest.


\bibliographystyle{plainurl}
\bibliography{automcomp_arXiv_references}

\begin{thebibliography}{10}

\bibitem{zoo}
Scott Aaronson.
\newblock {Complexity Zoo} [online].
\newblock 2025.
\newblock \url{https://complexityzoo.net/Complexity_Zoo}.
\newblock URL: \url{https://complexityzoo.net/Complexity_Zoo} [cited 18 Apr
  2025].

\bibitem{bjornPlanar}
Achilles~A. Beros, Bj\o{}rn Kjos-Hanssen, and Daylan~Kaui Yogi.
\newblock Planar digraphs for automatic complexity.
\newblock In {\em Theory and applications of models of computation}, volume
  11436 of {\em Lecture Notes in Comput. Sci.}, pages 59--73. Springer, Cham,
  2019.
\newblock URL: \url{https://doi.org/10.1007/978-3-030-14812-6_5}, \href
  {https://doi.org/10.1007/978-3-030-14812-6\_5}
  {\path{doi:10.1007/978-3-030-14812-6\_5}}.

\bibitem{borodin}
Allan Borodin, Stephen~A. Cook, Patrick~W. Dymond, Walter~L. Ruzzo, and Martin
  Tompa.
\newblock Two applications of inductive counting for complementation problems.
\newblock {\em SIAM J. Comput.}, 18(3):559--578, 1989.
\newblock \href {https://doi.org/10.1137/0218038} {\path{doi:10.1137/0218038}}.

\bibitem{downey+:alg:2010}
Rodney~G. Downey and Denis~R. Hirschfeldt.
\newblock {\em Algorithmic randomness and complexity}.
\newblock Theory and Applications of Computability. Springer, New York, 2010.
\newblock URL:
  \url{https://doi-org.helicon.vuw.ac.nz/10.1007/978-0-387-68441-3}, \href
  {https://doi.org/10.1007/978-0-387-68441-3}
  {\path{doi:10.1007/978-0-387-68441-3}}.

\bibitem{dummit}
David~S. Dummit and Richard~M. Foote.
\newblock {\em Abstract algebra}.
\newblock John Wiley \& Sons, Inc., Hoboken, NJ, third edition, 2004.

\bibitem{fraleigh}
John~B Fraleigh.
\newblock {\em A first course in abstract algebra}.
\newblock Pearson Education, Philadelphia, PA, 7th edition, 2003.

\bibitem{gal}
Anna G{\'{a}}l and Avi Wigderson.
\newblock Boolean complexity classes vs.\ their arithmetic analogs.
\newblock {\em Random Struct. Algorithms}, 9(1-2):99--111, 1996.
\newblock URL:
  \url{http://dx.doi.org/10.1002/(sici)1098-2418(199608/09)9:1/2<99::aid-rsa7>3.3.co;2-o},
  \href
  {https://doi.org/10.1002/(sici)1098-2418(199608/09)9:1/2<99::aid-rsa7>3.3.co;2-o}
  {\path{doi:10.1002/(sici)1098-2418(199608/09)9:1/2<99::aid-rsa7>3.3.co;2-o}}.

\bibitem{ginsburg}
Seymour Ginsburg.
\newblock {\em The mathematical theory of context-free languages}.
\newblock McGraw-Hill Book Co., New York-London-Sydney, 1966.

\bibitem{greibach}
Sheila~A. Greibach.
\newblock {A New Normal-Form Theorem for Context-Free Phrase Structure
  Grammars}.
\newblock {\em J. ACM}, 12(1):42--52, January 1965.
\newblock \href {https://doi.org/10.1145/321250.321254}
  {\path{doi:10.1145/321250.321254}}.

\bibitem{introL}
John~E Hopcroft, Rajeev Motwani, and Jeffrey~D Ullman.
\newblock {\em Introduction to automata theory, languages, and computation}.
\newblock Pearson, Upper Saddle River, NJ, 3 edition, June 2006.

\bibitem{hydeMA}
Kayleigh Hyde.
\newblock {Nondeterminstic Finite State Complexity}.
\newblock Master's thesis, {University of Hawai'i at M\={a}noa}, 2013.
\newblock URL: \url{http://hdl.handle.net/10125/29507}.

\bibitem{hydebjorn}
Kayleigh~K. Hyde and Bj\o{}rn Kjos-Hanssen.
\newblock Nondeterministic automatic complexity of overlap-free and almost
  square-free words.
\newblock {\em Electron. J. Combin.}, 22(3):Paper 3.22, 18, 2015.
\newblock \href {https://doi.org/10.37236/4851} {\path{doi:10.37236/4851}}.

\bibitem{nerode}
Bakhadyr Khoussainov and Anil Nerode.
\newblock {\em Automata theory and its applications}, volume~21 of {\em
  Progress in Computer Science and Applied Logic}.
\newblock Birkh\"auser Boston, Inc., Boston, MA, 2001.
\newblock \href {https://doi.org/10.1007/978-1-4612-0171-7}
  {\path{doi:10.1007/978-1-4612-0171-7}}.

\bibitem{bjornequiv}
Bj\o{}rn Kjos-Hanssen.
\newblock On the complexity of automatic complexity.
\newblock {\em Theory Comput. Syst.}, 61(4):1427--1439, 2017.
\newblock \href {https://doi.org/10.1007/s00224-017-9795-4}
  {\path{doi:10.1007/s00224-017-9795-4}}.

\bibitem{bjornshift}
Bj\o{}rn Kjos-Hanssen.
\newblock Automatic complexity of shift register sequences.
\newblock {\em Discrete Mathematics}, 341(9):2409--2417, 2018.
\newblock URL:
  \url{https://www.sciencedirect.com/science/article/pii/S0012365X18301559},
  \href {https://doi.org/10.1016/j.disc.2018.05.015}
  {\path{doi:10.1016/j.disc.2018.05.015}}.

\bibitem{kjosfibo}
Bj\o{}rn Kjos-Hanssen.
\newblock {Automatic complexity of Fibonacci and Tribonacci words}.
\newblock {\em Discrete Applied Mathematics}, 289:446--454, 2021.
\newblock URL:
  \url{https://www.sciencedirect.com/science/article/pii/S0166218X20304698},
  \href {https://doi.org/10.1016/j.dam.2020.10.014}
  {\path{doi:10.1016/j.dam.2020.10.014}}.

\bibitem{bjornIncompr}
Bj\o{}rn Kjos-Hanssen.
\newblock An incompressibility theorem for automatic complexity.
\newblock {\em Forum Math. Sigma}, 9:e62, 7, 2021.
\newblock \href {https://doi.org/10.1017/fms.2021.58}
  {\path{doi:10.1017/fms.2021.58}}.

\bibitem{bjornbook}
Bj\o{}rn Kjos-Hanssen.
\newblock {\em Automatic complexity---a computable measure of irregularity},
  volume~12 of {\em De Gruyter Series in Logic and its Applications}.
\newblock De Gruyter, Berlin, [2024] \copyright 2024.
\newblock \href {https://doi.org/10.1515/9783110774870}
  {\path{doi:10.1515/9783110774870}}.

\bibitem{bjornMax}
Bj\o{}rn Kjos-Hanssen.
\newblock Maximal automatic complexity and context-free languages.
\newblock In {\em Aspects of computation and automata theory with
  applications}, volume~42 of {\em Lect. Notes Ser. Inst. Math. Sci. Natl.
  Univ. Singap.}, pages 335--352. World Sci. Publ., Hackensack, NJ, [2024]
  \copyright 2024.
\newblock \href {https://doi.org/10.1142/9789811278631_0013}
  {\path{doi:10.1142/9789811278631_0013}}.

\bibitem{kolmogorov:three:1965}
A.~N. Kolmogorov.
\newblock Three approaches to the definition of the concept ``quantity of
  information''.
\newblock {\em Problemy Pereda\v{c}i Informacii}, 1(vyp. 1):3--11, 1965.
\newblock URL: \url{https://www.mathnet.ru/eng/ppi68}.

\bibitem{levin2}
L.~A. Levin.
\newblock {Laws of Information Conservation (Nongrowth) and Aspects of the
  Foundation of Probability Theory}.
\newblock {\em Problems Inform. Transmission}, 10(3):206--210, 1974.
\newblock URL: \url{https://www.mathnet.ru/eng/ppi1039}.

\bibitem{levinthesis}
Leonid~A. Levin.
\newblock Some theorems on the algorithmic approach to probability theory and
  information theory: (1971 dissertation directed by a.n. kolmogorov).
\newblock {\em Annals of Pure and Applied Logic}, 162(3):224--235, 2010.
\newblock Special Issue: Dedicated to Nikolai Alexandrovich Shanin on the
  occasion of his 90th birthday.
\newblock URL:
  \url{https://www.sciencedirect.com/science/article/pii/S0168007210001211},
  \href {https://doi.org/10.1016/j.apal.2010.09.007}
  {\path{doi:10.1016/j.apal.2010.09.007}}.

\bibitem{lidl}
Rudolf Lidl and Harald Niederreiter.
\newblock {\em Finite fields}, volume~20 of {\em Encyclopedia of Mathematics
  and its Applications}.
\newblock Cambridge University Press, Cambridge, second edition, 1997.
\newblock With a foreword by P. M.\ Cohn.
\newblock \href {https://doi.org/10.1017/CBO9780511525926}
  {\path{doi:10.1017/CBO9780511525926}}.

\bibitem{lupanov1}
O.~B. Lupanov.
\newblock The synthesis of contact circuits.
\newblock {\em Dokl. Akad. Nauk SSSR (N.S.)}, 119:23--26, 1958.

\bibitem{lupanov2}
O.~B. Lupanov.
\newblock The schemes of functional elements with delays.
\newblock {\em Problemy Kibernet.}, (23):43--81, 303, 1970.

\bibitem{lyndon}
R.~C. Lyndon and M.~P. Sch{\"u}tzenberger.
\newblock The equation {$a\sp{M}=b\sp{N}c\sp{P}$} in a free group.
\newblock {\em Michigan Math. J.}, 9:289--298, 1962.
\newblock URL: \url{http://projecteuclid.org/euclid.mmj/1028998766}.

\bibitem{martinloef:def:1966}
Per Martin-L\"{o}f.
\newblock The definition of random sequences.
\newblock {\em Information and Control}, 9:602--619, 1966.
\newblock \href {https://doi.org/10.1016/S0019-9958(66)80018-9}
  {\path{doi:10.1016/S0019-9958(66)80018-9}}.

\bibitem{murty}
Jaban Meher and M.~Ram Murty.
\newblock Ramanujan's proof of {B}ertrand's postulate.
\newblock {\em Amer. Math. Monthly}, 120(7):650--653, 2013.
\newblock \href {https://doi.org/10.4169/amer.math.monthly.120.07.650}
  {\path{doi:10.4169/amer.math.monthly.120.07.650}}.

\bibitem{border}
Jakub Radoszewski, Wojciech Rytter, and Tomasz Wale{\'{n}}.
\newblock {Faster Algorithms for Ranking/Unranking Bordered and Unbordered
  Words}.
\newblock In Zsuzsanna Lipt{\'a}k, Edleno Moura, Karina Figueroa, and Ricardo
  Baeza-Yates, editors, {\em String Processing and Information Retrieval},
  pages 257--271, Cham, 2025. Springer Nature Switzerland.
\newblock \href {https://doi.org/10.1007/978-3-031-72200-4_20}
  {\path{doi:10.1007/978-3-031-72200-4_20}}.

\bibitem{razb}
A.~A. Razborov.
\newblock Lower bounds on the size of bounded depth circuits over a complete
  basis with logical addition.
\newblock {\em Mathematical notes of the Academy of Sciences of the USSR},
  41(4):333--338, Apr 1987.
\newblock URL:
  \url{https://link.springer.com/content/pdf/10.1007/BF01137685.pdf}, \href
  {https://doi.org/10.1007/BF01137685} {\path{doi:10.1007/BF01137685}}.

\bibitem{serna}
Maria~Jos{\'e} Serna.
\newblock Asymptotical behaviour of some non-uniform measures.
\newblock {\em RAIRO - Theoretical Informatics and Applications - Informatique
  Th\'eorique et Applications}, 23(3):281--293, 1989.
\newblock URL: \url{http://www.numdam.org/item/ITA_1989__23_3_281_0/}, \href
  {https://doi.org/10.1051/ita/1989230302811}
  {\path{doi:10.1051/ita/1989230302811}}.

\bibitem{Shallit_2008}
Jeffrey Shallit.
\newblock {\em {A Second Course in Formal Languages and Automata Theory}}.
\newblock Cambridge University Press, USA, first edition, 2008.
\newblock \href {https://doi.org/10.1017/CBO9780511808876}
  {\path{doi:10.1017/CBO9780511808876}}.

\bibitem{breitbart}
Jeffrey Shallit and Yuri Breitbart.
\newblock {Automaticity I: Properties of a Measure of Descriptional
  Complexity}.
\newblock {\em Journal of Computer and System Sciences}, 53(1):10--25, 1996.
\newblock URL:
  \url{https://www.sciencedirect.com/science/article/pii/S002200009690046X},
  \href {https://doi.org/10.1006/jcss.1996.0046}
  {\path{doi:10.1006/jcss.1996.0046}}.

\bibitem{shallitWang}
Jeffrey Shallit and Ming-Wei Wang.
\newblock Automatic complexity of strings.
\newblock {\em J. Autom. Lang. Comb.}, 6(4):537--554, 2001.
\newblock 2nd Workshop on Descriptional Complexity of Automata, Grammars and
  Related Structures (London, ON, 2000).
\newblock \href {https://doi.org/10.25596/jalc-2001-537}
  {\path{doi:10.25596/jalc-2001-537}}.

\bibitem{shannon}
Claude.~E. Shannon.
\newblock The synthesis of two-terminal switching circuits.
\newblock {\em The Bell System Technical Journal}, 28(1):59--98, 1949.
\newblock \href {https://doi.org/10.1002/j.1538-7305.1949.tb03624.x}
  {\path{doi:10.1002/j.1538-7305.1949.tb03624.x}}.

\bibitem{sipser}
Michael Sipser.
\newblock A complexity theoretic approach to randomness.
\newblock In {\em Proceedings of the Fifteenth Annual ACM Symposium on Theory
  of Computing}, STOC '83, pages 330--335, New York, NY, USA, 1983. Association
  for Computing Machinery.
\newblock \href {https://doi.org/10.1145/800061.808762}
  {\path{doi:10.1145/800061.808762}}.

\bibitem{smol2}
R.~Smolensky.
\newblock Algebraic methods in the theory of lower bounds for boolean circuit
  complexity.
\newblock In {\em Proceedings of the Nineteenth Annual ACM Symposium on Theory
  of Computing}, STOC '87, pages 77--82, New York, NY, USA, 1987. Association
  for Computing Machinery.
\newblock \href {https://doi.org/10.1145/28395.28404}
  {\path{doi:10.1145/28395.28404}}.

\bibitem{smol}
R.~Smolensky.
\newblock On representations by low-degree polynomials.
\newblock In {\em Proceedings of 1993 IEEE 34th Annual Foundations of Computer
  Science}, pages 130--138, 1993.
\newblock \href {https://doi.org/10.1109/SFCS.1993.366874}
  {\path{doi:10.1109/SFCS.1993.366874}}.

\bibitem{solomonoff:prelim:1960}
Ray~J Solomonoff.
\newblock A preliminary report on a general theory of inductive inference, Feb
  1960.
\newblock URL: \url{https://raysolomonoff.com/publications/z138.pdf}.

\bibitem{sol1}
Ray~J Solomonoff.
\newblock {A formal theory of inductive inference. Part I}.
\newblock {\em Information and control}, 7(1):1--22, 1964.
\newblock \href {https://doi.org/10.1016/S0019-9958(64)90223-2}
  {\path{doi:10.1016/S0019-9958(64)90223-2}}.

\bibitem{sol2}
Ray~J Solomonoff.
\newblock {A formal theory of inductive inference. Part II}.
\newblock {\em Information and control}, 7(2):224--254, 1964.
\newblock \href {https://doi.org/10.1016/S0019-9958(64)90131-7}
  {\path{doi:10.1016/S0019-9958(64)90131-7}}.

\bibitem{stillwell}
John Stillwell.
\newblock {\em Elements of number theory}.
\newblock Undergraduate Texts in Mathematics. Springer-Verlag, New York, 2003.
\newblock \href {https://doi.org/10.1007/978-0-387-21735-2}
  {\path{doi:10.1007/978-0-387-21735-2}}.

\bibitem{sud}
I.~H. Sudborough.
\newblock On the tape complexity of deterministic context-free languages.
\newblock {\em J. Assoc. Comput. Mach.}, 25(3):405--414, 1978.
\newblock \href {https://doi.org/10.1145/322077.322083}
  {\path{doi:10.1145/322077.322083}}.

\bibitem{venk}
H.~Venkateswaran.
\newblock Properties that characterize {LOGCFL}.
\newblock {\em J. Comput. System Sci.}, 43(2):380--404, 1991.
\newblock \href {https://doi.org/10.1016/0022-0000(91)90020-6}
  {\path{doi:10.1016/0022-0000(91)90020-6}}.

\bibitem{wegener}
I.~Wegener.
\newblock {\em {The Complexity of Boolean Functions}}.
\newblock Wiley Teubner on Applicable Theory in Computer Science. Wiley, 1987.

\end{thebibliography}

\end{document}